\documentclass[11pt,a4paper]{article}
\usepackage[T1]{fontenc}
\usepackage[english]{babel}

\usepackage{ulem}
\usepackage{amsmath, nccmath}
\usepackage{amsfonts}
\usepackage{amsthm}
\usepackage{amssymb}
\usepackage{bbm}

\usepackage[shortlabels]{enumitem}
\usepackage{tikz}
\usepackage{tikz-cd}
\usepackage[colorlinks,linkcolor=blue]{hyperref}
\usepackage{graphicx}

\usepackage{float}
\usepackage{geometry}
\usepackage{subfigure}
\usepackage{multicol}
\usepackage{mathrsfs}
\usepackage{xcolor}
\usepackage{tcolorbox}
\usepackage{framed}
\usepackage{mathtools}
\usepackage{extarrows}

\setlist{noitemsep}
\allowdisplaybreaks

\tcbuselibrary{breakable}
\newcommand{\me}{\textrm{e}}
\newcommand{\ddp}{\displaystyle}
\newcommand{\mm}[1]{\mathrm{#1}}
\newcommand{\mb}[1]{\mathbb{#1}}

\newcommand{\dd}{\mathrm{d}}

\newcommand{\ol}[1]{\overline{#1}}


\newtcolorbox{defbox}{colback=black!5!white, colframe=black!75!white, boxrule=0mm, leftrule=1mm, sharp corners, breakable}

\theoremstyle{plain}
\theoremstyle{definition}

\newtheorem{deef}{Definition}[section]

\theoremstyle{plain}
\newtheorem{lemm}{Lemma}[section]
\newtheorem*{thrm}{Theorem}
\newtheorem{prop}[lemm]{Proposition}
\newtheorem{corr}[lemm]{Corollary}

\theoremstyle{definition}
\newtheorem{exxx}{Example}[section]
\newtheorem{def7}{Remark}[section]
\theoremstyle{remark}

\newtheoremstyle{exercise}
{}{}{}{}
{\sffamily\bfseries}
{.}
{ }{\thmname{#1}\thmnumber{\phantom{a}#2}\thmnote{\textnormal{\phantom{i}(#3)}}}

\theoremstyle{exercise}

\numberwithin{exe}{section}

\theoremstyle{definition}

\usepackage{titlesec}
\titleformat*{\subsection}{\Large\sffamily\bfseries}
\titleformat*{\section}{\LARGE\sffamily\bfseries}
\titleformat*{\subsubsection}{\large\sffamily\bfseries}


\usepackage{textcomp}

\newcommand{\fk}[1]{\mathfrak{#1}}

\geometry{left=1cm,right=1cm,top=2.5cm,bottom=2.5cm}

\newcommand{\ii}{\mathrm{i}}

\newcommand{\ank}[1]{\left\langle #1 \right\rangle}
\newcommand{\bank}[1]{\big\langle #1 \big\rangle}

\newcommand{\sank}[1]{\langle #1 \rangle}

\newcommand{\wh}[1]{\widehat{#1}}

\DeclareMathOperator{\ttr}{tr}

\DeclareMathOperator{\supp}{supp}

\DeclareMathOperator{\End}{End}

\DeclareMathOperator{\detz}{det_{\zeta}}

\DeclareMathOperator{\conf}{Cfig}
\DeclareMathOperator{\ord}{ord}
\DeclareMathOperator{\sreg}{s}

\newcommand{\nrm}[1]{\left\lVert#1\right\rVert}

\newcommand{\mss}[1]{\mathscr{#1}}

\newcommand{\lto}{\longrightarrow}

\newcommand{\one}{\mathbbm{1}}

\newcommand{\bnrm}[1]{\big\lVert#1\big\rVert}

\renewcommand{\tilde}{\widetilde}
\renewcommand{\ge}{\geqslant}
\renewcommand{\le}{\leqslant}
\newcommand{\defeq}{\overset{\mathrm{def}}{=}}
\newcommand{\heueq}{\overset{\mathrm{heu}}{=}}

\usepackage{pdfpages}
\usepackage{blkarray}

\usepackage{fancyhdr}
\pagestyle{fancy}

\fancyhf{}
\lhead{\rightmark}
\rhead{\thepage}

\numberwithin{equation}{section}

\title{Entanglement Entropy and Cauchy-Hadamard Renormalization}
\author{Benoit Estienne\footnote{Sorbonne Universit\'e, email: \href{mailto:estienne@lpthe.jussieu.fr}{estienne@lpthe.jussieu.fr}} ~and Jiasheng Lin\footnote{Sorbonne Universit\'e, email: \href{mailto:jiasheng.lin@imj-prg.fr}{jiasheng.lin@imj-prg.fr}}}
\date{}

\begin{document}

\maketitle

\begin{abstract}
This note presents a purely geometric construction of the so-called twist-field correlation functions in Conformal Field Theory (CFT), derived from \textit{conical singularities}. This approach provides a purely mathematical interpretation of the seminal results in physics by Cardy and Calabrese on the entanglement entropy of quantum systems. Specifically, we begin by defining CFT partition functions on surfaces with conical singularities, using a ``Cauchy-Hadamard renormalization'' of the Polyakov anomaly integral. Next, we demonstrate that for a branched cover $f:\Sigma_d\lto \Sigma$ with $d$ sheets, where the cover inherits the pullback of a smooth metric from the base, a specific ratio of partition functions on the cover to the base transforms under conformal changes of the base metric in the same way as a \textit{correlation function of CFT primary fields} with specific conformal weights. We also provide a discussion of the physical background and motivation for entanglement entropy, focusing on path integrals and the replica trick, which serves as an introduction to these ideas for a mathematical audience.

\end{abstract}

\tableofcontents

\section{Introduction}

An (Euclidean) Conformal Field Theory (CFT) is a theory which, in one way, emerges as the \textit{scaling limit} of a lattice statistical mechanics model with \textit{critical parameters} (at phase transition), as the lattice spacing goes to zero. Due to its conformal covariance properties, such a theory can, in two dimensions, naturally be defined on a surface with its conformal structure ($=$ complex structure for oriented surfaces). 
Broadly speaking and for the purposes of this paper,  the data of such a theory consists of assigning to each (closed) Riemann surface~$\Sigma$ with metric~$g$ a number~$\mathcal{Z}(\Sigma,g)$ (the partition function) and a collection of correlation functions defined on~$\Sigma^n_{\ne}$ (subspace of the~$n$-fold product~$\Sigma^n$ with non-coincident points), which satisfy well-defined transformation rules under conformal changes of the metric~$g$ (see definition \ref{def-cft}). In comparison to the original lattice problem, the number~$\mathcal{Z}(\Sigma,g)$ corresponds to the normalization constant for the Gibbs measure, while the functions on~$\Sigma^n_{\ne}$ describe the scaling limit of the probabilistic correlation functions of~$n$ local observables of the spin configurations under the Gibbs measure. 

It has long been recognized in the physics literature that the presence of \textsf{conical singularities} in the underlying metric alters the conformal covariance properties of the partition function $\mathcal{Z}(\Sigma,g)$ in a way that mimics the behavior of correlation functions.  This paper aims to provide a rigorous framework for these observations, with a particular emphasis on partition functions on branched coverings, motivated by considerations of entanglement entropy (which we introduce in Section \ref{sec-ent-ent-main}). We remark that a CFT in the sense of the previous paragraph makes sense, \textit{a priori}, only for smooth metrics and smooth conformal changes. The first task of this paper thus lies in defining in a simple and natural manner the number~$\mathcal{Z}(\Sigma,\tilde{g})$ when the metric~$\tilde{g}$ admits finitely many isolated \textsf{conical singularities}. Then, applying this definition to metrics coming from pull-backs of ramified holomorphic maps, we observe that certain ratios of these numbers behave exactly like what is expected of correlation functions on the target surface where the variables are the critical values. More precisely, the main result is the following.

\begin{thrm}
  [proposition \ref{prop-main-app-ram-cov}] \footnote{See also ``Riemann surface terminologies'' on the next page. Moreover, we will assume that all Riemann surfaces in this paper are connected, unless otherwise specified.} Let~$\Sigma_d$,~$\Sigma$ be closed Riemann surfaces, with a smooth conformal metric~$g$ on~$\Sigma$, and~$f:\Sigma_d\lto \Sigma$ a ramified $d$-sheeted holomorphic map, whose critical values are~$w_1$, \dots,~$w_p$. Consider a conformal field theory with central charge~$c$ whose partition function is denoted~$\mathcal{Z}$. Pick~$h\in C^{\infty}(\Sigma)$ on~$\Sigma$. Then under the definition \ref{def-renom-part-func} for $\mathcal{Z}(\Sigma_d, f^* e^{2h} g)$ and $\mathcal{Z}(\Sigma_d, f^* g)$ we have
	  \begin{equation}
	    \frac{\mathcal{Z}(\Sigma_d, f^* e^{2h} g)}{\mathcal{Z}(\Sigma, e^{2h} g)^d} = e^{- \sum_j  h(w_j)\Delta_{j}} \frac{\mathcal{Z}(\Sigma_d, f^* g)}{\mathcal{Z}(\Sigma, g)^d}, 
	    \label{eqn-intro-conf-weights}
	  \end{equation}
	  where
	  \begin{equation}
	    \Delta_{j} \defeq \frac{c}{12}  \sum_{z \in f^{-1}(w_j)}\Big(\ord_f(z) - \frac{1}{\ord_f(z)} \Big)
	    \label{}
	  \end{equation}
	  are the \textsf{conformal weights} or \textsf{scaling dimensions}. Here $\ord_f(z)$ denotes the \textsf{order} or \textsf{multiplicity} of $f$ at $z\in \Sigma_d$.
\end{thrm}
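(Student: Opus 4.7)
The plan is to reduce the claimed identity to an equality between (Cauchy--Hadamard-regularized) Liouville actions, and then to identify the correction $-\sum_j h(w_j)\Delta_j$ as the finite part collected from the conical singularities of $f^*g$ located at $f^{-1}(\{w_1,\ldots,w_p\})$.

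First I would invoke the Polyakov anomaly on $(\Sigma,g)$, which in our normalization reads $\log\mathcal{Z}(\Sigma, e^{2h}g) - \log\mathcal{Z}(\Sigma,g) = c\cdot S_L[h;g]$ for a Liouville action $S_L$ involving $|\nabla h|_g^2$ and $K_g h$, and, by Definition \ref{def-renom-part-func}, its Cauchy--Hadamard extension on $\Sigma_d$ with the singular pullback metric:
\[
\log\mathcal{Z}(\Sigma_d, f^*e^{2h}g) - \log\mathcal{Z}(\Sigma_d, f^*g) = c\cdot S_L^{CH}[h\circ f;\, f^*g].
\]
The theorem thus reduces to the identity $S_L^{CH}[h\circ f; f^*g] - d\cdot S_L[h;g] = -\tfrac{1}{c}\sum_j h(w_j)\Delta_j$. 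The key observation is that on the complement of the ramification locus $R = f^{-1}(\{w_1,\dots,w_p\})$, the map $f$ is a local holomorphic isometry $(\Sigma_d, f^*g)\to(\Sigma,g)$ and a $d$-sheeted cover; hence the Liouville integrand pulls back to its target counterpart, and the integral over $\Sigma_d$ gains a factor of $d$. Cutting out small disks $D_\epsilon = \bigsqcup_j D(w_j;\epsilon)$ around the critical values, one therefore has $\int_{\Sigma_d\setminus f^{-1}(D_\epsilon)}\cdots = d\int_{\Sigma\setminus D_\epsilon}\cdots$, and the $\Sigma$-side integral converges to $S_L[h;g]$ as $\epsilon\to 0$ since its integrand is smooth.

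All that remains is the $f^{-1}(D_\epsilon)$-contribution on $\Sigma_d$, regularized by the Cauchy--Hadamard prescription. Locally around each preimage $z_k\in f^{-1}(w_j)$ of multiplicity $n_k = \ord_f(z_k)$, choose coordinates so that $f(z) = w_j + z^{n_k}$; writing $g = e^{2\sigma}|dw|^2$, one finds $f^*g = n_k^2 |z|^{2(n_k-1)} e^{2\sigma\circ f}|dz|^2$, so $f^*g$ has a conical singularity of total angle $2\pi n_k$ at $z_k$. The computation boils down to isolating the finite part at $z_k$ of the Liouville action for the conformal factor $(n_k-1)\log|z|+\sigma\circ f$ paired with $h\circ f$, and showing it equals $-\frac{1}{12}(n_k-1/n_k)h(w_j)$ per preimage. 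Summing over all preimages and critical values yields the claimed $-\tfrac{1}{c}\sum_j h(w_j)\Delta_j$.

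The main obstacle is this last local computation: verifying that the Cauchy--Hadamard regularization produces the precise coefficient $\frac{c}{12}(n_k-1/n_k)$, and that the $\epsilon$-dependent divergences generated by the $(n_k-1)\log|z|$ term cancel exactly in the combination $S_L^{CH} - d\cdot S_L$. I would expect a Gauss--Bonnet type identity on the punctured disk to supply the conical defect $1-1/n_k$ (and, separately, the $n_k$ factor should arise from the normalization of the $|z|^{2(n_k-1)}$ density as one integrates it against $h\circ f$), which, paired with the smooth factor $h(w_j)$, reproduces the classical twist-field scaling dimension familiar in the physics literature.
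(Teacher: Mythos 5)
Your global skeleton is the same as the paper's: subtract anomalies, use that $f$ is an isometric $d$-fold covering of $(\Sigma,g)$ away from the ramification locus to produce the factor $d$ (this is exactly the paper's identity $\mathcal{R}A_{\Sigma_d}(f^*\me^{2h}g,f^*g)=d\cdot A_{\Sigma}(\me^{2h}g,g)$), and localize the remainder at the cone points. But the step you defer as ``the main obstacle'' is the entire mathematical content of the theorem (it is proposition \ref{prop-main-conical-scaling}), and your first displayed identity already hides the difficulty. By definition \ref{def-renom-part-func} the left-hand side there equals $c\,[\mathcal{R}A_{\Sigma_d}(f^*\me^{2h}g,g_0)-\mathcal{R}A_{\Sigma_d}(f^*g,g_0)]$ for a smooth reference $g_0$ on $\Sigma_d$; this is \emph{not} equal to the (absolutely convergent, unrenormalized) Liouville integral of $h\circ f$ against $f^*g$ --- the discrepancy between the two is precisely $-\frac{1}{12}\sum_j\frac{\gamma_j(\gamma_j+2)}{\gamma_j+1}h(z_j)$, i.e.\ the conformal weights you are trying to produce. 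So depending on what $S_L^{CH}[h\circ f;f^*g]$ denotes, that equation is either false or a relabelling of the statement to be proved.

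The concrete gap is the local coefficient. A Gauss--Bonnet/curvature-delta computation at a cone point of exponent $\gamma=n-1$ (the term $-\Delta_g\sigma\ni 2\pi\gamma\,\delta_{z}$ in $K_{\tilde g}\dd V_{\tilde g}$, lemma \ref{lemm-app-poincare-lel}) accounts only for the piece $\gamma=n-1$ of the weight, not for $n-\frac{1}{n}=(n-1)+(1-\frac{1}{n})$; your attribution of $1-\frac{1}{n}$ to a Gauss--Bonnet defect is not right, since the angle excess of the cone is $2\pi(n-1)$. The remaining piece $1-\frac{1}{n}=\frac{\gamma}{\gamma+1}$ is scheme-dependent: it arises because the Cauchy--Hadamard prescription of definition \ref{def-main-renorm-anomaly} excises disks of radius $\varepsilon$ measured \emph{in the singular metric itself}, so that the cutoff disks for $f^*\me^{2h}g$ and for $f^*g$ have $g$-radii differing asymptotically by the factor $\me^{h(z)/n}$ (lemma \ref{lemm-scale-dist-sing}, corollary \ref{cor-ratio-radius}); feeding this into the log-divergent Dirichlet energy $|\nabla_g\sigma|_g^2\sim\gamma^2/r^2$ over the resulting thin annulus produces the term $\frac{2\pi\gamma^2}{\gamma+1}h(z)$ (lemma \ref{lemm-quadratic-sing}, corollary \ref{cor-quad-blow-tilde}, remark \ref{rem-extra-quad-term}), which combines with the boundary terms of the Green--Stokes integration by parts (lemmas \ref{lemm-basic-renorm-log} and \ref{lemm-linear-sing-green-sto}) to give $\frac{\gamma(\gamma+2)}{\gamma+1}$. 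Your plan of removing a single family of base-metric disks $f^{-1}(D(w_j;\epsilon))$ for both numerator and denominator would miss this annulus contribution and land on the ``naive'' weight $\frac{c}{6}(n-1)$ instead of $\frac{c}{12}(n-\frac{1}{n})$. To close the argument you must carry out the boundary-term asymptotics with the cutoffs of the actual definition (and invoke lemma \ref{lemm-consist} to pass between reference metrics), which is what the paper does via lemma \ref{lemm-comp-renorm-anom} and proposition \ref{prop-main-conical-scaling}.
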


The relation between CFT and conical singularities has been explored in various physical contexts. Early investigations, particularly within string theory, focused on the construction of orbifold CFTs \cite{Dixon}. In this framework, Knizhnik \cite{Knizhnik} examined the behavior of CFTs defined on flat branched coverings of $\mathbb{CP}^1$. More generally, the emergence of universal logarithmic divergences in the free energy due to conical singularities was recognized by Cardy and Peschel \cite{Cardy_Peschel}. Motivated by both black hole entropy and quantum information theory, entanglement entropy in CFTs was explored in seminal works \cite{Holzhey} and \cite{Cardy_Calabrese}, where it was shown that the universal contributions from conical singularities play a crucial role in understanding the scaling behavior of entanglement entropy in one-dimensional quantum critical systems. In this context, the Rényi entropy was identified as the free energy on $d$-sheeted branched covering of certain flat Riemann surface (typically a cylinder or a torus). Further studies, such as in \cite{Wiegmann}, extended the analysis of conical singularities to hyperbolic surfaces, with applications in the context of the quantum Hall effect \cite{KMMW}. 

Conical singularities is also a relatively well-studied subject mathematically. They are one of the simplest types of singularities that can appear on a Riemann surface. The main motivations include spectral geometry (hearing the shape of a drum) \cite{Cheeger} and the Berger-Nirenberg problem of finding metrics with prescribed curvature \cite{HT, Troy}. Motivated by its higher dimensional analogue in complex geometry called K\"ahler-Einstein edge metrics, people have also studied the Ricci flow on surfaces with such singularities \cite{MRS}. Particularly relevant to the present work are recent investigations on the~$\zeta$-determinant of the Laplacian under these conical metrics and the Polyakov formulas \cite{AKR, Kalvin}, on which we shall make a more detailed comment in remarks \ref{rem-rel-kal} and \ref{rem-rel-akr}. We would also like to point out the interesting recent work \cite{JV} on Coulomb gas and the Grunsky operator where conical singularities (more precisely, ``corners'' on the boundary) also play a role. For more literature from these various perspectives we refer to the introductions of \cite{AKR, Kalvin}, section 1.2 of \cite{JV}, and section 2.E of \cite{MRS}. Finally, one precise relation between conical singularities and the probabilistically constructed Liouville CFT has been established in \cite{HRV}.

One important feature of the present work lies in its \textit{simplicity} and \textit{naturalness}. Essentially, the method involves only a close look at the geometry near the cone points and integration by parts (Green-Stokes formula). Consequently, the geometric meaning of each term that shows up in the result is transparent (see also remark \ref{rem-extra-quad-term}). Moreover, we only need rather weak regularity to be imposed on the regular metric potential at the cone points compared to other related works in the literature (e.g.\ to \cite{Kalvin} definition 2.1), and this is basically the assumption adopted in \cite{Troy} (see also remark \ref{rem-reg-met-pot-cone}). 

\paragraph{Organization.} In section \ref{sec-def} we define the three main objects dealt with by this paper: the CFT correlation functions, conical singularities, and the renormalized Polyakov anomaly; these are accompanied by a few pivotal lemmas. Section \ref{sec-ent-ent-main} explains the physical motivation which leads us to (\ref{eqn-intro-conf-weights}). In these sections (and hence the whole paper) no prior knowledge of QFT, CFT or statistical mechanics is assumed, as this paper serves also to introduce these ideas to the mathematics community. Only some notions of quantum mechanics are required to make sense of entanglement entropy. Then in section \ref{sec-geo-lemm}, we collect a few known facts about conical metrics (subsection \ref{sec-conic-remark}), prove a crucial scaling lemma (subsection \ref{sec-geo-scaling}), and compute the asymptotics of several logarithmically divergent integrals using integration by parts (subsection \ref{sec-int-by-part}). They are important as pointed out by remarks \ref{rem-well-def-renorm-anom}, \ref{rem-const-def-part-func}, and also in obtaining the final result. In section \ref{sec-renorm-tech} we tie up some loose ends around the definitions of the renormalized anomaly and partition function for conical metrics. In subsection \ref{sec-main-proof} we prove the main result and apply it to entanglement entropies, and in \ref{sec-literature} we comment on the two closely related work \cite{AKR, Kalvin}. Finally in appendix \ref{sec-app-poin-lelong} we recall a few things around the so-called ``Poincar\'e-Lelong lemma'' concerning the Laplacian of the log of the distance function on a Riemannian surface.

\paragraph{Future work.} In the present version of the work we have focused on the simplest case dealing only with partition functions to illustrate our (already simple) methods. It is clear that the same arguments could apply with minor modifications to obtain analogous results for Segal's amplitudes on surfaces with boundary (see \cite{Gaw} section 2.6) when the metric is ``flat-at-the-boundary'', and for the case of general boundaries with possible presence of ``corners'' (i.e.\ ``polygons'') by including the boundary term in the anomaly, as considered for example by \cite{AKR}. These considerations may be included in a future update of the present manuscript. However, another main aim of that work will be to present an equivalent rigorous construction of the quantities~$\mathcal{Z}(\Sigma_d,f^*g)/\mathcal{Z}(\Sigma,g)^d$ in (\ref{eqn-intro-conf-weights})  for the GFF by constructing the so-called ``twist fields'' in the physics literature. This is related to a singular version of the ``twisted Laplacians'' used e.g.\ in Phillips and Sarnak \cite{PSa}. Lastly, less apparent but interesting future investigations include obtaining the precise relation between this work and \cite{Kalvin} in light of Segal's gluing axioms, and exploring further connections with the Quantum Hall Effect and Coulomb gas in a rigorous manner.

\paragraph{Riemann surface terminology and asymptotic notations.} Let~$f:\Sigma'\lto \Sigma$ be a holomorphic map of closed (connected\footnote{We will assume that all Riemann surfaces in this paper are connected, unless otherwise specified.}) Riemann surfaces. If at~$z\in \Sigma'$ we have~$\dd f|_z=0$ we say that~$z$ is a \textsf{critical point} of~$f$ and~$w=f(z)$ is a \textsf{critical value} of~$f$. 
If in local holomorphic charts around~$z$ and~$w$ respectively~$f$ looks like a~$k$-th power, then we say~$k$ is the \textsf{order} of~$f$ at~$z\in \Sigma'$ (sometimes referred to as the \textsf{local degree} or \textsf{multiplicity}),  denoted~$k=:\ord_f(z)$. 
We say that~$f$ is a \textsf{ramified} (or \textsf{branched}) holomorphic map when we want to emphasize that it is not a strict covering map.
The number~$\sum_{z\in f^{-1}(w)}\ord_f(z)=:d$, which is same for every~$w\in \Sigma$, is called the \textsf{number of sheets} of~$f$ and we say~$f$ is~\textsf{$d$-sheeted}. Finally, the notation $u\asymp v$ under some limit process means $u/v\to C>0$ in that limit, and we write $u\sim v$ for the case $C=1$.

\paragraph{Acknowledgement.} J.L.\ thanks first of all his thesis advisor Nguyen-Viet Dang for raising attention to the cut-off method for treating the anomaly discussed in \cite{EKZ}, stressing its simplicity and flexibility, which initiated the present work, and various other discussions. He thanks Tat-Dat T\^o for pointing out the reference \cite{MRS} and some related considerations in the geometry community. He thanks Phan Th\`anh Nam for pointing out some information about the partial trace. 
He also thanks Yuxin Ge, Colin Guillarmou, Semyon Klevtsov and Eveliina Peltola for helpful discussions. Finally he thanks the doctoral school ED386 of Sorbonne Universit\'e and his laboratory IMJ-PRG for supporting his research. B.E. thanks Laurent Charles, Nguyen-Viet Dang, Colin Guillarmou and Tat-Dat T\^o for helpful discussions.

  \section{Definition of Main Objects}\label{sec-def}

 \subsection{Conformal Field Theory and Correlation Functions}

\begin{deef}\label{def-cft}
  In this paper, by a \textsf{2d Conformal Field Theory} we mean a rule that associates to each compact Riemannian surface~$\Sigma$ with metric~$g$ (a priori smooth) a complex number~$\mathcal{Z}(\Sigma,g)$ called the \textsf{partition function}, and a family~$\{\ank{\phi_{\alpha_1}(\cdot)\cdots}_{\Sigma,g}\}$ of functions of finite tuples of non-coincident points on~$\Sigma$, called \textsf{correlation functions of primary fields} (labelled by the~$\alpha$'s), such that the following two conditions hold:
  \begin{enumerate}[(i)]
    \item \textsf{diffeomorphism invariance:} if~$\Psi:\Sigma'\lto \Sigma$ is a diffeomorphism of smooth surfaces, then
      \begin{align}
	\mathcal{Z}(\Sigma',\Psi^*g)&=\mathcal{Z}(\Sigma,g), \label{eqn-part-func-diffeo-inv}\\
	\ank{\phi_{\alpha_1}(\Psi(x_1))\cdots \phi_{\alpha_n}(\Psi(x_n))}_{\Sigma,g}&=\ank{\phi_{\alpha_1}(x_1)\cdots \phi_{\alpha_n}(x_n)}_{\Sigma',\Psi^*g},
	\label{eqn-main-diffeo-inv}
      \end{align}
      for any~$x_1$, \dots,~$x_n\in \Sigma$ non-coincident, and as well
    \item \textsf{local scale (conformal) covariance:} if~$\sigma\in C^{\infty}(\Sigma)$ then
      \begin{align}
	\mathcal{Z}(\Sigma,\me^{2\sigma}g)&=\exp\Big( \frac{c}{24\pi}\int_{\Sigma}^{}(|\nabla_g \sigma|_g^2+2K_g\cdot\sigma)\dd V_g \Big)\cdot \mathcal{Z}(\Sigma,g),\label{eqn-ord-polya-anom}\\
	\ank{\phi_{\alpha_1}(x_1)\cdots \phi_{\alpha_n}(x_n)}_{\Sigma,\me^{2\sigma}g}&=\prod_{j=1}^n \me^{-\sigma(x_j)\Delta_{\alpha_j}}\ank{\phi_{\alpha_1}(x_1)\cdots \phi_{\alpha_n}(x_n)}_{\Sigma,g},
	\label{eqn-main-conformal-cov}
      \end{align}
      where~$K_g$ denotes the Gauss curvature of~$g$ (half the scalar curvature), the constant~$c\in \mb{R}$ is called the \textsf{central charge}, charateristic of the specific theory at hand, and constants~$\Delta_{\alpha}\in\mb{R}$ the \textsf{conformal weights}, charateristic of the theory as well as the fields~$\phi_{\alpha}$. 
      \end{enumerate}
\end{deef}

\begin{deef}
  The quantity
  \begin{equation}
    A_{\Sigma}(\me^{2\sigma}g,g) \defeq  \frac{1}{24\pi}\int_{\Sigma}^{}(|\nabla_g \sigma|_g^2+2K_g\cdot\sigma)\,\dd V_g 
    \label{eqn-def-ord-poly-anomaly}
  \end{equation}
  for ~$\sigma\in C^{\infty}(\Sigma)$, that appears in the exponential in (\ref{eqn-ord-polya-anom}) is usually called the \textsf{Weyl} or \textsf{Polyakov Anomaly} of $\me^{2\sigma}g$ against $g$. 
\end{deef}

\begin{def7}\label{rem-ord-cocycle}
  The quantities~$A_{\Sigma}$ has the so-called \textsf{cocycle property}, that is, if~$g_3=\me^{2h}g_2$ and~$g_2=\me^{2\sigma}g_1$ where~$h$,~$\sigma\in C^{\infty}(\Sigma)$, then
  \begin{equation}
    A_{\Sigma}(g_3,g_2)+A_{\Sigma}(g_2,g_1)=A_{\Sigma}(g_3,g_1),
    \label{}
  \end{equation}
  as one can check using the relations (\ref{eqn-scale-vol}) --- (\ref{eqn-scale-lap}). One essential ingredient of this work is to define a ``renormalized'' version of the anomaly of a conically singular metric against a smooth one (definition \ref{def-renorm-polya-conic}). With these definitions the cocycle property will be modified accordingly when involving the conical metrics, see proposition \ref{prop-main-conical-scaling}. 
\end{def7}

\begin{exxx}[Gaussian Free Field] \label{exp-gff}
  Let~$(\Sigma,g)$ be a Riemannian surface with smooth metric~$g$ whose Laplacian is denoted~$\Delta_g$ (negative). Consider
  \begin{equation}
    \mathcal{Z}(\Sigma,g)\defeq \detz'(-\Delta_g)^{-\frac{1}{2}},
    \label{}
  \end{equation}
  where
  \begin{equation}
    \detz'(-\Delta_g)\defeq \exp\Big( -\frac{\dd}{\dd s}\Big|_{s=0}\sum_{j=1}^{\infty}\lambda_j^{-s} \Big),
    \label{}
  \end{equation}
  with~$0=\lambda_0<\lambda_1\le \lambda_2\le\cdots$ being the eigenvalues of~$-\Delta_g$ counted with multiplicity, called the~\textsf{$\zeta$-regularized determinant} \cite{RS}. Then~$\mathcal{Z}$ satisfies (\ref{eqn-ord-polya-anom}) (and (\ref{eqn-part-func-diffeo-inv}) trivially) with~$c=1$, by the \textsf{Polyakov formula} (see \cite{PW} appendix B). This~$\mathcal{Z}$ corresponds to the ``total mass'' of the formal measure~$\exp(-\frac{1}{2}\int_{\Sigma}^{}|\nabla_g \phi|^2 \dd V_g)\dd \mathcal{L}(\phi)$ on the space of zero-average distributions~$\mathcal{D}'_0(\Sigma)$ on~$\Sigma$, with~$\mathcal{L}$ being the non-existent Lebesgue measure there. In this case, an actual Gaussian probability measure~$\mu_{\mm{GFF}}^{\Sigma}=:\mu$ can indeed be defined on~$\mathcal{D}'_0(\Sigma)$ that corrresponds to~$\mathcal{Z}(\Sigma,g)^{-1}\exp(-\frac{1}{2}\int_{\Sigma}^{}|\nabla_g \phi|^2 \dd V_g)\dd \mathcal{L}(\phi)$, called the (massless) \textsf{Gaussian Free Field} \cite{PWer}. This is characterized by the formal covariance property (pretending that the point values $\phi(x)$, $x\in\Sigma$, are legitimate real random variables)
  \begin{equation}
    \mb{E}_{\mu}\big[\phi(x)\phi(y)\big]=G_{\Sigma}(x,y),\quad\textrm{and}\quad\mb{E}_{\mu}\big[\phi(x)\big]\equiv0,\quad\quad x,y\in \Sigma,
    \label{}
  \end{equation}
  with~$G_{\Sigma}$ being the Green function which is the integral kernel of~$(-\Delta_g)^{-1}P_{\ker \Delta_g}^{\perp}$. Now, after an appropriate renormalization process which we do not detail here (related to obtaining a finite value for~$G_{\Sigma}(x,x)$), denoted~``$\mathcal{R}$'', the quantities
  \begin{equation}
    \bank{\phi_{\alpha_1}(x_1)\cdots \phi_{\alpha_n}(x_n)}_{\Sigma,g}\defeq \textrm{``}\mathcal{R}\textrm{''}\mb{E}_{\mu}\big[\me^{\ii \alpha_1\phi(x_1) }\cdots \me^{\ii \alpha_n\phi(x_n) }\big],
    \label{}
  \end{equation}
  where now~$\alpha_j\in\mb{R}$, transform according to (\ref{eqn-main-conformal-cov}) with~$\Delta_{\alpha_j}:=\alpha_j^2/4\pi$.
\end{exxx}

The rules (i), (ii) actually determine, say, the two-point function up to a constant for some simple geometries.

\begin{lemm}\label{lemm-two-pt-func}
  Consider a CFT defined on the Riemann sphere~$\mb{S}^2=\mb{C}\cup\{\infty\}$, equipped with the Fubini-Study metric~$g_{\mm{FS}}(z):=4(1+|z|^2)^{-2}|\dd z|^2$. Take two primary fields~$\phi_1$,~$\phi_2$ with conformal weights~$\Delta_1$,~$\Delta_2$. Then
  \begin{equation}
  \bank{\phi_1(u)\phi_2(v)}_{\mb{S}^2,g_{\mm{FS}}}=
  \left\{
  \begin{array}{ll}
     C\sin\big( \frac{1}{2}d_{\mm{FS}}(u,v) \big)^{-2\Delta},&\textrm{when }\Delta_1=\Delta_2=\Delta,\\
    0 &\textrm{otherwise,}
  \end{array}
  \right.
  \label{}
\end{equation}
where $u\ne v$ and~$C$ is a non-zero constant that cannot be determined from the rules (i) and (ii) alone.
\end{lemm}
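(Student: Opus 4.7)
The plan is to exploit the large symmetry group of $(\mb{S}^2, g_{\mm{FS}})$ to pin down the two-point function. First I would invoke diffeomorphism invariance (\ref{eqn-main-diffeo-inv}) for the rotation group $\so(3) \cong \mm{PSU}(2)$, whose elements are isometries of $g_{\mm{FS}}$ (so $R^*g_{\mm{FS}} = g_{\mm{FS}}$). Since $\so(3)$ acts transitively on ordered pairs of distinct points at any fixed spherical distance, the two-point function reduces to
\[
\bank{\phi_1(u)\phi_2(v)}_{\mb{S}^2,g_{\mm{FS}}} = F(d_{\mm{FS}}(u,v))
\]
for a single function $F$; moreover, using a rotation that swaps the two points, the same $F$ governs the correlator with the roles of $\phi_1$ and $\phi_2$ interchanged.

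To fix the $d$-dependence, I would next apply the conformal covariance rule (\ref{eqn-main-conformal-cov}) to the holomorphic dilation $\Psi_\lambda: z \mapsto \lambda z$, $\lambda>0$, which extends to a biholomorphism of $\mb{S}^2$ and satisfies $\Psi_\lambda^* g_{\mm{FS}} = e^{2\sigma_\lambda} g_{\mm{FS}}$ with $e^{\sigma_\lambda(z)} = \lambda (1+|z|^2)/(1+\lambda^2|z|^2)$. Taking $(u,v)=(0,r)$ and writing $d = d_{\mm{FS}}(0,r) = 2\arctan r$, $d' = d_{\mm{FS}}(0,\lambda r) = 2\arctan(\lambda r)$, the identities $\sin(d/2) = r/\sqrt{1+r^2}$ etc.\ yield $e^{\sigma_\lambda(0)} = \lambda$ and $e^{\sigma_\lambda(r)} = \sin(d')/\sin(d)$. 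Combining (i) and (ii) then gives the functional equation
\[
F(d') = \lambda^{-\Delta_1}\bigl[\sin(d')/\sin(d)\bigr]^{-\Delta_2} F(d),
\]
whose solution (obtained by recognizing that $F(d) \sin(d/2)^{\Delta_1+\Delta_2}\cos(d/2)^{\Delta_2-\Delta_1}$ must be $d$-independent) is $F(d) = C \cos(d/2)^{\Delta_1-\Delta_2}/\sin(d/2)^{\Delta_1+\Delta_2}$.

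The final step is to repeat this dilation argument with the roles of the two points exchanged (placing $\phi_1$ at $r$ and $\phi_2$ at $0$). By the first step the correlator still equals $F(d)$, but the derivation now produces the same functional equation with $\Delta_1 \leftrightarrow \Delta_2$, hence $F(d) = C' \cos(d/2)^{\Delta_2-\Delta_1}/\sin(d/2)^{\Delta_1+\Delta_2}$. Requiring both expressions to hold for all $d \in (0,\pi)$ forces either $\Delta_1 = \Delta_2 =: \Delta$, in which case $F(d) = C/\sin(d/2)^{2\Delta}$ as claimed, or else $C = 0$ and $F \equiv 0$. The undetermined constant $C$ reflects the rescaling freedom $\phi_\alpha \mapsto c_\alpha \phi_\alpha$ preserving (i) and (ii). I expect the only nontrivial step to be the book-keeping around the conformal factor $\sigma_\lambda$ and the trigonometric identification $e^{\sigma_\lambda(r)} = \sin(d')/\sin(d)$; once that is in hand, everything else is algebra.
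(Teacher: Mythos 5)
Your proof is correct, and it is organized differently from the paper's. The paper works with the full family of M\"obius maps $\psi$ sending $0\mapsto u$, $\infty\mapsto v$, computes $\psi^*g_{\mm{FS}}$ explicitly, and extracts the constraint $\Delta_1=\Delta_2$ (or vanishing) from the requirement that the answer be independent of the residual parameter $z_0=\psi^{-1}(\infty)$; the chordal distance is then recognized at the very end. You instead decompose the symmetry group: first the isometric $\mm{PSU}(2)$-action via (\ref{eqn-main-diffeo-inv}) reduces the correlator to a function $F$ of the geodesic distance (and gives the exchange symmetry $\bank{\phi_1(u)\phi_2(v)}=\bank{\phi_1(v)\phi_2(u)}$), and then the dilation subgroup fixing $\{0,\infty\}$ yields a one-parameter functional equation for $F$. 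I checked the key computations: $e^{\sigma_\lambda(z)}=\lambda(1+|z|^2)/(1+\lambda^2|z|^2)$, hence $e^{\sigma_\lambda(0)}=\lambda$ and $e^{\sigma_\lambda(r)}=\sin(d')/\sin(d)$, and the combination $F(d)\sin(d/2)^{\Delta_1+\Delta_2}\cos(d/2)^{\Delta_2-\Delta_1}$ is indeed invariant under the functional equation, so $F$ has the stated form; the swapped configuration then forces $\Delta_1=\Delta_2$ or $F\equiv0$ exactly as you say. The two mechanisms for the selection rule are really the same symmetry seen twice --- your dilations are precisely the variation of $|z_0|$ in the paper, i.e.\ the nontrivial stabilizer of the pair of insertion points --- but your version makes the distance-dependence manifest from the outset and avoids the bookkeeping with $1+|u|^2$, $1+|v|^2$, at the modest cost of having to solve a functional equation rather than read off the answer. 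Both are complete; the only thing you leave implicit is the (true) transitivity of $\so(3)$ on ordered pairs at fixed distance and the verification that the stated combination is the general solution of the functional equation, which is the routine algebra you flag.
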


\begin{proof}
Without loss of generality we suppose $u$, $v\ne \infty$. Given a M\"obius transformation~$\psi\in \mm{PSL}(2,\mb{C})$ that sends~$0\mapsto u$,~$\infty\mapsto v$, we have by (\ref{eqn-main-diffeo-inv}) and (\ref{eqn-main-conformal-cov}),
	\begin{align}
	  \bank{\phi_1(u)\phi_2(v)}_{\mb{S}^2,g_{\mm{FS}}}
	  &=\bank{\phi_1(0)\phi_2(\infty)}_{\mb{S}^2,\psi^* g_{\mm{FS}}}\nonumber\\
	  &=\me^{-\Delta_1 \sigma(0)-\Delta_2\sigma(\infty)}\bank{\phi_1(0)\phi_2(\infty)}_{\mb{S}^2,g_{\mm{FS}}},\label{eqn-two-point-examp-calc}
	\end{align}
	where~$\psi^*g_{\mm{FS}}=\me^{2\sigma}g_{\mm{FS}}$. The above must hold for \textit{all} such M\"obius maps. These are of the form
	\begin{equation}
	\psi(z)=\frac{z v - z_0 u}{z - z_0}.
	\label{}
      \end{equation} 
      where $z_0  = \psi^{-1}(\infty) \in \mathbb{C}\setminus \{0 \}$ parametrizes the preimage of infinity. 
      This gives
      \begin{equation}
	(\psi^* g_{\mm{FS}})(z)=\frac{4|\psi'(z)|^2}{(1+|\psi(z)|^2)^2}|\dd z|^2=\frac{(1+|z|^2)^2}{(1+|\psi(z)|^2)^2}\frac{|v-u|^2 |z_0|^2}{|z - z_0|^4} g_{\mm{FS}}(z).
	  \label{}
	\end{equation}
 Therefore, according to (\ref{eqn-two-point-examp-calc}),
\begin{equation}
  \bank{\phi_1(u)\phi_2(v)}_{\mb{S}^2,g_{\mm{FS}}}=C\Big(\frac{1+|u|^2}{|u -v|}\Big)^{\Delta_1} \Big(\frac{1+|v|^2}{|u -v|}\Big)^{\Delta_2} |z_0|^{\Delta_2- \Delta_1}
\end{equation}
where $C =: \ank{\phi_1(0)\phi_2(\infty)}_{\mb{S}^2,g_{\mm{FS}}}$.
Since the correlation function should not depend on $z_0$, one finds 
\begin{equation}
  \textrm{either } \Delta_1 = \Delta_2 \quad \textrm{or} \quad C=0.
  \label{}
\end{equation}
Summing up,
\begin{equation}
  \bank{\phi_1(u)\phi_2(v)}_{\mb{S}^2,g_{\mm{FS}}}=
  \left\{
  \begin{array}{ll}
    C \left(1+|u|^2\right)^{\Delta} \left(1+|v|^2\right)^{\Delta}  |u - v|^{-2 \Delta}
    ,&\textrm{when }\Delta_1=\Delta_2=\Delta,\\
    0 &\textrm{otherwise.}
  \end{array}
  \right.
  \label{}
\end{equation}
where the constant $C  \in \mathbb{C}$ is $C= \bank{\phi_1(0)\phi_2(\infty)}_{\mb{S}^2,g_{\mm{FS}}}$.
We note here that the quantity
	\begin{equation}
	  d_{\mm{ch}}(u,v)\defeq \frac{2|u-v|}{\sqrt{(1+|u|^2)(1+|v|^2)}}
	  \label{}
	\end{equation}
	is the so-called \textsf{spherical chordal distance} between~$u$ and~$v$ (see \cite{Ahlfors} page 20), and is related to the actual spherical distance~$d_{\mm{FS}}(u,v)$ by
	\begin{equation}
	  d_{\mm{ch}}(u,v)=2\sin\Big( \frac{d_{\mm{FS}}(u,v)}{2} \Big).
	  \label{}
	\end{equation}
 This gives us the result.
\end{proof}

      \subsection{Conical Metrics}
\label{sec-singular-conic}

In this section we describe precisely our geometric set-up by adopting some of the terminologies of Troyanov \cite{Troy}.
	\begin{deef}
	  Let~$\Sigma$ be a closed Riemann surface. A \textsf{generalized conformal metric} on~$\Sigma$ is a distributional Riemannian metric~$\tilde{g}$ on~$\Sigma$ such that for any local complex coordinate~$z_U:U\lto \mb{C}$ defined on~$U\subset \Sigma$ we have
	  \begin{equation}
	    \tilde{g}=\varrho_U(z_U)\,|\dd z_U|^2
	    \label{}
	  \end{equation}
	  for some positive measurable function~$\varrho_U$ on~$U$. We say~$\tilde{g}$ is a \textsf{smooth conformal metric} if the functions~$\rho_U$ are all smooth and $0<c_{U,z}\le \varrho_U\le C_{U,z}$, for some constants $c_{U,z}$, $C_{U,z}$ depending on $U$ and the coordinate.
	\end{deef}

 Smooth conformal metrics exist on any Riemann surface by simple constructions using partitions of unity.

	\begin{deef}
	  A \textsf{real divisor} on a closed Riemann surface~$\Sigma$ is a finite formal sum
	  \begin{equation}
	    D=\sum_{j=1}^p \gamma_j z_j,\quad\quad \gamma_j\in\mb{R},~z_j\in\Sigma,~z_i\ne z_j\textrm{ for }i\ne j.
	    \label{}
	  \end{equation}
	  The number~$|D|:=\sum_j \gamma_j$ is the \textsf{degree} of~$D$. The set $\supp D:=\{z_j~|~\gamma_j\ne 0\}$ is the \textsf{support} of $D$.
	\end{deef}

    	\begin{deef}\label{def-singularities}
	  Let~$\Sigma$ be a closed Riemann surface. We say that a generalized conformal metric~$\tilde{g}$ has an (admissible, isolated) \textsf{conical singularity} (or \textsf{cone point}) of \textsf{order}~$d>0$ or \textsf{exponent} $\gamma:=d-1>-1$ at~$z_0\in\Sigma$, if~$z_0$ has a neighborhood~$U\subset \Sigma$, such that
\begin{equation}
  \tilde{g}|_{U\setminus z_0} =d_g(\bullet,z_0)^{2\gamma}\cdot \me^{2\varphi_{U,g}}g,
  \label{eqn-conic-sing-local-form}
\end{equation}
for some smooth conformal metric~$g$ on~$\Sigma$ and some function~$\varphi_{U,g}$ on~$U$, called the \textsf{(regular) metric potential} (against $g$), which is continuous on $U$, smooth on $U\setminus z_0$, and $\Delta_g \varphi_{U,g}\in L^1(U,g)$. Moreover, we require that the Gauss curvature of~$\tilde{g}$ on~$U\setminus z_0$ is bounded.
	\end{deef}

    \begin{def7}
  If~$\gamma=0$ (resp.\ $d=1$) then the point is called \textsf{regular}. In a neighborhood~$U$ of a regular point, the regularity of the metric potential~$\varphi_{U,g}$ is closely related to that of the Gauss curvature (given by (\ref{eqn-liouville-eqn}) as an~$L^1(U,g)$ function) of~$\tilde{g}$ on~$U$. For example, if the Gauss curvature is smooth then one could deduce that~$\varphi_{U,g}$ is also smooth. For details see \cite{Troy2} proposition 1.2. Our results will be consistent with the presence of regular points upon setting the corresponding exponents to zero, as one could check. 
\end{def7}

	\begin{deef}
	  Let~$\Sigma$ be a closed Riemann surface,~$D=\sum_{j=1}^p \gamma_j z_j$ be a real divisor with $\gamma_j>-1$. We say that a generalized conformal metric~$\tilde{g}$ \textsf{represents} the divisor~$D$ if it has an isolated conical singularity of exponent $\gamma_j$ at $z_j$ respectively for each $j$, and smooth on $\Sigma\setminus \supp D$.
	\end{deef}

\begin{def7}
	  Our presentation differs from what usually happens in the literature where the background metric~$g$ is the local flat one coming from some compatible complex coordinate on the neighborhood~$U$. We are faced with the natural question of whether~$\tilde{g}$ would have the same form (\ref{eqn-conic-sing-local-form}) with the same regularity on the metric potential if we switched to another smooth conformal background metric~$g_1=\me^{2h}g$,~$h\in C^{\infty}(\Sigma)$. Indeed, in this case
	  \begin{equation}
	    d_g(\bullet,z_0)^{2\gamma}\cdot \me^{2(\varphi_{U,g}-h)}g_1=d_{g_1}(\bullet,z_0)^{2\gamma}\Big( \frac{d_{g}(\bullet,z_0)}{d_{g_1}(\bullet,z_0)} \Big)^{2\gamma}\cdot \me^{2(\varphi_{U,g}-h)}g_1,
	    \label{}
	  \end{equation}
	  and the ratio of distances~$d_{g}(\bullet,z_0)/d_{g_1}(\bullet,z_0)$ is continuous on~$U$ with a limit~$\me^{-h(z_0)}$ at~$z_0$, smooth on~$U\setminus z_0$ and~$\Delta_{g_1}\log d_g -\Delta_{g_1}\log d_{g_1}$ is in $L^1(U,g_1)$ as a distribution by lemma \ref{lemm-bound-lap-log-ratio}. In the traditional setting if both~$g$ and~$g_1$ are locally~$|\dd z|^2$ and~$|\dd w|^2$ for some complex coordinates~$z$ and~$w$, then the regular metric potentials differ by a harmonic function, as is well known.
In particular, if we do test against such a coordinate metric then our conditions on the regular metric potential satisfy what Troyanov calls that of an \textsf{admissible metric} in \cite{Troy2}. 
\end{def7}

The following lemma describes the source of the conical singularities that the main application of the main result of this article tries to target.

\begin{lemm}
     Let~$\Sigma'$,~$\Sigma$ be closed Riemann surfaces and~$f:\Sigma'\lto \Sigma$ a holomorphic map. Let~$z_0\in \Sigma'$ be a critical point of order~$k$ and~$w_0=f(z_0)\in \Sigma$ the corresponding critical value. Let~$g$ be a smooth conformal metric on~$\Sigma$. Then~$f^*g$ has a conical singularity at~$z_0\in \Sigma'$ with order~$k$.
  \end{lemm}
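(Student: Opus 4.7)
The plan is to verify the local form \eqref{eqn-conic-sing-local-form} by a direct computation in holomorphic normal coordinates around the critical point. First, I would invoke the definition of the order of a ramified holomorphic map to pick local holomorphic coordinates $z:U'\to \mathbb{C}$ around $z_0$ (with $z(z_0)=0$) and $w:V\to \mathbb{C}$ around $w_0$ (with $w(w_0)=0$) in which $f$ is expressed simply as $w=z^k$. Using a partition of unity, I would then construct a smooth conformal metric $g_0$ on $\Sigma'$ whose restriction to a small neighborhood $U\subset U'$ of $z_0$ coincides with the flat model $|\dd z|^2$, so that $d_{g_0}(\bullet,z_0)=|z|$ on $U$. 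The target metric, being smooth conformal, can locally be written as $g=e^{2\phi(w)}|\dd w|^2$ for some smooth $\phi$ on $V$.

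Next, I would pull back and expand: since $\dd(z^k)=kz^{k-1}\dd z$, a direct computation gives
\begin{equation*}
  f^*g \big|_{U\setminus z_0}=k^2|z|^{2(k-1)}\,e^{2\phi(z^k)}|\dd z|^2=d_{g_0}(\bullet,z_0)^{2\gamma}\cdot e^{2\varphi_{U,g_0}} g_0,\qquad \gamma=k-1,
\end{equation*}
where $\varphi_{U,g_0}(z)\defeq \log k+\phi(z^k)$. This identifies the candidate exponent as $\gamma=k-1$ (equivalently, the order as $k$), and the regular metric potential as $\varphi_{U,g_0}$.

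Then I would check the regularity required in Definition \ref{def-singularities}. Since $\phi$ is smooth on $V$ and $z\mapsto z^k$ is smooth on $U$, the composition $\varphi_{U,g_0}=\log k+\phi\circ(z\mapsto z^k)$ is smooth on the entire neighborhood $U$ (in particular continuous at $z_0$ and smooth on $U\setminus z_0$), and its Laplacian $\Delta_{g_0}\varphi_{U,g_0}$ is bounded on $U$, hence trivially in $L^1(U,g_0)$.

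Finally, I would verify that the Gauss curvature of $f^*g$ is bounded on $U\setminus z_0$. Writing $f^*g=e^{2v}|\dd z|^2$ with $v=\log k+(k-1)\log|z|+\phi(z^k)$, and using that $\log|z|$ is harmonic on $U\setminus z_0$, one computes
\begin{equation*}
  K_{f^*g}=-\tfrac{1}{2}e^{-2v}\Delta v=-2\,e^{-2\phi(z^k)}\phi_{w\bar w}(z^k,\overline{z^k})=K_g\circ f,
\end{equation*}
which is just the familiar fact that a holomorphic map is a conformal local isometry up to a scaling factor, so the Gauss curvature is preserved pointwise off the critical set. Since $K_g$ is smooth on the compact surface $\Sigma$, it is bounded, and so is $K_{f^*g}$ on $U\setminus z_0$. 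I expect no real obstacle here; the only minor subtlety is to arrange the background metric $g_0$ on $\Sigma'$ to be flat near $z_0$ (to make the distance function $d_{g_0}(\bullet,z_0)$ coincide with $|z|$), which is handled by the preceding remark on change of background metric.
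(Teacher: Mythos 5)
Your proposal is correct and follows essentially the same route as the paper: pass to holomorphic normal coordinates where $f$ is $z\mapsto z^k$, write $g=e^{2\phi}|\dd w|^2$ locally, pull back to get $k^2|z|^{2(k-1)}e^{2\phi(z^k)}|\dd z|^2$, and manufacture a background metric on $\Sigma'$ equal to $|\dd z|^2$ near $z_0$ so that the distance factor is $|z|$; the paper likewise identifies the potential as $f^*\phi+\log k$ and notes it is smooth across $U$. Your explicit check that $K_{f^*g}=K_g\circ f$ off the critical point, hence bounded, is a welcome (if routine) extra verification of the curvature condition in the definition that the paper leaves implicit.
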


 \begin{proof}
    There exists holomorphic charts $(U,z)$ and $(V,w)$ around $z_0$ and $w_0 = f(z_0)$ in which $f$ is represented  by $z\mapsto z^k$. The metric $g$ is locally of the form $g\big|_{V} = e^{2 h} |\dd w|^2$ for some $h \in C^{\infty}(V)$, and 
     \begin{equation}
f^*g\big|_{U} =  e^{2 f^*h} \,  |\dd f|^2 = e^{2 f^*h} \, k^2  |z|^{2(k-1)} |\dd z|^2  = d_{g'}(\bullet,z_0)^{2(k-1)} \, k^2e^{2 f^* h}  \, g'\big|_{U}
 \label{eqn-proof-pull-back-met-con}
    \end{equation}  
where $g'$ is any smooth conformal metric on $\Sigma'$ whose restriction to $U$ is $g'\big|_{U} = |\dd z|^2$ (such a metric can be manufactured using smooth bump functions, and taking $U$ smaller if necessary). Now (\ref{eqn-proof-pull-back-met-con})  is of the form (\ref{eqn-conic-sing-local-form}) with metric potential~$\varphi= f^* h + \log k$. Note that $\varphi$ is actually smooth across~$U$ in this situation.
  \end{proof}

      \subsection{Renormalized Anomaly}\label{sec-def-renorm-anom-part}

\begin{deef}\label{def-main-renorm-anomaly}
    Let~$\Sigma$ be a closed Riemann surface and~$\tilde{g}$ a generalized conformal metric representing~$D=\sum_{j=1}^p \gamma_j z_j$ with~$\gamma_j>-1$. Suppose~$g$ is a smooth conformal metric on~$\Sigma$ and~$\tilde{g}=\me^{2\sigma}g$ for some~$\sigma\in C^{\infty}(\Sigma\setminus \supp D)$. We define the \textsf{renormalized Polyakov anomaly} against metric~$g$ to be
  \begin{equation}
   \mathcal{R}A_{\Sigma}(\tilde{g},g)\defeq \frac{1}{24\pi}\lim_{\varepsilon\to 0^+}\Big[ 
      \int_{\Sigma\setminus \bigcup_{i=1}^p B_{\varepsilon}(z_i,\tilde{g})}(|\nabla_g \sigma|_g^2+2K_g\sigma)\dd V_g +2\pi\sum_{i=1}^p \frac{\gamma_i^2}{1+\gamma_i}\log(\varepsilon)
    \Big],
    \label{eqn-def-renorm-polya}
  \end{equation}
where~$B_{\varepsilon}(z_i,\tilde{g})$ is the metric disk of radius~$\varepsilon$ centered at~$z_i$ under the metric~$\tilde{g}$.
\end{deef} 

\begin{def7} \label{rem-well-def-renorm-anom} To show that $\mathcal{R}A_{\Sigma}(\tilde{g},g)<\infty$ we first show that it is true for~$g$ locally equal to~$|\dd w_j|^2$ near each~$z_j$ with the coordinate~$w_j$ coming from lemma \ref{lemm-reg-metric-pot} (Troyanov form). In this case the result follows from corollary \ref{cor-quad-blow-tilde}. Then we use lemma \ref{lemm-consist} with~$g_0$ being the special Troyanov form and~$g_1$ generic, to extend to the case of a generic smooth~$g$.
\end{def7}

\begin{def7}
    This method of renormalization dates back to Cauchy and Hadamard as they defined ``principal values'' of divergent integrals.
  \end{def7}

	\begin{deef}\label{def-renom-part-func}
	  Consider a conformal field theory with central charge~$c\in\mb{R}$ on the Riemann surface~$\Sigma$ and let~$\tilde{g}$ be a generalized conformal metric representing~$D=\sum_{j=1}^p \gamma_j z_j$ with~$\gamma_j>-1$. We define the \textsf{renormalized partition function} of $(\Sigma,\tilde{g})$ to be
	  \begin{equation}
	    \mathcal{Z}(\Sigma,\tilde{g})\defeq \exp\big(c\mathcal{R}A_{\Sigma}(\tilde{g},g)\big)\mathcal{Z}(\Sigma,g),
	    \label{eqn-def-renom-part-func}
	  \end{equation}
	  where~$g$ is any smooth conformal metric on~$\Sigma$ such that~$\tilde{g}=\me^{2\sigma}g$ for some~$\sigma\in C^{\infty}(\Sigma\setminus \supp D)$. 
	\end{deef}

\begin{def7}\label{rem-const-def-part-func} Lemma \ref{lemm-consist} ensures that $\mathcal{Z}(\Sigma,\tilde{g})$ is independent from the choice of reference metric $g$.
\end{def7}

      \section{Physical Interpretation}\label{sec-ent-ent-main}

In this section we explain in detail the so-called \textsf{entanglement entropies} of quantum systems and a technique for computing them using \textsf{path integrals}, called the \textsf{replica trick} in the physics community \cite{Cardy_Calabrese, Hea, RT, Witten}. This leads us to consider precisely the ratios of the form shown in (\ref{eqn-intro-conf-weights}) and to expect that they behave like correlation functions. We remark that what we present here is more of a conjectural, heursitic framework than rigorous proofs.

\subsection{Partial Trace and Entanglement}
In this subsection we discuss some generalities on describing entanglement between quantum systems. We start with a generic quantum system associated to a Hilbert space~$\mathcal{H}$. Recall that a \textsf{(quantum) statistical ensemble} (or \textsf{statistical state}) on~$\mathcal{H}$ is represented by a nonnegative trace class operator~$\rho$ on~$\mathcal{H}$ with~$\ttr_{\mathcal{H}}(\rho)=1$. This is usually called a \textsf{density operator}. Denote by~$\mathcal{J}_1(\mathcal{H})$,~$\mathcal{J}_{\infty}(\mathcal{H})$ and~$\mathcal{L}(\mathcal{H})$ the trace class, compact and bounded operators on~$\mathcal{H}$.

Now suppose the quantum system can be decomposed into two subsystems~$A$ and~$B$. In other words, assume that the Hilbert space~$\mathcal{H}$ is a tensor product:
      \begin{equation}
	\mathcal{H}=\mathcal{H}_A\otimes \mathcal{H}_B.
	\label{eqn-hilb-tens-decomp-gen}
      \end{equation}
      In order to describe the relation of the subsystems to the whole system, a useful operation is called the \textsf{partial trace}. This corresponds to taking trace ``over one component'' and one is left with an operator acting on the ``remainder component''. It could be defined rigorously as follows.
For each~$\rho\in \mathcal{J}_1(\mathcal{H})$, consider the linear functional
    \begin{equation}
      C\longmapsto \ttr_{\mathcal{H}}(\rho(C\otimes \one_B))
      \label{}
    \end{equation}
    for~$C\in \mathcal{J}_{\infty}(\mathcal{H}_A)$. We have
    \begin{equation}
      \big|\ttr_{\mathcal{H}}\big(\rho(C\otimes \one_B)\big)\big|\le \bnrm{\rho}_{\mathcal{J}_1(\mathcal{H})}\bnrm{C\otimes \one_B}_{\mathcal{L}(\mathcal{H})}= \bnrm{\rho}_{\mathcal{J}_1(\mathcal{H})}\bnrm{C}_{\mathcal{L}(\mathcal{H}_A)}.
      \label{eqn-gen-part-trace-bound}
    \end{equation}
    Thus (\ref{eqn-gen-part-trace-bound}) defines a bounded linear functional on~$\mathcal{J}_{\infty}(\mathcal{H}_A)$ with norm~$\nrm{\rho}_{\mathcal{J}_1(\mathcal{H})}$. Since~$\mathcal{J}_1(\mathcal{H}_A)$ is the Schatten dual of the ideal~$\mathcal{J}_{\infty}(\mathcal{H}_A)$ we obtain a unique representative~$\ttr_B(\rho)\in \mathcal{J}_1(\mathcal{H}_A)$.

    \begin{deef}
	  Let the Hilbert space~$\mathcal{H}$ be a tensor product as in (\ref{eqn-hilb-tens-decomp-gen}). Then the \textsf{partial trace over~$\mathcal{H}_B$} is the linear map
	  \begin{equation}
	    \left.
	    \begin{array}{rcl}
	       \ttr_B:\mathcal{J}_1(\mathcal{H}) &\lto& \mathcal{J}_1(\mathcal{H}_A),\\
	       \rho &\longmapsto &\ttr_B(\rho),
	    \end{array}
	    \right.
	    \label{}
	  \end{equation}
	  such that~$\ttr_B(\rho)$ is the unique operator on~$\mathcal{H}_A$ satisfying
	  \begin{equation}
	    \ttr_{\mathcal{H}_A}\big(\ttr_B(\rho)C\big) =\ttr_{\mathcal{H}}\big(\rho(C\otimes \one_B)\big)
	    \label{eqn-part-trace-dual-def}
	  \end{equation}
	  for all bounded operators~$C\in \mathcal{L}(\mathcal{H}_A)$.
	\end{deef}

    	\begin{def7}\label{rem-property-partial-trace}
	  For more details see Simon \cite{Sim4} pages 269-270, and also \cite{Sim1} theorem 3.2. In particular, from (\ref{eqn-gen-part-trace-bound}) one can see that the partial trace is continuous with respect to the~$\mathcal{J}_1(\mathcal{H})$- and~$\mathcal{J}_1(\mathcal{H}_A)$-norms (trace norms), and by taking~$C=|\psi\rangle\langle \psi|$,~$\psi\in \mathcal{H}_A$, in (\ref{eqn-part-trace-dual-def}) one could also see that~$\ttr_B(\rho)$ is nonnegative if~$\rho$ is.
	\end{def7}
      
      \begin{def7}
  Restricted to rank-one operators, we have
  \begin{equation}
    \ttr_B\big( |\varphi_A\otimes \varphi_B\rangle\langle \psi_A\otimes \psi_B| \big)=\langle \psi_B|\varphi_B\rangle |\varphi_A\rangle\langle \psi_A|,
    \label{}
  \end{equation}
  with~$|\varphi_A\rangle$,~$|\psi_A\rangle\in \mathcal{H}_A$,~$|\varphi_B\rangle$,~$|\psi_B\rangle\in \mathcal{H}_B$. Here the notation~$|w\rangle\langle v|$ for~$w$,~$v\in \mathcal{H}$ denotes the operator mapping any~$u\in \mathcal{H}$ to~$\ank{v,u}_{\mathcal{H}}w$. See also example \ref{exp-mat-part-trace}. 
\end{def7}

\begin{deef}
	Suppose~$\rho$ is a density operator over~$\mathcal{H}=\mathcal{H}_A\otimes \mathcal{H}_B$. Then the \textsf{reduced density operator over~$A$} is defined by
	\begin{equation}
	  \rho_A\defeq \ttr_B(\rho).
	  \label{}
	\end{equation}
      \end{deef}
It follows from remark \ref{rem-property-partial-trace} that $\rho_A$ is a density operator over~$\mathcal{H}_A$.
      \begin{deef}
Suppose~$\rho$ is a density operator over~$\mathcal{H}=\mathcal{H}_A\otimes \mathcal{H}_B$. We define
\begin{align}
  \textrm{\textsf{the entanglement entropy (over }}A\textrm{\textsf{)}} && \mathcal{S}_A&\defeq -\ttr_{\mathcal{H}_A}(\rho_A \log \rho_A), \\
  \textrm{\textsf{the }}n\textrm{\textsf{-th R\'enyi entropy (over }}A\textrm{\textsf{)}} && \mathcal{S}^{(n)}_A &\defeq \frac{1}{1-n}\log\ttr_{\mathcal{H}_A}(\rho_A^n),\quad \textrm{for }n>1.
  \label{eqn-def-renyi-ent}
\end{align}
      \end{deef}

\begin{def7}
	In general the quantities~$\mathcal{S}_A$ and~$\mathcal{S}_A^{(n)}$ defined above do not equal~$\mathcal{S}_B$ and~$\mathcal{S}_B^{(n)}$, the corresponding quantities defined the other way round by first taking partial trace over~$A$. However,~$\mathcal{S}_A=\mathcal{S}_B$ and~$\mathcal{S}_A^{(n)}=\mathcal{S}_B^{(n)}$ do hold if~$\rho$ is a \textsf{pure state}, namely a rank-one projector written as~$\rho=|\psi\rangle\langle \psi|$ for some~$|\psi\rangle\in \mathcal{H}$. Therefore, for pure states it makes sense to define \textit{the} entanglement and R\'enyi entropies~$\mathcal{S}:=\mathcal{S}_A=\mathcal{S}_B$ and~$\mathcal{S}^{(n)}:=\mathcal{S}_A^{(n)}=\mathcal{S}_B^{(n)}$. In the general case, one could then consider another quantity called the \textsf{mutual information}, defined as
	\begin{equation}
	  \mathcal{I}_{A,B}^{(n)}\defeq \mathcal{S}_A^{(n)}+\mathcal{S}_B^{(n)}-\mathcal{S}_{\mathcal{H}}^{(n)},
	  \label{}
	\end{equation}
	where~$\mathcal{S}_{\mathcal{H}}^{(n)}$ is defined the same way as (\ref{eqn-def-renyi-ent}) by just removing the subscript~$A$. 
      \end{def7}

\begin{def7}
	  If~$\rho$ is trace class, then~$\rho_A$ is trace class and the complex power~$\rho_A^z$ is well-defined for~$\fk{Re}(z)>1$. If moreover~$\mathcal{S}_A<\infty$, one then has the relation
	  \begin{equation}
	    \mathcal{S}_A=\lim_{z\to 1^+}\mathcal{S}_A^{(z)}=-\lim_{z\to 1^+}\frac{\partial}{\partial z}\ttr_{\mathcal{H}_A}(\rho_A^z).
	    \label{}
	  \end{equation}
	  However, for continuum QFTs the tensor product decomposition~$\mathcal{H}=\mathcal{H}_A\otimes \mathcal{H}_B$ is generally not well-defined, and the only approach available which produces reasonable answers seems to be employing the so-called \textsf{replica} interpretation to be explained in the next subsection, or the equivalent approach by \textsf{twist fields} (both avoid defining the tensor product). These concerns the quantities~$\ttr_{\mathcal{H}_A}(\rho_A^n)$ for positive integral~$n$ and is also what we will focus on for this paper. 
	\end{def7}

    \begin{exxx}\label{exp-mat-part-trace}
  Let~$\mathcal{H}=\mb{C}^2\otimes \mb{C}^2 =:\mathcal{H}_A\otimes \mathcal{H}_B$, in this order. Consider the standard basis~$e_1\otimes e_1$,~$e_1\otimes e_2$,~$e_2\otimes e_1$ and~$e_2\otimes e_2$ (in this order) of~$\mathcal{H}$ where~$e_1$,~$e_2$ are standard base vectors of~$\mb{C}^2$. Then for~$\rho\in \End(\mb{C}^2\otimes \mb{C}^2)$ written
  \begin{equation}
    \rho=\left(
    \begin{array}{cccc}
      a_{11} &a_{12} &a_{13} &a_{14} \\
      a_{21} &a_{22} &a_{23} &a_{24} \\ 
      a_{31} &a_{32} &a_{33} &a_{34} \\
      a_{41} &a_{42} &a_{43} &a_{44} 
    \end{array}
    \right),
    \label{}
  \end{equation}
  we have
  \begin{equation}
    \ttr_B( \rho)=\left(
    \begin{array}{cc}
      a_{11}+a_{22} & a_{13}+a_{24} \\
      a_{31}+a_{42} & a_{33}+a_{44}
    \end{array}
    \right),\quad\quad 
    \ttr_A( \rho)=\left(
    \begin{array}{cc}
      a_{11}+a_{33} & a_{12}+a_{34} \\
      a_{21}+a_{43} & a_{22}+a_{44}
    \end{array}
    \right).
    \label{}
  \end{equation}
  Also, if~$\rho=\rho_A\otimes \rho_B$ is a Kronecker product, we have~$\ttr_B(\rho)=\ttr(\rho_B)\rho_A$,~$\ttr_A(\rho)=\ttr(\rho_A)\rho_B$.
\end{exxx}

\subsection{Path Integrals and Replica}\label{sec-main-replica}
      Now let us describe the geometric framework where the quantities~$\mathcal{S}_A$ and~$\mathcal{S}_A^{(n)}$ get represented by \textit{path integrals}, as presented in \cite{Cardy_Calabrese}. Here we shall describe the pictures heuristically and \textbf{not attempt at any rigor}. For this paper we focus on~$1+1$ dimensional field theories where the space is either the real line or a circle with a specific perimeter (as a Riemannian manifold), denoted generally by~$X$, and space-time a 2-dimensional Riemannian surface with or without boundary denoted~$\Sigma$ (for example,~$\Sigma=X\times [0,T]$ or~$X\times \mb{R}$). Each field theory comes with specified \textsf{field configuration spaces} over space and space-time, denoted~$\conf(X)$ and~$\conf(\Sigma)$, as well as over their sub-regions. Typically~$\conf(X)=\mm{Map}(X,\mss{V})$, the space of \textit{maps} from~$X$ to a \textsf{spin value space}/\textsf{target space}~$\mss{V}$. Such configuration spaces should allow
    \begin{enumerate}[(a)]
      \item restriction of a configuration onto a subregion~$A\subset X$ (or~$\Sigma$), heuristically a map
	\begin{equation}
	  \left.
	  \begin{array}{rcl}
	     \conf(X) &\lto &\conf(A),\\
	     \phi &\longmapsto &\phi|_A.
	  \end{array}
	  \right.
	  \label{}
	\end{equation}
	In other words one is allowed to \textit{localize} the field;
      \item each configuration~$\phi\in \conf(X)$ to be recovered from the pair~$(\phi|_A,\phi|_{A^c})$ of its restrictions onto~$A$ and~$A^c=X\setminus A$, and moreover any pair of configurations on complementary subregions should combine into a global configuration; this is to say heuristically,
	\begin{equation}
	  \conf(X)\cong \conf(A)\times \conf(A^c).
	  \label{eqn-decomp-local-complemen}
	\end{equation}
    \end{enumerate}
    
    \begin{def7}
	  The above statements are rigorous in the case~$X=\Lambda$ is a discrete lattice, and~$\conf(\Lambda)=\mm{Map}(\Lambda,\mss{V})$. Here ``subregions'' correspond to subsets of lattice sites.
	\end{def7}

     \begin{def7}
    Suppose~$\Sigma=X\times [0,T]$. Then~$\conf(\Sigma)$ could be considered as the space of ``paths'' through which a configuration over~$X$ evolves across the time interval~$[0,T]$. Indeed, if we have taken~$\conf(\Sigma)$ to be~$\mm{Map}(\Sigma,\mss{V})$, then we would have simply~$\mm{Map}(X\times[0,T],\mss{V})\cong \mm{Map}([0,T],\mm{Map}(X,\mss{V}))$. The integral of (\ref{eqn-heu-gen-amp}) below is thus an integral over a space of ``paths''.
  \end{def7}

    We now describe heuristically the Hilbert space and the time evolution. For the Hilbert space one takes
	\begin{equation}
	  \mathcal{H}_X\defeq L^2(\conf(X),\mathcal{L}),
	  \label{}
	\end{equation}
	where~$\mathcal{L}$ denotes the non-existent \textit{Lebesgue measure} on the configuration space~$\conf(X)$. Under the path-integral formalism (Euclidean\footnote{See the first paragraph of subsection \ref{sec-dens-op-imag-time}.}) time evolution across~$\Sigma=X\times [0,T]$, namely over time~$T$, is represented by the integral operator
	\begin{equation}
	  \left.
	  \def\arraystretch{1.3}
	  \begin{array}{rcl}
	     U_T: \mathcal{H}_X &\lto &\mathcal{H}_X,\\
	     F&\longmapsto & (U_T F)(\psi)\defeq \ddp\int_{\conf(X)}^{}\mathcal{A}_T(\psi,\varphi)F(\varphi)\,\dd \mathcal{L}(\varphi),
	  \end{array}
	  \right.
	  \label{}
	\end{equation}
	with the integral kernel
	\begin{equation}
	  \mathcal{A}_T(\psi,\varphi)\defeq \int_{
	     \left\{  \phi\in \conf(\Sigma)~\middle|~ \substack{
	     \phi|_{X\times\{0\}}=\varphi,\\ 
	     \phi|_{X\times \{T\}}=\psi
	   }
      \right\}} \me^{-S_{\mm{EQFT}}(\phi)}\,\dd \mathcal{L}(\phi),
	  \label{eqn-heu-gen-amp}
	\end{equation}
	where now we integrate against the still non-existent Lebesgue measure on~$\conf(\Sigma)$ (with the indicated boundary conditions), and where~$S_{\mm{EQFT}}$ is the \textsf{action functional} (``E'' stands for ``Euclidean'') of the specific theory at hand.

    The above recipe for time evolution does not involve the fact that~$\Sigma$ is~$X\times [0,T]$, and indeed it makes sense for space-times~$\Sigma$ having any kind of geometry as long as~$\partial \Sigma=\partial_{\mm{ini}}\Sigma\sqcup \partial_{\mm{ter}}\Sigma$ and~$\partial_{\mm{ini}}\Sigma\cong \partial_{\mm{ter}}\Sigma\cong X$, namely its boundary has two components (called \textsf{initial} and \textsf{terminal}) both isometric to~$X$. Generalizing still further, Segal \cite{Segal} proposed a set of axioms that defines a QFT abstractly as a rule that associates Hilbert spaces to space manifolds and evolution operators to space-time manifolds that ``connects'' them (cobordisms). 
    
    Now denote by~$U_{\Sigma}$ and~$\mathcal{A}_{\Sigma}$ the evolution operator and its integral kernel corresponding to the space-time piece~$\Sigma$. Two important axioms of Segal (and Atiyah) are written as
	\begin{description}
	  \item[(composition)] if one has two space-time pieces~$\Sigma_1$,~$\Sigma_2$ and one glues them together by identifying~$\partial_{\mm{ter}}\Sigma_1$ with~$\partial_{\mm{ini}}\Sigma_2$ via~$X$, obtaining the piece~$\Sigma_2\circ \Sigma_1$, then we have
	    \begin{equation}
	      U_{\Sigma_2\circ \Sigma_1}=U_{\Sigma_2}\circ U_{\Sigma_1};
	      \label{}
	    \end{equation}
	  \item[(trace)] if one glues the space-time~$\Sigma$ with itself by identifying~$\partial_{\mm{ini}}\Sigma$ and~$\partial_{\mm{ter}}\Sigma$ via~$X$, obtaining~$\check{\Sigma}$, then
	    \begin{equation}
	      \ttr_{\mathcal{H}}(U_{\Sigma})=\mathcal{Z}(\check{\Sigma}),
	      \label{}
	    \end{equation}
	    where the number
	    \begin{equation}
	      \mathcal{Z}(\check{\Sigma})\defeq \int_{\conf(\check{\Sigma})}^{}\me^{-S_{\mm{EQFT}}(\phi)}\,\dd \mathcal{L}(\phi)
	      \label{eqn-heu-gen-part-func}
	    \end{equation}
	    is the \textsf{partition function} (same as what appears in definition \ref{def-cft}).
	\end{description}

\begin{def7}
	  Comparing (\ref{eqn-heu-gen-part-func}) with (\ref{eqn-heu-gen-amp}), one sees that the trace axiom corresponds formally to the fact that the ``trace'' of an integral operator is the integral along the diagonal of its kernel. For rigorous results concerning this statement see Simon \cite{Sim3} section 3.11.
	\end{def7}

    Now we try to incorporate considerations of subregions~$A\subset X$ into the framework described above. For simplicity we will take~$A$ to be a finite interval (remember~$X$ is either a circle or the real line). First of all, following (\ref{eqn-decomp-local-complemen}), we have formally
	\begin{equation}
	  \mathcal{H}_X \overset{\mm{heu}}{\cong} L^2\big(\conf(A)\times \conf(A^c),\mathcal{L}_{\conf(A)}\otimes \mathcal{L}_{\conf(A^c)}\big)\overset{\mm{heu}}{\cong} \mathcal{H}_{A}\otimes \mathcal{H}_{A^c}.
	  \label{}
	\end{equation}
	To include the partial trace, the trace axiom is now slightly extended by adding that taking the partial trace (over~$A^c$) corresponds to gluing ``partially'' (along~$A^c$ but not~$A$), leaving out a ``slit'' with two sides denoted~$A_-$ and~$A_+$. In terms of the integral kernels, if we denote by~$\mathcal{A}_{\Sigma}^A$ the kernel of~$\ttr_{A^c}(U_{\Sigma})$ acting on~$\mathcal{H}_{A}$, then we have
	\begin{equation}
	  \mathcal{A}_{\Sigma}^A(\psi_{A},\varphi_A)\heueq \int_{\conf(A^c)}^{} \mathcal{A}_{\Sigma}(\psi_A,\sigma_{A^c},\varphi_A,\sigma_{A^c})\,\dd \mathcal{L}(\sigma_{A^c}).
	  \label{}
	\end{equation}
    Here~$\ttr_{A^c}(U_{\Sigma})$ as an operator acting on~$\mathcal{H}_A$, the Hilbert space associated to an interval, is represented by a surface (space-time) with a ``slit'' (or \textsf{branch cut}/\textsf{defect line}) where the two ``sides'' of the slit are identified with two copies of the interval. Accordingly, we must also extend the composition axiom to incorporate this situation, namely to allow gluing of surfaces with slits along sides which represents composition of operators acting on Hilbert spaces over intervals. We shall assume that this has been done in the obvious manner.

    Finally we arrive at an interpretation of a quantity of the form~$\ttr_{\mathcal{H}_A}(\ttr_{A^c}(U_{\Sigma})^n)$ which appears in the expression for the R\'enyi entropy (\ref{eqn-def-renyi-ent}). Indeed, one starts with the surface~$\Sigma$ and glue its two ends ``partially'' along the parts corresponding to~$A^c$ in~$X$. We denote the resulting ``surface with slit'' by~$\check{\Sigma}\setminus A$. 
	Then the two sides of the slit gets identified with two copies of~$A$, which we denote by~$A_{\pm}$, that corresponds to approaching~$A$ from the two sides within~$\check{\Sigma}\setminus A$. Next, we take~$n$ copies of~$\check{\Sigma}\setminus A$ and glue them in a \textit{cyclic manner}, that is, we glue~$A_+^{(j)}$ on the~$j$-th copy to~$A_-^{(j+1)}$ on the~$(j+1)$-th copy,~$1\le j<n$, and finally~$A_+^{(n)}$ to~$A_-^{(1)}$. We denote the surface thus obtained by~$\check{\Sigma}_n$, called the $n$-th \textsf{replica}. 
    Importantly, $\check{\Sigma}_n$ comes equipped with a \textbf{metric} which is induced from the metric on the original space-time $\Sigma$. Equivalently, there is an obvious map~$f_n:\check{\Sigma}_n\lto \check{\Sigma}$, sending the end points of~$A$ to themselves, and any other point to its counterpart on the original copy. 
    Then~$f_n$ is a branched~$n$-sheeted cyclic covering with critical points being the two end points of the copies of~$A$, and the metric on~$\check{\Sigma}_n$ is the pull-back under~$f_n$ of the original metric on~$\check{\Sigma}$ (on~$\Sigma$). This will be a metric with \textit{conical singularities} at the two end points of $A$ which are not duplicated. If we assume the \textit{extended} versions of the composition and trace axioms discussed above, then we find
    \begin{equation}
	  \ttr_{\mathcal{H}_A}\big(\ttr_{A^c}(U_{\Sigma})^n\big) =\mathcal{Z}\big(\check{\Sigma}_n\big).
	  \label{}
	\end{equation}
The geometric correspondence is summarized in the table below.
    
      \begin{center}
  \def\arraystretch{1.3}
  \begin{tabular}[]{|c|c|}
    \hline
   Operation & Picture \\ \hline
    evolution operator $U_{\Sigma}$ & 
	 \raisebox{-0.5\height}{\includegraphics[width=5cm]{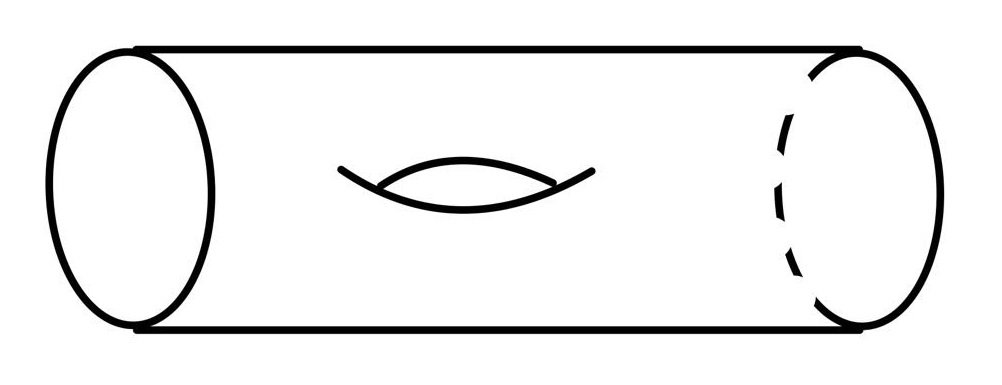}} 
    \\ \hline
	  $\ttr_{\mathcal{H}}(U_{\Sigma})$ & 
      \raisebox{-0.5\height}{\includegraphics[width=5cm]{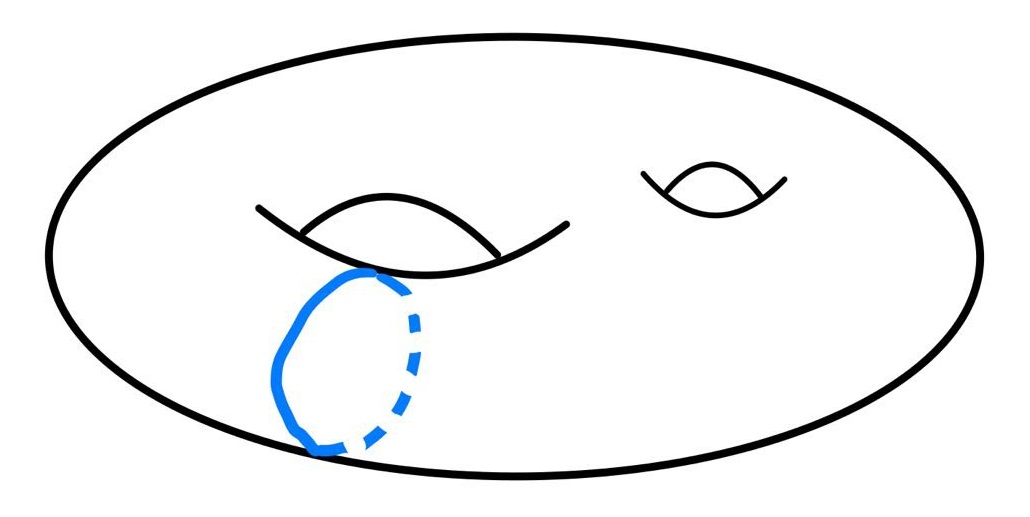}}
      \\ \hline
	   $\ttr_{A^c}(U_{\Sigma})$ & 
       \raisebox{-0.5\height}{\includegraphics[width=5cm]{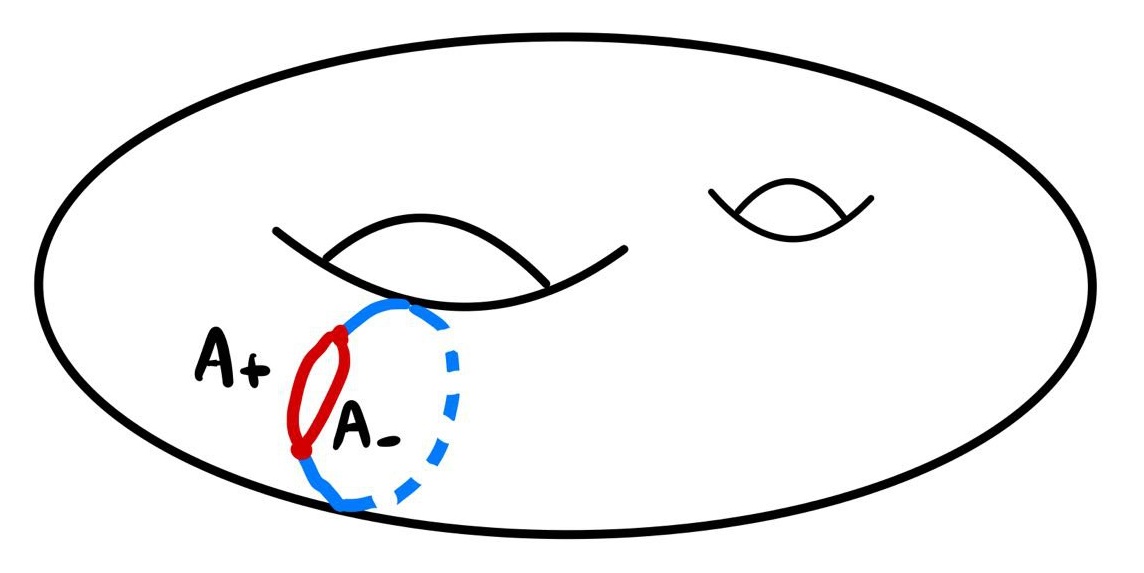}}
       \\ \hline
       $\ttr_A\big(\ttr_{A^c}(U_{\Sigma})^n\big)$ & 
       \raisebox{-0.5\height}{\includegraphics[width=5cm]{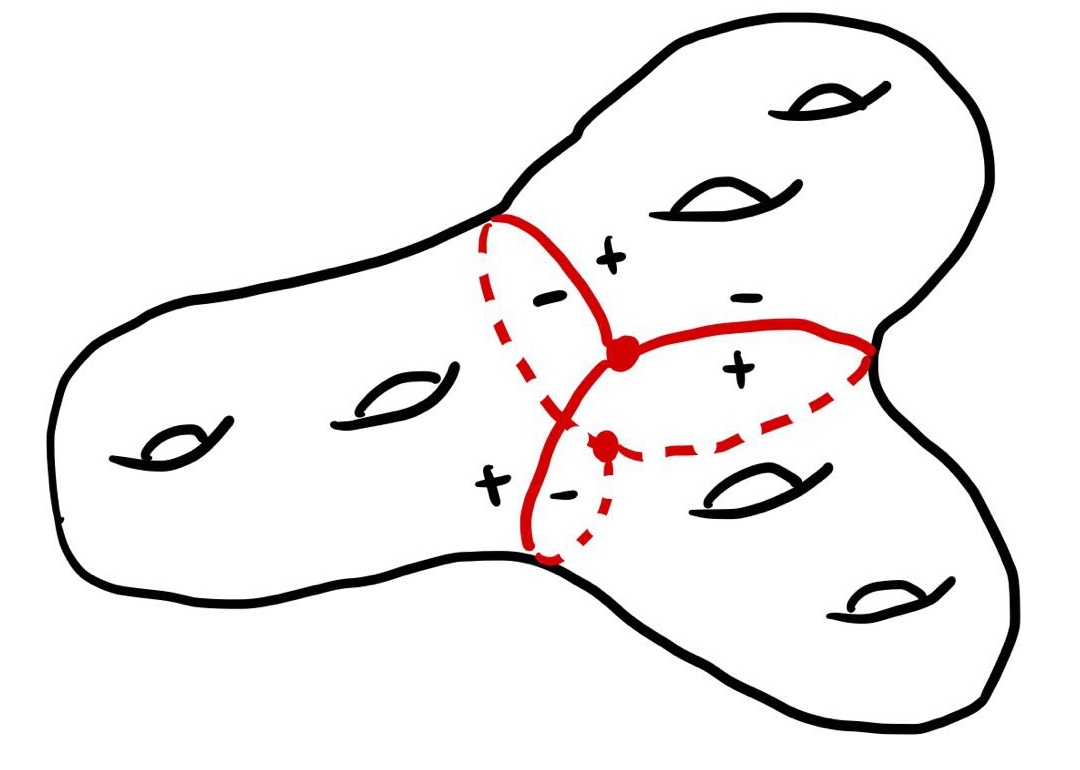}}
       \\ \hline
	\end{tabular}\end{center}

\subsection{Density Operators Represented by Imaginary-time Evolution}\label{sec-dens-op-imag-time}

We are left with the question: which density operators are also evolution operators across some space-time? To begin with, we reemphasize: our space-time metrics all have Euclidean signature and we work with \textsf{Wick rotated}/\textsf{Euclidean} QFT. More precisely, the action functional~$S_{\mm{EQFT}}(\phi)$, which usually involves itself the underlying space-time metric, is real-valued and the exponential weight~$\exp(-S_{\mm{EQFT}}(\phi))$ positive in (\ref{eqn-heu-gen-amp}) and (\ref{eqn-heu-gen-part-func}). The evolution operators defined in this way is said to be in \textsf{imaginary time}. For actual Lorentzian QFT one must use~$\exp(\ii S_{\mm{QFT}})$ instead of~$\exp(-S_{\mm{EQFT}})$, where the action~$S_{\mm{QFT}}$ is also written in terms of the Lorentzian metric.

Next, which imaginary-time evolution operators are trace class and nonnegative, serving as candidates for a density operator? In this regard it would be instructive to introduce the third axiom of Segal's:
	\begin{description}
	  \item[(adjoint)] let~$\Sigma^*$ denote the same space-time as~$\Sigma$ but with the identification of the initial and terminal boundaries \textit{reversed}, then
	    \begin{equation}
	      U_{\Sigma^*}=U_{\Sigma}^{\dagger},
	      \label{}
	    \end{equation}
	    where~$\dagger$ denotes the operator adjoint.
	\end{description}
	From this, together with the two previous axioms, one could see that Euclidean evolution across a cobordism naturally corresponds to \textit{Hilbert-Schmidt} operators. Indeed,
	\begin{equation}
	  \ttr_{\mathcal{H}}\big(U_{\Sigma}^{\dagger}U_{\Sigma}\big)=\ttr_{\mathcal{H}}\big(U_{\Sigma^*\circ \Sigma}\big)=\mathcal{Z}\big((\Sigma^*\circ \Sigma)^{\vee}\big)\quad (<\infty),
	  \label{}
	\end{equation}
	where~$(\Sigma^*\circ \Sigma)^{\vee}$ is the ``double'' of~$\Sigma$. This ``shows'' that~$U_{\Sigma}$ is Hilbert-Schmidt. Immediately from the same expression one also sees that a cobordism of the form~$\Sigma^*\circ \Sigma$ naturally gives a nonnegative trace class operator which is what we wanted. Such a cobordism has, in other words, a \textit{reflection symmetry} exchanging its initial and terminal boundaries, and whose invariant set is exactly isomorphic to a copy of~$X$ as a Riemannian submanifold.

\begin{def7}
	  A somewhat technical requirement is that the boundaries of~$\Sigma$ be geodesic so that its double would be smooth.
	\end{def7}

	\begin{def7}
	  For certain models the (ordinary) Segal axioms have been fully and rigorously constructed. See, for example, the work \cite{Lin} of the second author and also \cite{GKRV}. In these cases~$U_{\Sigma}$ is rigorously shown to be Hilbert-Schmidt. The main difficulty lies in defining (\ref{eqn-heu-gen-amp}) and (\ref{eqn-heu-gen-part-func}) rigorously, in which case it implies~$\mathcal{Z}(\check{\Sigma})<\infty$, and also in showing the generic composition axiom out of these definitions.
	\end{def7}

    Now we give some examples.

    \begin{exxx}
	  Assume that one is given a \textsf{Hamiltonian operator}~$H$ acting on~$\mathcal{H}_X$, and let~$\beta>0$. Then the so-called \textsf{thermal state} (or \textsf{Gibbs state}) at \textsf{inverse temperature}~$\beta$ is given by the density operator
	  \begin{equation}
	    \rho_{\beta}\defeq \frac{\me^{-\beta H}}{\ttr_{\mathcal{H}}(\me^{-\beta H})}.
	    \label{}
	  \end{equation}
	  The operator~$\me^{-\beta H}$ gives the evolution over imaginary time~$\ii\beta$ and could be represented by a path integral (\ref{eqn-heu-gen-amp}) over the cylinder~$\Sigma_{\beta}:=X\times [0,\beta]$ with a specific action related to~$H$. Now we denote by~$\check{\Sigma}_{n,\beta}$ the closed surface obtained by gluing~$n$ copies of~$\check{\Sigma}_{\beta}\setminus A$ cyclicly along the slit~$A$, as explained at the end of the last subsection. Then the replica trick yields
	  \begin{equation}
	    \ttr_A\big(\ttr_{A^c}(\rho_{\beta})^n\big)=\frac{\mathcal{Z}(\check{\Sigma}_{n,\beta})}{\mathcal{Z}(\check{\Sigma}_{\beta})^n}.
	    \label{}
	  \end{equation}
	  We remark again that~$\check{\Sigma}_{n,\beta}$ comes equipped with the ``replica metric'' which is induced from~$\Sigma_{\beta}$.
	\end{exxx}

    \begin{exxx}\label{exp-ground-state-disj-cob}
	  Here we consider cobordisms which are disconnected,~$\Sigma=\Sigma_1\sqcup \Sigma_2$, with~$\partial\Sigma_1\cong \partial\Sigma_2\cong X$. We mark~$\partial\Sigma_1$ as initial and~$\partial\Sigma_2$ as terminal. In this case
	  \begin{equation}
	    \conf(\Sigma)=\conf(\Sigma_1)\times \conf(\Sigma_2),\quad\quad \mathcal{L}_{\conf(\Sigma)}=\mathcal{L}_{\conf(\Sigma_1)}\otimes \mathcal{L}_{\conf(\Sigma_2)},
	    \label{}
	  \end{equation}
	  and assume (a very weak form of) \textsf{locality} of the action (no interaction between disjoint space-times), that is,
	  \begin{equation}
	    S_{\Sigma}(\phi)=S_{\Sigma_1}(\phi_1)+S_{\Sigma_2}(\phi_2),
	    \label{}
	  \end{equation}
	  with~$\phi=(\phi_1,\phi_2)$, then we have formally
	  \begin{align}
	    \mathcal{A}_{\Sigma}(\psi,\varphi)&=\iint_{ \left\{ \substack{\phi_1\in\conf(\Sigma_1), \\ \phi_1|_{\partial\Sigma_1}=\varphi} \right\}\times \left\{ \substack{\phi_2\in\conf(\Sigma_2), \\ \phi_2|_{\partial\Sigma_2}=\psi} \right\}}
	    \me^{-S_{\Sigma_1}(\phi_1)}\me^{-S_{\Sigma_2}(\phi_2)}
	    \,\dd\mathcal{L}_{\conf(\Sigma_1)}(\phi_1)\,\dd \mathcal{L}_{\conf(\Sigma_2)}(\phi_2)\nonumber \\
	    &=\mathcal{A}_{\Sigma_1}(\varphi)\mathcal{A}_{\Sigma_2}(\psi), \label{eqn-disj-union-axiom}
	  \end{align}
	  where~$\mathcal{A}_{\Sigma_1}$ and~$\mathcal{A}_{\Sigma_2}$ are the amplitudes associated to~$\Sigma_1$ and~$\Sigma_2$ individually, defined by the same formula (\ref{eqn-heu-gen-amp}) in the case of having just one boundary component (sometimes (\ref{eqn-disj-union-axiom}) itself is taken as an axiom, see \cite{Gaw} page 766). We see that in this case~$U_{\Sigma}$ is a rank-one operator. Written more suggestively,
	\begin{equation}
	  U_{\Sigma}=\big|\mathcal{A}_{\Sigma_2}\big\rangle \big\langle \mathcal{A}_{\Sigma_1}\big|.
	  \label{}
	\end{equation}
    	We normalize by putting~$\rho_{\Sigma}:=U_{\Sigma}/\ttr(U_{\Sigma})$, then following the replica trick we have
	\begin{equation}
	  \ttr_A\big(\ttr_{A^c}(\rho_{\Sigma})^n\big)=\frac{\mathcal{Z}(\check{\Sigma}_n)}{\mathcal{Z}(\check{\Sigma})^n},
	  \label{}
	\end{equation}
	where~$\check{\Sigma}$ is the closed surface obtained by gluing~$\Sigma_1$ with~$\Sigma_2$ along~$X$, and~$\check{\Sigma}_n$ by gluing~$n$ copies of~$\check{\Sigma}\setminus A$ along the slit~$A$, equipped with the induced metric. In particular, if~$\Sigma_1=\Sigma_2=\mb{D}$, the unit disk, equipped with a flat-at-the-boundary metric (namely it could be written~$|z|^{-2}|\dd z|^2$ in some complex coordinate on an annulus around~$\partial\mb{D}$), then~$\rho_{\Sigma}$ defined as above is metric independent (provided flat at the boundary) and represents (projection onto) the \textsf{vacuum state}, usually written~$|0\rangle\langle 0|$ (see \cite{Gaw} page 768).
	\end{exxx}

\newpage
\section{Geometric Lemmas}\label{sec-geo-lemm}
	To begin with, we record some basic conformal relations that will be used repeatedly throughout the paper. These relations are local in nature as they concern only what happens in an arbitrarily small neighborhood around each point. Namely, let~$\Sigma$ be a smooth \textit{surface} and denote by~$\dd V_g$,~$\nabla_g$,~$K_g$,~$\Delta_g$ and $\nabla_v^g$ respectively the volume (area) form, gradient, Gauss curvature (half the scalar curvature), Laplacian and the Levi-Civita covariant derivative in the direction~$v\in T_z\Sigma$ under the Riemannian metric~$g$, and let~$h\in C^{\infty}(\Sigma)$, then we have
\begin{align}
      \dd V_{\me^{2h}g}&=\me^{2h}\dd V_g, \label{eqn-scale-vol}\\
      \nabla_{\me^{2h}g}&=\me^{-2h}\nabla_g, \label{eqn-scale-nabla}\\
      K_{\me^{2h}g}&=\me^{-2h}\left( -\Delta_g h+K_g \right),
      \label{eqn-liouville-eqn}\\
      \Delta_{\me^{2h}g}&=\me^{-2h}\Delta_g. \label{eqn-scale-lap} \\
      \nabla_v^{\me^{2h}g} X&=\nabla_v^g X+ (vh)X+(Xh)v-\ank{v,X}_g \nabla_g h. \label{eqn-scale-cov-deri}
    \end{align}
The relation (\ref{eqn-liouville-eqn}) is also called \textsf{Liouville's equation}.

\subsection{Technical Remarks on Conical Singularities}\label{sec-conic-remark}

The principal local regularity result around a conical singularity in our sense (definition \ref{def-singularities}) is the following obtained by Troyanov \cite{Troy2}.

  \begin{lemm}
    [\cite{Troy2} proposition 3.2] \label{lemm-reg-metric-pot} Let~$\Sigma$ be a closed Riemann surface and~$\tilde{g}$ a generalized conformal metric on~$\Sigma$ with a conical singularity at~$z_0\in \Sigma$ of exponent~$\gamma$. Then there exists a complex coordinate~$w$ defined on a neighborhood~$U\ni z_0$, with~$w(z_0)=0$, such that
    \begin{equation}
      \tilde{g}|_{U\setminus z_0}=|w|^{2\gamma}\me^{2 \varphi_{U,w}}|\dd w|^2,
      \label{}
    \end{equation}
    such that~$\varphi_{U,w}$ satisfies the conditions of definition \ref{def-singularities} as well as
    \begin{equation}
      \varphi_{U,w}(z)=\varphi_{U,w}(z_0)+\mathcal{O}(|w|^{2\gamma+2}),\quad\textrm{and}\quad \partial_z \varphi_{U,w},~\partial_{\ol{z}} \varphi_{U,w}=\mathcal{O}(|w|^{2\gamma+1}).
      \label{eqn-scale-metric-pot}
    \end{equation}
    In particular, $\partial_r \varphi_{U,w}=\mathcal{O}(|w|^{2\gamma+1})$ where $r$ is the radial coordinate~$r(z)=|w(z)|$ under $|\dd w|^2$. \hfill~$\Box$
  \end{lemm}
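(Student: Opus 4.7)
The plan is to start from any holomorphic coordinate $z$ centered at $z_0$ and then absorb the entire harmonic part of the regular metric potential into a holomorphic change of variables $w=w(z)$; the surviving residual will satisfy \eqref{eqn-scale-metric-pot} by elliptic regularity.

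I would first fix a holomorphic chart $z$ on $U'\ni z_0$ with $z(z_0)=0$, so that by \eqref{eqn-conic-sing-local-form} (taking $|dz|^2$ as the smooth reference) we have $\tilde g|_{U'\setminus z_0}=|z|^{2\gamma}e^{2\phi(z)}|dz|^2$, where $\phi$ is continuous on $U'$, smooth off $0$, and $\Delta\phi\in L^1$. Since $\Delta\log|z|=0$ off the origin, the Liouville equation \eqref{eqn-liouville-eqn} applied to $\tilde g$ gives $\Delta\phi(z)=-|z|^{2\gamma}e^{2\phi(z)}K_{\tilde g}(z)$ there, and boundedness of $K_{\tilde g}$ together with continuity of $\phi$ yields the pointwise decay $|\Delta\phi(z)|\le C|z|^{2\gamma}$ near $0$. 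On a small disk $D_R\subset U'$, I would then split $\phi=\psi_1+\psi_2$, where $\psi_1$ is the harmonic extension of $\phi|_{\partial D_R}$ and $\psi_2$ solves the Dirichlet problem $\Delta\psi_2=\Delta\phi$, $\psi_2|_{\partial D_R}=0$.

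Since $\psi_1$ is harmonic on $D_R$, I would write $\psi_1=\re(F)$ for some holomorphic $F$ on $D_R$ and define the new coordinate by
$$w(z)^{\gamma+1}\defeq (\gamma+1)\int_0^z\zeta^\gamma e^{F(\zeta)-F(0)}\,d\zeta.$$
Although $\zeta^\gamma$ is multivalued for $\gamma\notin\mathbb Z$, the substitution $\zeta=tz$ exhibits $w(z)^{\gamma+1}/z^{\gamma+1}$ as a single-valued holomorphic function of $z$ equal to $1$ at $0$; hence $w(z)=z\cdot h(z)$ with $h$ holomorphic and $h(0)=1$, so $w$ is a genuine holomorphic coordinate centered at $z_0$. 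Differentiating the defining identity gives $w^\gamma w'(z)=z^\gamma e^{F(z)-F(0)}$, from which $|w|^{2\gamma}|dw|^2=|z|^{2\gamma}e^{2(\psi_1(z)-\psi_1(0))}|dz|^2$ and consequently
$$\tilde g=|w|^{2\gamma}e^{2\varphi_{U,w}(w)}|dw|^2,\qquad \varphi_{U,w}(w):=\psi_1(0)+\psi_2(z(w)).$$
Continuity of $\varphi_{U,w}$ on $U$, smoothness off $z_0$, and the $L^1$ Laplacian condition would all be inherited from those of $\psi_2$ (hence of $\phi$). Since $z(w)=w+\mathcal{O}(w^2)$, the decay asserted in \eqref{eqn-scale-metric-pot} reduces to proving $\psi_2(z)-\psi_2(0)=\mathcal{O}(|z|^{2\gamma+2})$ and $\nabla\psi_2(z)=\mathcal{O}(|z|^{2\gamma+1})$.

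For this last estimate I would use the Dirichlet Green's function representation $\psi_2(z)=-\int_{D_R}G_R(z,\xi)\Delta\phi(\xi)\,dA(\xi)$ together with $|\Delta\phi(\xi)|\le C|\xi|^{2\gamma}$, comparing with the explicit radial model $u(z)=|z|^{2\gamma+2}/(2\gamma+2)^2$ which solves $\Delta u=|z|^{2\gamma}$ and for which direct polar-coordinate integration confirms the absence of any logarithmic correction at the origin. The hard part will be precisely this residual estimate: a crude bound on $G_R$ produces a spurious $\log(1/|z|)$ factor, and removing it requires either explicitly tracking the cancellation visible on the model or decomposing $\Delta\phi$ in Fourier modes in the angular variable and applying the maximum principle mode by mode. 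Once this is done, the chain rule against $z(w)=w\cdot(1+\mathcal{O}(w))$ transfers the bounds to $\varphi_{U,w}$ and completes the proof.
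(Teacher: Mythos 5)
First, a remark on the comparison itself: the paper does not prove this lemma --- it is imported verbatim from Troyanov \cite{Troy2} (Proposition 3.2) and closed with a~$\Box$ --- so your attempt is being measured against a citation, not an argument. Your overall strategy is the standard and correct one for such normal forms: use Liouville's equation and the boundedness of~$K_{\tilde g}$ to get the source estimate~$|\Delta\phi|\le C|z|^{2\gamma}$, absorb a harmonic function~$\re F$ into a holomorphic coordinate change via~$w^{\gamma+1}=(\gamma+1)\int_0^z\zeta^{\gamma}e^{F-F(0)}\,d\zeta$ (your verification that~$w=z\,h(z)$ with~$h(0)=1$, hence a genuine coordinate, is correct, as is the resulting identity~$|w|^{2\gamma}|dw|^2=|z|^{2\gamma}e^{2(\psi_1-\psi_1(0))}|dz|^2$), and then estimate the residual.

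The genuine gap is in the choice of \emph{which} harmonic function to absorb. You take~$\psi_1$ to be the harmonic extension of the boundary values~$\phi|_{\partial D_R}$ and assert that the zero-boundary-data residual~$\psi_2$ satisfies~$\psi_2-\psi_2(0)=\mathcal{O}(|z|^{2\gamma+2})$ and~$\nabla\psi_2=\mathcal{O}(|z|^{2\gamma+1})$. This is false in general:~$\psi_2$ is a particular solution plus a harmonic corrector~$h$ enforcing~$\psi_2|_{\partial D_R}=0$, and~$h$ generically has nonvanishing Taylor terms of every degree~$1,2,\dots$; any degree-$j$ term with~$j<2\gamma+2$ is~$\asymp r^{j}$ and destroys the claimed bound. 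Concretely, for~$\gamma=0$ take~$\phi=\tfrac13 r^2\cos\theta$ on~$D_R$ (continuous, smooth off~$0$, curvature~$-e^{-2\phi}\cos\theta$ bounded); then~$\psi_1=\tfrac{R}{3}r\cos\theta$ and~$\psi_2=\tfrac13 r^2\cos\theta-\tfrac{R}{3}r\cos\theta$, so~$\psi_2-\psi_2(0)\asymp r$ rather than~$\mathcal{O}(r^2)$, and~$\nabla\psi_2(0)\ne0$. Since the paper's applications have~$\gamma=k-1\ge 1$, hence~$2\gamma+2\ge4$, the defect is fatal exactly where the lemma is used. The harmonic part to absorb must be determined by the behaviour \emph{at the origin}, i.e.~$H:=\phi-u$ where~$u$ is a particular solution of~$\Delta u=\Delta\phi$ built to vanish to order~$2\gamma+2$ (for instance by subtracting from the Newtonian kernel the partial sum~$\re\sum_{j}(z/\xi)^j/j$ up to degree~$\lfloor 2\gamma+2\rfloor$, or by solving the radial ODE in each angular Fourier mode and discarding the homogeneous pieces~$r^{k}$ with~$k<2\gamma+2$). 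That construction is precisely the content you defer as ``the hard part,'' and the obstruction there is not only a spurious logarithm from a crude Green-function bound: besides the low-degree harmonic terms above, in the resonant case~$2\gamma+2=k\in\mathbb{Z}_{>0}$ a source mode~$r^{2\gamma}\cos((2\gamma+2)\theta)$ produces a genuine~$r^{2\gamma+2}\log r$ term that no harmonic subtraction removes, so your proposed Fourier-mode argument must confront this case explicitly (or one falls back on Troyanov's own proof).
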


   Troyanov showed further that a neighborhood (still denoted~$U$) of the conical singularity could in fact be equivalently and more intrinsically described by a set of \textsf{polar coordinates}. That is, there is a map
   \begin{equation}
	  h:[0,\varrho]\times \mb{S}_{\Theta}^1 \lto U,
	  \label{}
	\end{equation}
	where~$\mb{S}_{\Theta}^1$ denotes the Riemannian circle with perimeter~$\Theta=2\pi(\gamma+1)$, such that
	\begin{enumerate}[(a)]
	  \item $h(r,\theta)=z_0$ iff~$r=0$;
	\item $h|_{(0,\varrho]\times \mb{S}^1_{\Theta}}$ is a locally bi-Lipschitz homeomorphism onto~$U\setminus z_0$; 
	\item we have
	  \begin{equation}
	    h^*(\tilde{g}|_{U\setminus z_0})=\dd r^2+\omega(r,\theta)^2\dd \theta^2,
	    \label{}
	  \end{equation}
	where~$\omega:(0,\varrho]\times \mb{S}_{\Theta}^1\lto \mb{R}$ is a function such that~$0<c_1\le \omega(r,\theta)\le c_2$ for some constants~$c_1$,~$c_2$ for all~$(r,\theta)$ and~$\lim_{r\to 0}\omega(r,\theta)/r =1$ for all~$\theta\in \mb{S}_{\Theta}^1$.
	\end{enumerate}
By the last requirement we have
\begin{equation}
  \Theta(r)\defeq \int_{\mb{S}^1}^{}\omega(r,\theta)\,\dd\theta\asymp r\Theta=2\pi r(\gamma+1),\quad\quad r\to 0,
  \label{}
\end{equation}
and thus~$2\pi(\gamma+1)$ is sometimes called the \textsf{cone angle}. Moreover, the regularity of the function~$\omega$ as well as~$h$ itself could be deduced from the regularity of the curvature of~$\tilde{g}$ on~$U\setminus z_0$ (\cite{Troy2} theorem 4.1). There are certain recent works in the literature which begin naturally with this latter point of view (\cite{AKR} for example). But we stick to definition \ref{def-singularities} throughout this article although utilization of polar coordinates may (or may not) aide with certain proofs.

\begin{def7}\label{rem-reg-met-pot-cone}
  In the literature people have considered more restrictive classes of allowable conical metrics by assuming more regularity on the regular metric potential (see \cite{Kalvin} section 2.1). The principle is to respect the correct scaling property under dilations centered at the cone points. Especially for the conical metrics with \textit{constant curvature}, the corresponding regular metric potentials (against coordinate metrics) are shown to be \textsf{dilation analytic}. This translates, in our notations, roughly into saying that~sufficiently near~$z_0$, $\varphi_{U,w}$ would be a real analytic function of~$|w|^{\gamma+1}$, extendable over a neighborhood of zero, in each direction respectively (this is clearly an enhancement of (\ref{eqn-scale-metric-pot})). However, due to the simplicity of our method we only need a very rough scaling property that is already deducible from lemma \ref{lemm-reg-metric-pot}, which we treat in the section below.
\end{def7}

\subsection{Dilation Properties at the Cone Point}\label{sec-geo-scaling}

     \begin{lemm}\label{lemm-scale-dist-sing}
Let~$\Sigma$ be a closed Riemann surface and~$\tilde{g}$ a generalized conformal metric on~$\Sigma$ with a conical singularity at~$z_0\in \Sigma$ of exponent~$\gamma$. Then 
	  \begin{equation}
	    \tilde{r}(z)=\frac{\me^{\varphi(z_0)}}{\gamma+1}r(z)^{\gamma+1}+\mathcal{O}\left(r^{3\gamma+3}\right),\quad\quad  z\to z_0,
	    \label{}
	  \end{equation}
	  where ~$\tilde{r}(z):=d_{\tilde{g}}(z,z_0)$,~$r(z):=|w(z)|$,~$z\ne z_0$, $\varphi:=\varphi_{U,w}$ as in lemma \ref{lemm-reg-metric-pot} and $w$ is the complex coordinate from lemma \ref{lemm-reg-metric-pot}. In particular, we see that~$\tilde{r}(z)\asymp r(z)^{\gamma+1}$ as~$z\to z_0$.
	\end{lemm}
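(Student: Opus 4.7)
The plan is to pass to the Troyanov complex coordinate $w$ provided by lemma \ref{lemm-reg-metric-pot}, so that near $z_0$ we may write $\tilde{g} = r^{2\gamma}e^{2\varphi(r,\theta)}(\dd r^2 + r^2\dd \theta^2)$ with $r = |w|$, together with the key regularity $\varphi(r,\theta) = \varphi(z_0) + \mathcal{O}(r^{2\gamma+2})$. The distance $\tilde{r}(z)$ is then the infimum of the $\tilde{g}$-length of piecewise $C^1$ curves from $z_0$ to $z$, and I will bound it from above and below with matching leading asymptotics.

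For the upper bound, I would simply compute the length of the radial segment $t \mapsto (tr_0, \theta_0)$ for $t \in [0,1]$, namely
\begin{equation*}
  \tilde{r}(z) \le \int_0^{r_0} s^{\gamma} e^{\varphi(s,\theta_0)}\,\dd s = e^{\varphi(z_0)}\int_0^{r_0} s^{\gamma}\bigl(1 + \mathcal{O}(s^{2\gamma+2})\bigr)\dd s = \frac{e^{\varphi(z_0)}}{\gamma+1}r_0^{\gamma+1} + \mathcal{O}(r_0^{3\gamma+3}),
\end{equation*}
using Taylor expansion of the exponential together with the $\mathcal{O}(r^{2\gamma+2})$ control on $\varphi-\varphi(z_0)$, and the fact that $\gamma > -1$ ensures $s^{\gamma}$ is integrable at the origin.

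For the lower bound, which will be the more delicate step, I would use a Banach-indicatrix / coarea argument. For any rectifiable curve $\alpha:[0,1]\to U$ with $\alpha(0) = z_0$ and $\alpha(1) = z$, letting $F(t) := r(\alpha(t))$, the length satisfies
\begin{equation*}
  L_{\tilde g}(\alpha) \ge \int_0^1 r(\alpha(t))^{\gamma} e^{\varphi(\alpha(t))} |F'(t)|\,\dd t \ge \int_0^{r_0} s^{\gamma}\Bigl(\inf_{\theta} e^{\varphi(s,\theta)}\Bigr) N_F(s)\,\dd s,
\end{equation*}
where $N_F(s) := \#F^{-1}(s)$ is the Banach indicatrix. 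Since $F$ is continuous with $F(0)=0$ and $F(1)=r_0$, the intermediate value theorem forces $N_F(s)\ge 1$ on $(0,r_0)$, while the bound $|\varphi(s,\theta)-\varphi(z_0)|\le C s^{2\gamma+2}$ gives $\inf_\theta e^{\varphi(s,\theta)}\ge e^{\varphi(z_0)}(1 - Cs^{2\gamma+2})$. Combining these two estimates produces the matching lower bound $\frac{e^{\varphi(z_0)}}{\gamma+1}r_0^{\gamma+1} - \mathcal{O}(r_0^{3\gamma+3})$, completing the proof.

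The main obstacle I anticipate is precisely this lower bound, since one has to rule out the possibility that a very wiggly curve in the $\theta$-direction could somehow shortcut the radial one; the indicatrix argument handles this cleanly because the radial factor $r^{\gamma}|r'|\,\dd t$ is what remains after discarding the manifestly nonnegative angular contribution $r^{2\gamma+2}(\theta')^2$ under the square root. The other sanity check is that $\gamma > -1$ makes all integrals convergent and that the error exponent $3\gamma+3 = 3(\gamma+1)$ is strictly larger than $\gamma+1$, so the stated asymptotic is genuinely nontrivial.
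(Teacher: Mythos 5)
Your proposal is correct, and its overall skeleton (sandwich $\me^{\varphi}$ between $\me^{\varphi(z_0)}\pm C r^{2\gamma+2}$ using the Troyanov estimate, then bound the distance from above by the radial segment and from below by a radial comparison) matches the paper's. The upper bound is verbatim the paper's computation. Where you genuinely diverge is the justification of the lower bound: the paper compares $\tilde{g}$ with the rotationally symmetric metric $g_1=r^{2\gamma}(\me^{\varphi(z_0)}-\alpha r^{2\gamma+2})^2\,|\dd w|^2$ and invokes the fact that the radial curve reparametrizes into a minimizing geodesic of $g_1$ (via the conformal change formula for the covariant derivative), whereas you discard the manifestly nonnegative angular contribution under the square root and control $\int G(F(t))\,|F'(t)|\,\dd t$ from below by $\int_0^{r_0}G(s)\,\dd s$ via the Banach indicatrix and the intermediate value theorem. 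Your route is more elementary and arguably more robust: it needs only $G\ge 0$ and monotonicity of the radius along the curve in the sense of the indicatrix, and it sidesteps the (slightly terse, in the paper) claim that radial lines minimize for a radial conformal metric — indeed your argument is essentially a proof of that claim. One small point worth making explicit in both arguments: competitor curves that exit the coordinate neighborhood $U_1$ must be excluded, which is standard since any such curve has $\tilde{g}$-length bounded below by a positive constant independent of $z$, hence cannot be competitive once $z$ is close enough to $z_0$.
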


	\begin{proof}
	  By (\ref{eqn-scale-metric-pot}) we have
	  \begin{equation}
	    0<\me^{\varphi(z_0)}-\alpha r(z)^{2\gamma+2}\le \me^{\varphi(z)}\le \me^{\varphi(z_0)}+\alpha r(z)^{2\gamma+2}
	    \label{eqn-conf-factor-sandwich}
	  \end{equation}
	  for some~$\alpha>0$, in some smaller neighborhood~$U_1\subset U$. Now fix~$z\in U_1$, let~$c(s)=w^{-1}(s w(z)/|w(z)|)$ be defined for~$s\in [0,r(z)]$ (the unit speed geodesic under~$|\dd w|^2$). Then~$\tilde{r}(z)$ is majorized by the length of~$c$ under~$\tilde{g}$, namely
	  \begin{equation}
	    \tilde{r}(z)\le \int_{0}^{r}|c'(s)|_{\tilde{g}}\,\dd s=\int_{0}^{r} s^{\gamma}\me^{\varphi(c(s))}\,\dd s\le \frac{\me^{\varphi(z_0)}r^{\gamma+1}}{\gamma+1}+\alpha \frac{ r^{3\gamma+3}}{3\gamma+3},
	    \label{eqn-con-dist-finite}
	  \end{equation}
	  by (\ref{eqn-conf-factor-sandwich}). On the other hand, the length under~$\tilde{g}$ of any curve inside~$U_1$ is bounded below by its length under the metric
	  \begin{equation}
	    g_1\defeq r^{2\gamma}\big( \me^{\varphi(z_0)}-\alpha r^{2\gamma+2} \big)^2\cdot |\dd w|^2 =f(r)^2\cdot |\dd w|^2.
	    \label{}
	  \end{equation}
	  defined on~$U_1$. By (\ref{eqn-scale-cov-deri}) we know that~$c$ would now reparametrize into a geodesic under~$g_1$ (minimizing since $g_1$ is radial with respect to the geodesic coordinates of $g$). Therefore we obtain
	  \begin{equation}
	    \tilde{r}(z)\ge \int_{0}^{r} |c'(s)|_{g_1}\,\dd s=\frac{\me^{\varphi(z_0)}r^{\gamma+1}}{\gamma+1}-\frac{\alpha r^{3\gamma+3}}{3\gamma+3}.
	    \label{}
	  \end{equation}
	  This gives the result.
	\end{proof}

    \begin{def7}
  In particular, (\ref{eqn-con-dist-finite}) also shows~$\tilde{r}(z)<\infty$, which is not necessarily true \textit{a priori}.
\end{def7}

 From this lemma it follows immediately the next two corollaries.

 	\begin{corr}
	  With the same set-up and notation as lemma \ref{lemm-scale-dist-sing}, we have
	  \begin{equation}
	    r(z)=(\gamma+1)^{\frac{1}{\gamma+1}}\me^{-\frac{\varphi(z_0)}{\gamma+1}}\tilde{r}(z)^{\frac{1}{\gamma+1}} +\mathcal{O}(\tilde{r}^{2+\frac{1}{\gamma+1}}),
	    \label{}
	  \end{equation}
	  as~$z\to z_0$.
	\end{corr}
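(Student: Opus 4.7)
The plan is to simply invert the asymptotic expansion of Lemma \ref{lemm-scale-dist-sing}. Write $A := \me^{\varphi(z_0)}/(\gamma+1)$, so that Lemma \ref{lemm-scale-dist-sing} reads
\[
\tilde{r}(z) = A\, r(z)^{\gamma+1}\bigl(1 + \mathcal{O}(r^{2\gamma+2})\bigr), \qquad z\to z_0.
\]
Dividing by $A$ and taking the $(\gamma+1)$-th root,
\[
(\tilde{r}/A)^{\frac{1}{\gamma+1}} = r(z)\bigl(1 + \mathcal{O}(r^{2\gamma+2})\bigr)^{\frac{1}{\gamma+1}} = r(z) + \mathcal{O}(r^{2\gamma+3}),
\]
where in the second equality I use the Taylor expansion $(1+u)^{1/(\gamma+1)} = 1 + \mathcal{O}(u)$ valid for $u\to 0$, which applies since $r\to 0$.

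Solving for $r$ gives
\[
r(z) = A^{-\frac{1}{\gamma+1}} \tilde{r}^{\frac{1}{\gamma+1}} + \mathcal{O}(r^{2\gamma+3}),
\]
and $A^{-1/(\gamma+1)} = (\gamma+1)^{1/(\gamma+1)} \me^{-\varphi(z_0)/(\gamma+1)}$ is exactly the constant claimed. It remains to convert the remainder $\mathcal{O}(r^{2\gamma+3})$ into a remainder in $\tilde{r}$. For this I invoke the tail of Lemma \ref{lemm-scale-dist-sing}, which says $\tilde{r}\asymp r^{\gamma+1}$, i.e.\ $r \asymp \tilde{r}^{1/(\gamma+1)}$. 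Raising both sides to the power $2\gamma+3$ yields
\[
r^{2\gamma+3} \asymp \tilde{r}^{\frac{2\gamma+3}{\gamma+1}} = \tilde{r}^{2+\frac{1}{\gamma+1}},
\]
which is the desired error order.

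The only minor subtlety is the Taylor expansion step, which requires $\gamma > -1$ so that $1/(\gamma+1)$ is a bona fide positive exponent; this is guaranteed by our standing assumption on the exponent of a conical singularity. Otherwise the argument is purely algebraic inversion, with no further obstacle.
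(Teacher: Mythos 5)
Your proposal is correct and is precisely the "simple order analysis using Newton's binomial formula" that the paper's one-line proof alludes to: factoring the error as $A r^{\gamma+1}(1+\mathcal{O}(r^{2\gamma+2}))$, taking the $(\gamma+1)$-th root via the binomial expansion, and converting the remainder using $\tilde{r}\asymp r^{\gamma+1}$. The arithmetic of the constant and of the exponent $(2\gamma+3)/(\gamma+1)=2+\tfrac{1}{\gamma+1}$ both check out, so nothing further is needed.
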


    \begin{proof}
        This follows from a simple order analysis using Newton's binomial formula.
    \end{proof}

	\begin{corr}\label{cor-ratio-radius}
	  Pick~$h\in C^{\infty}(\Sigma)$ and denote by~$\tilde{r}_h(z)$ the distance from~$z$ to~$z_0$ under the scaled metric
	  \begin{equation}
	    \tilde{g}_h|_U\defeq d_g(\bullet,z_0)^{2\gamma}\me^{2(\varphi+h)}\cdot g|_U.
	    \label{}
	  \end{equation}
	  Denote by~$\delta(\varepsilon)$ and~$\delta_h(\varepsilon)$ metric radii under~$g$ of the~$\varepsilon$-metric disks under~$\tilde{g}$ and~$\tilde{g}_h$ respectively centered at~$z_0$. Then we have
	  \begin{equation}
	    \frac{\delta(\varepsilon)}{\delta_h(\varepsilon)}\lto \exp\left(\frac{h(z_0)}{\gamma+1}\right),
	    \label{}
	  \end{equation}
	  as~$\varepsilon\to 0^+$. \hfill~$\Box$
	\end{corr}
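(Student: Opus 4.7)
The strategy is to estimate both $\delta(\varepsilon)$ and $\delta_h(\varepsilon)$ in a common coordinate chart, namely the Troyanov chart $w$ for $\tilde{g}$ supplied by lemma \ref{lemm-reg-metric-pot}, centered at $z_0$. In this chart $\tilde{g} = |w|^{2\gamma}\me^{2\varphi}|\dd w|^2$ with $\varphi$ satisfying (\ref{eqn-scale-metric-pot}); since $\tilde{g}_h = \me^{2h}\tilde{g}$ by the very definition of $\tilde{g}_h$, automatically $\tilde{g}_h = |w|^{2\gamma}\me^{2(\varphi+h)}|\dd w|^2$ in the same chart. The reference metric $g$ is smooth in this chart, so $g = \me^{2\lambda}|\dd w|^2$ for some smooth $\lambda$, which yields $d_g(z, z_0) \sim \me^{\lambda(z_0)}|w(z)|$ as $z \to z_0$.

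Lemma \ref{lemm-scale-dist-sing} applied directly to $\tilde{g}$ gives $\tilde{r}(z) = \frac{\me^{\varphi(z_0)}}{\gamma+1}|w|^{\gamma+1} + \mathcal{O}(|w|^{3\gamma+3})$ uniformly in the angular direction of $w$. To obtain the analogous expansion for $\tilde{g}_h$, I rerun the proof of lemma \ref{lemm-scale-dist-sing} with $\varphi + h$ in place of $\varphi$. Although $\varphi + h$ fails the sharp scaling (\ref{eqn-scale-metric-pot})---because $h$ is only smooth and so contributes an $\mathcal{O}(|w|)$ rather than $\mathcal{O}(|w|^{2\gamma+2})$ perturbation---we still have $\me^{(\varphi+h)(z)} = \me^{(\varphi+h)(z_0)}(1+\mathcal{O}(|w|))$ uniformly in direction. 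The radial length upper bound and the rotationally symmetric lower-bound metric $g_1$ used in that proof therefore continue to sandwich $\tilde{r}_h$ and yield $\tilde{r}_h(z) = \frac{\me^{\varphi(z_0)+h(z_0)}}{\gamma+1}|w|^{\gamma+1} + \mathcal{O}(|w|^{\gamma+2})$; the error exponent is worse but the leading coefficient is precisely what is needed for the ratio.

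Inverting both asymptotics on the respective boundaries $\tilde{r} = \varepsilon$ and $\tilde{r}_h = \varepsilon$, and combining with $d_g(z, z_0) \sim \me^{\lambda(z_0)}|w(z)|$, yields uniformly in angle
\begin{align*}
d_g(z, z_0) &\sim \me^{\lambda(z_0)}\bigl((\gamma+1)\varepsilon\bigr)^{1/(\gamma+1)}\me^{-\varphi(z_0)/(\gamma+1)} && \text{for } z \in \partial B_{\varepsilon}(z_0, \tilde{g}), \\
d_g(z, z_0) &\sim \me^{\lambda(z_0)}\bigl((\gamma+1)\varepsilon\bigr)^{1/(\gamma+1)}\me^{-(\varphi(z_0)+h(z_0))/(\gamma+1)} && \text{for } z \in \partial B_{\varepsilon}(z_0, \tilde{g}_h).
\end{align*}
This uniformity means $\delta(\varepsilon)$ and $\delta_h(\varepsilon)$ inherit these leading behaviors regardless of the precise interpretation of ``$g$-radius'' (sup, inf, etc.\ of $d_g(\cdot, z_0)$ over the boundary), and the common factor $\me^{\lambda(z_0)}((\gamma+1)\varepsilon)^{1/(\gamma+1)}$ cancels in the ratio, leaving exactly $\exp(h(z_0)/(\gamma+1))$. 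The main obstacle is the adaptation of lemma \ref{lemm-scale-dist-sing} to a potential with merely Lipschitz scaling at $z_0$; but since the leading coefficient in the expansion of $\tilde{r}_h$ is determined purely by the pointwise value $\me^{(\varphi+h)(z_0)}$---i.e.\ by continuity alone---the lemma's bracketing argument carries through with only a weaker (still subleading) error exponent.
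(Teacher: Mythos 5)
Your proof is correct and follows the paper's intended route: the corollary is stated as an immediate consequence of lemma \ref{lemm-scale-dist-sing}, applied both to $\tilde g$ and to $\tilde g_h=\me^{2h}\tilde g$ (whose regular potential is $\varphi+h$), and your justification that the bracketing argument of that lemma survives the weaker modulus of continuity of $\varphi+h$ --- only the pointwise value at $z_0$ enters the leading coefficient, and the common factor in the two $g$-radii cancels in the ratio --- is precisely what makes the deduction immediate. (A minor quibble: for $-1<\gamma\le -1/2$ the error exponent is $\gamma+1+(2\gamma+2)$ rather than $\gamma+2$, but in either case the error is $o(|w|^{\gamma+1})$, which is all that is needed.)
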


\subsection{Log-divergent Integrals}\label{sec-int-by-part}
	In this section we collect some basic computations of ``toy integrals'' involving functions with log-divergent singularities, which will nevertheless play a fundamental role in the proofs of the main results of this paper.

	Throughout this section, we assume~$(\Sigma,g)$ is a Riemannian surface with smooth metric~$g$, let~$z_0\in \Sigma$ and let~$\sigma\in C^{\infty}(\Sigma\setminus\{z_0\})$ such that in a neighbourhood $U$ of $z_0$,
	\begin{equation}
	  \sigma(z)=\gamma \log d_g(z,z_0) +\varphi(z)
	  \label{eqn-local-form-conf-factor}
	\end{equation}
	for some~$\gamma>-1$, where $\varphi$ is a function satisfying the requirements for the metric potential in definition \ref{def-singularities} as well as (\ref{eqn-scale-metric-pot}) with~$|\dd w|^2$ replaced by~$g$. Formula (\ref{eqn-limit-lap-smooth-against-con}), however, needs much weaker assumptions. We start by recalling the following basic fact.
	\begin{lemm}\label{lemm-bounded-lap-log}
	  In the set-up as above, the function~$\Delta_g\log d_g(\bullet,z_0)$ is smooth and uniformly bounded near but not coincident with $z_0$.
	\end{lemm}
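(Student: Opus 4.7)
The plan is to work in geodesic polar coordinates $(r,\theta)$ centered at $z_0$, valid on a geodesic ball $B_\rho(z_0)$ for any $\rho$ smaller than the injectivity radius (positive since $g$ is smooth). In such coordinates one has $d_g(z,z_0)=r(z)$ and, by the Gauss lemma, the metric takes the warped-product form
\begin{equation*}
g=\dd r^2+J(r,\theta)^2\,\dd\theta^2,
\end{equation*}
with $J$ smooth on $[0,\rho)\times \mb{S}^1$ and the radial Jacobi field satisfying $J(0,\theta)=0$, $\partial_r J(0,\theta)=1$. Smoothness of $\Delta_g\log d_g(\bullet,z_0)$ on the punctured neighborhood is then automatic: $d_g(\bullet,z_0)$ is itself smooth and positive away from $z_0$ (within the injectivity radius), so $\log d_g(\bullet,z_0)$ and its Laplacian are smooth on $B_\rho(z_0)\setminus \{z_0\}$. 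Only the uniform boundedness as $r\to 0^+$ requires a short argument.

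Next, I would apply the Laplacian formula in this chart to the purely radial function $\log r$:
\begin{equation*}
\Delta_g\log r=\frac{1}{J}\partial_r\!\left(J\cdot \partial_r\log r\right)=\frac{\partial_r J}{r\,J}-\frac{1}{r^2}.
\end{equation*}
The radial Jacobi field equation $\partial_r^2 J+(K_g\circ \gamma_\theta)J=0$ along each radial geodesic $\gamma_\theta$, together with the initial conditions above, yields the Taylor expansion
\begin{equation*}
J(r,\theta)=r-\tfrac{K_g(z_0)}{6}r^3+O(r^4),\qquad \partial_r J(r,\theta)=1-\tfrac{K_g(z_0)}{2}r^2+O(r^3),
\end{equation*}
uniformly in $\theta$. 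Substituting and expanding, the ostensibly singular $1/r^2$ contributions cancel, leaving
\begin{equation*}
\Delta_g\log r=-\tfrac{K_g(z_0)}{3}+O(r),\qquad r\to 0^+,
\end{equation*}
which delivers the uniform boundedness on a punctured neighborhood of $z_0$.

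There is no substantive obstacle here; the only content is the classical Jacobi-field expansion of $J$, which produces exactly the cancellation needed to tame the naive $-1/r^2$ term. The mild subtlety — making sure this argument is global enough to give \emph{uniform} boundedness on some full punctured neighborhood — is resolved by the uniformity in $\theta$ of the remainder in the Jacobi expansion, which is a consequence of smoothness of $J$ on the compact cylinder $[0,\rho/2]\times \mb{S}^1$.
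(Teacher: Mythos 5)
Your proposal is correct and follows essentially the same route as the paper's proof: geodesic polar coordinates, the radial Laplacian formula $\Delta_g\log r=\frac{\partial_r J}{rJ}-\frac{1}{r^2}$, and the Jacobi-field expansion $J=r+\mathcal{O}(r^3)$ to cancel the $1/r^2$ singularity. The only difference is that you carry the expansion one order further to identify the limit $-K_g(z_0)/3$, which is more than the lemma requires but harmless.
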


    \begin{proof}
        See appendix \ref{sec-app-poin-lelong}.
    \end{proof}

 \begin{def7}
	  A related classical result says that the perimeter~$\ell_g(\partial B_r(z_0))$ of a geodesic circle centered at~$z_0\in \Sigma$ of radius~$r$ has an asymptotics
	  \begin{equation}
	    \ell_g(\partial B_r(z_0))=2\pi r-\frac{\pi}{3}r^3 K_g(z_0)+o(r^3),\quad r\to 0^+.
	    \label{eqn-asymp-peri-circle}
	  \end{equation}
	  Here as above,~$K_g(z_0)$ is the Gauss curvature of~$g$ at~$z_0$. See, for example, \cite{dC2} page 296.
	\end{def7}

 \begin{lemm}\label{lemm-basic-renorm-log}
	  In the set-up as above, let~$\delta>0$. Then we have
	  \begin{equation}
	    \int_{\Sigma\setminus B_{\delta}(z_0,g)}^{} |\nabla_g \sigma|_g^2\,\dd V_g=-2\pi\gamma^2\log(\delta)-2\pi\gamma \varphi(z_0)
        -\int_{\Sigma\setminus B_{\delta}}^{} \sigma\Delta_g \sigma\,\dd V_g
        +\mathcal{O}(\delta^{2\min\{\gamma,0\}+2}\log \delta)
	    \label{}
	  \end{equation}
	  as~$\delta\to 0^+$, where~$B_{\delta}(z_0,g)$ denotes the geodesic disk around~$z_0$ of radius~$\delta$ under the metric~$g$.
	\end{lemm}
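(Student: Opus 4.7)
The natural strategy is to apply Green's identity on the domain $\Sigma\setminus B_\delta(z_0,g)$, where $\sigma$ is smooth, and then extract the asymptotics of the resulting boundary integral as $\delta\to 0^+$. Since the outward normal to $\Sigma\setminus B_\delta$ along the inner boundary $\partial B_\delta(z_0,g)$ points \emph{toward} $z_0$, integration by parts yields
$$\int_{\Sigma\setminus B_\delta} |\nabla_g\sigma|_g^2\,\dd V_g = -\int_{\Sigma\setminus B_\delta} \sigma\,\Delta_g\sigma\,\dd V_g - \int_{\partial B_\delta} \sigma\,\partial_r\sigma\,\dd\ell_g,$$
with $r=d_g(\cdot,z_0)$. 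The bulk term is already what appears in the stated formula, so everything reduces to expanding the boundary integral.

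I would plug the local form $\sigma = \gamma\log r + \varphi$ into $\sigma\,\partial_r\sigma$ and split it into four pieces: $\gamma^2(\log r)/r$, $\gamma\varphi/r$, $\gamma(\log r)\,\partial_r\varphi$, and $\varphi\,\partial_r\varphi$. On $\partial B_\delta$ the radius is the constant $\delta$, so the first piece contributes $\gamma^2(\log\delta/\delta)\cdot\ell_g(\partial B_\delta)$, and by the classical expansion $\ell_g(\partial B_\delta)=2\pi\delta+\mathcal{O}(\delta^3)$ recalled in the remark just above the lemma this gives $2\pi\gamma^2\log\delta+\mathcal{O}(\delta^2\log\delta)$. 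For the second piece I use $\varphi(z)=\varphi(z_0)+\mathcal{O}(r^{2\gamma+2})$ from (3.12), isolating a main contribution $2\pi\gamma\varphi(z_0)$ plus a remainder of order $\delta^{2\gamma+2}$ (coming from the $\varphi(z_0)$ part combined with the higher-order perimeter correction, and from the fluctuation of $\varphi$ around $\varphi(z_0)$). The remaining two pieces involve $\partial_r\varphi=\mathcal{O}(r^{2\gamma+1})$ from the same lemma and, once integrated over a circle of perimeter $\asymp\delta$, contribute errors of size $\mathcal{O}(\delta^{2\gamma+2}\log\delta)$ and $\mathcal{O}(\delta^{2\gamma+2})$ respectively. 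Taking the smallest exponent among all the error terms, $\min\{2,2\gamma+2\}=2\min\{\gamma,0\}+2$, reproduces the stated remainder exactly; flipping the sign on the boundary contribution then produces the two leading constants $-2\pi\gamma^2\log\delta$ and $-2\pi\gamma\varphi(z_0)$ with the correct signs.

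The only subtlety concerns the radial variable. The regularity estimates (3.12) were originally stated using the Troyanov coordinate $w$ of Lemma 4.3, in which the reference metric is locally $|\dd w|^2$, whereas here $r$ is the geodesic distance under an arbitrary smooth $g$. Since $g$ is smooth at $z_0$, however, $r$ and $|w|$ are comparable up to a factor $1+\mathcal{O}(|w|^2)$, so all order statements $\mathcal{O}(r^{2\gamma+k})$ translate verbatim between the two conventions; this is precisely what is meant in the setup of the lemma by ``(3.12) with $|\dd w|^2$ replaced by $g$''. Beyond this elementary bookkeeping, the entire argument is integration by parts plus three Landau estimates, and I do not foresee any essential obstacle.
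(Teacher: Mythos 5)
Your proposal is correct and follows essentially the same route as the paper: Green--Stokes on $\Sigma\setminus B_{\delta}$, the four-fold splitting of the boundary term $\sigma\,\partial_r\sigma$, the perimeter expansion $\ell_g(\partial B_{\delta})=2\pi\delta+\mathcal{O}(\delta^3)$, and the Troyanov estimates on $\varphi$ and $\partial_r\varphi$ to control the remainders, with the same bookkeeping $\min\{2,2\gamma+2\}=2\min\{\gamma,0\}+2$. Your closing remark on transferring the order estimates from the Troyanov coordinate $|w|$ to the geodesic distance under $g$ is a point the paper folds into its standing hypotheses rather than re-arguing, but it is consistent with the setup stated just before the lemma.
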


	\begin{proof}
	  Denote~$r(z):=d_g(z,z_0)$. Integrating by parts (Green-Stokes formula) we have
	  \begin{align*}
	    \textrm{LHS}&=\int_{\partial B_{\delta}}^{} \sigma(-\partial_r \sigma)\,\dd\ell_{g} -\int_{\Sigma\setminus B_{\delta}}^{} \sigma\Delta_g \sigma\,\dd V_g \\
	    &=-\gamma^2\underbrace{\int_{\partial B_{\delta}}^{} \log r(\partial_r \log r)\,\dd\ell_{g}}_{A}
	    -\gamma\underbrace{\int_{\partial B_{\delta}}^{} \varphi(\partial_r \log r)\,\dd\ell_{g}}_{B}
	    -\gamma\underbrace{\int_{\partial B_{\delta}}^{} \log r(\partial_r \varphi)\,\dd\ell_{g}}_{C} \\
	    &\quad\quad -\underbrace{\int_{\partial B_{\delta}}^{} \varphi(\partial_r \varphi)\,\dd\ell_{g}}_{D}-\int_{\Sigma\setminus B_{\delta}}^{} \sigma\Delta_g \sigma\,\dd V_g.
	  \end{align*}
	  By the assumptions (\ref{eqn-scale-metric-pot}) on $\varphi$ and by (\ref{eqn-asymp-peri-circle}), we have
	  \begin{align}
	    A&=\log\delta\cdot\frac{1}{\delta}(2\pi \delta+\mathcal{O}(\delta^3))=2\pi\log\delta+\mathcal{O}(\delta^2\log \delta),\\
	    B&=\frac{1}{\delta}\int_{\partial B_{\delta}}^{} \varphi\,\dd\ell_{g} = 2\pi \varphi(z_0)+\mathcal{O}(\delta^2),\\
	    C&=\log\delta\int_{\partial B_{\delta}}^{} \partial_r\varphi\,\dd\ell_{g}=\mathcal{O}(\delta^{2\gamma+2}\log \delta),\\
	    D&=\mathcal{O}(\delta^{2\gamma+2}).
	    \label{}
	  \end{align}
	  Adding them all up, we obtain the result. Note $2\gamma+2$ may be small but is positive, so $\mathcal{O}(\delta^{2\gamma+2}\log \delta)=o(1)$.
	\end{proof}

  \begin{corr}\label{cor-quad-blow-tilde}
  In the set-up as above, let~$\tilde{g}$ be a generalised conformal metric representing $D=\gamma z_0$ with $\gamma>-1$, written in the form (\ref{eqn-conic-sing-local-form}). Suppose~$g$ is a smooth conformal metric on~$\Sigma$ and~$\tilde{g}=\me^{2\sigma}g$ for some~$\sigma\in C^{\infty}(\Sigma\setminus \{z_0\})$, so that $\sigma$ is of the form (\ref{eqn-local-form-conf-factor}). Then
  \begin{align}
    \int_{\Sigma\setminus B_{\varepsilon}(z_0,\tilde{g})}^{} |\nabla_g \sigma|^2\,\dd V_g&= -\frac{2\pi \gamma^2}{\gamma+1}\big[\log(\varepsilon)+\log(\gamma+1)-\varphi(z_0)\big]
    -2\pi\gamma \varphi(z_0)
        -\int_{\Sigma\setminus z_0}^{} \sigma\Delta_g \sigma\,\dd V_g
    +o(1)
    \label{}
  \end{align}
  as~$\varepsilon\to 0$, where~$B_{\varepsilon}(z_0,\tilde{g})$ denotes the geodesic ball around~$z_0$ of radius~$\varepsilon$ under the metric~$\tilde{g}$.
\end{corr}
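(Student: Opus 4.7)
The plan is to reduce the statement to Lemma \ref{lemm-basic-renorm-log} (which uses $g$-geodesic balls as cut-offs) by replacing $B_\varepsilon(z_0,\tilde g)$ with an appropriate $g$-ball of radius
\begin{equation*}
 \delta_*(\varepsilon):=\bigl((\gamma+1)e^{-\varphi(z_0)}\varepsilon\bigr)^{\frac{1}{\gamma+1}}.
\end{equation*}
The relevance of this choice comes from Lemma \ref{lemm-scale-dist-sing} (whose proof goes through verbatim with the reference flat coordinate metric $|\dd w|^2$ replaced by $g$, since only the decay (\ref{eqn-scale-metric-pot}) of $\varphi$ is used, and this hypothesis is part of our set-up here). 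That lemma gives the uniform-in-direction asymptotic $\tilde r(z)=\tfrac{e^{\varphi(z_0)}}{\gamma+1}r(z)^{\gamma+1}\bigl(1+O(r^{2\gamma+2})\bigr)$, so the condition $\tilde r(z)=\varepsilon$ forces $r(z)=\delta_*(\varepsilon)(1+o(1))$ uniformly in $\theta$ as $\varepsilon\to 0$. I thus obtain a sandwich $B_{\delta_-(\varepsilon)}(z_0,g)\subset B_\varepsilon(z_0,\tilde g)\subset B_{\delta_+(\varepsilon)}(z_0,g)$ with $\delta_\pm(\varepsilon)/\delta_*(\varepsilon)\to 1$.

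Applying Lemma \ref{lemm-basic-renorm-log} at $\delta=\delta_*(\varepsilon)$ produces
\begin{equation*}
 \int_{\Sigma\setminus B_{\delta_*(\varepsilon)}(z_0,g)}|\nabla_g\sigma|_g^2\,\dd V_g = -2\pi\gamma^2\log\delta_*(\varepsilon)-2\pi\gamma\varphi(z_0)-\int_{\Sigma\setminus B_{\delta_*(\varepsilon)}}\sigma\Delta_g\sigma\,\dd V_g + o(1),
\end{equation*}
and substituting $\log\delta_*(\varepsilon)=\tfrac{1}{\gamma+1}[\log\varepsilon+\log(\gamma+1)-\varphi(z_0)]$ yields precisely the advertised logarithmic prefactor $-\tfrac{2\pi\gamma^2}{\gamma+1}[\log\varepsilon+\log(\gamma+1)-\varphi(z_0)]$.

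Two loose ends remain. First, I need to replace $B_{\delta_*(\varepsilon)}(z_0,g)$ by $B_\varepsilon(z_0,\tilde g)$ on the left-hand side; the symmetric difference is contained in the $g$-annulus $\{\delta_-(\varepsilon)\le r\le\delta_+(\varepsilon)\}$. Writing $\nabla_g\sigma=\gamma\nabla_g\log r+\nabla_g\varphi$ and invoking the bound $\partial_r\varphi=O(r^{2\gamma+1})$ from (\ref{eqn-scale-metric-pot}), I get the pointwise expansion $|\nabla_g\sigma|_g^2=\gamma^2/r^2+O(r^{2\gamma})$, so the annular integral is bounded by $2\pi\gamma^2\log(\delta_+/\delta_-)+O(\delta_+^{2\gamma+2})=o(1)$. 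Second, I need to upgrade the cut-off $\sigma\Delta_g\sigma$-integral to one over all of $\Sigma\setminus\{z_0\}$, for which it suffices to establish local integrability of $\sigma\Delta_g\sigma$ near $z_0$: expanding the product $(\gamma\log r+\varphi)(\gamma\Delta_g\log r+\Delta_g\varphi)$, I would bound $\Delta_g\log r$ via Lemma \ref{lemm-bounded-lap-log}, use continuity of $\varphi$ at $z_0$, and derive from the Liouville equation (\ref{eqn-liouville-eqn}) together with the admissibility hypothesis (bounded $K_{\tilde g}$ off $z_0$) the estimate $\Delta_g\varphi=O(r^{2\gamma})+(\text{bounded})$. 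Pairing these with the local integrability of $\log r$ and $r^{2\gamma}\log r$ then gives integrability, and dominated convergence finishes the job.

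The main obstacle is the first loose end: the sandwich has to be tight enough that $\delta_+/\delta_-\to 1$, because a merely bounded ratio would leave an $O(1)$ residual $2\pi\gamma^2\log(\delta_+/\delta_-)$ of exactly the same order as the terms we are trying to compute. Here the strength of Lemma \ref{lemm-scale-dist-sing} is crucial: its $O(r^{3\gamma+3})$ remainder is uniform in the angular variable, yielding the required uniform ratio convergence. Everything else is bookkeeping with integration by parts and dominated convergence.
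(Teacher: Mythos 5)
Your proposal is correct and follows essentially the same route as the paper, whose proof is the one-line observation that $B_{\varepsilon}(z_0,\tilde g)$ has $g$-radius $\sim((\gamma+1)\varepsilon e^{-\varphi(z_0)})^{1/(\gamma+1)}$ by lemma \ref{lemm-scale-dist-sing}, after which lemma \ref{lemm-basic-renorm-log} is applied. The two loose ends you tie up --- the vanishing of the Dirichlet energy on the comparison annulus with ratio $\delta_+/\delta_-\to 1$ (which is exactly lemma \ref{lemm-quadratic-sing} with $Q=1$) and the local integrability of $\sigma\Delta_g\sigma$ via the bounded-curvature hypothesis --- are precisely the details the paper leaves implicit, so your write-up is a faithful, more careful version of the intended argument.
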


	\begin{proof}
	  This is because~$B_{\varepsilon}(z_0,\tilde{g})$ has radius~$\delta(\varepsilon)\sim((\gamma+1)\varepsilon \me^{-\varphi(z_0)})^{1/(\gamma+1)}$ under the metric~$g$ by lemma \ref{lemm-scale-dist-sing}.
	\end{proof}

\begin{lemm}\label{lemm-linear-sing-green-sto}
	  In the set-up as lemma \ref{lemm-basic-renorm-log}, pick further~$h\in C^{\infty}(\Sigma)$. Then we have
	  \begin{align}
	    \lim_{\delta\to 0^+}\int_{\Sigma\setminus  B_{\delta}(z_0,g)}^{} \big(
    \ank{\nabla_g h,\nabla_g\sigma}_g +(\Delta_g h)\sigma\big)\,\dd V_g &=0, \label{eqn-limit-lap-smooth-against-con}\\
\lim_{\delta\to 0^+}\int_{\Sigma\setminus  B_{\delta}(z_0,g)}^{} \big(
    \ank{\nabla_g h,\nabla_g\sigma}_g +h(\Delta_g\sigma)\big)\,\dd V_g &=-2\pi\gamma h(z_0).
	    \label{}
	  \end{align}
   Here for (\ref{eqn-limit-lap-smooth-against-con}) we only need the weaker assumption that~$\varphi$ be bounded over~$U$ in the expression (\ref{eqn-local-form-conf-factor}) for~$\sigma$. Moreover, the 2-form~$(
    \ank{\nabla_g h,\nabla_g\sigma}_g +h(\Delta_g\sigma))\dd V_g$ is conformally invariant, that is,
    \begin{equation}
      \big(
      \sank{\nabla_{\me^{2\varphi}g} h,\nabla_{\me^{2\varphi}g}\sigma}_{\me^{2\varphi}g} +h(\Delta_{\me^{2\varphi}g}\sigma)\big)\dd V_{\me^{2\varphi}g} =\big(
    \ank{\nabla_g h,\nabla_g\sigma}_g +h(\Delta_g\sigma)\big)\dd V_g
      \label{eqn-conf-inv-2-form}
    \end{equation}
    for any~$\varphi\in C^{\infty}(\Sigma\setminus \{z_0\})$. Therefore we have in particular
    \begin{equation}
      \lim_{\varepsilon\to 0^+}\int_{\Sigma\setminus  B_{\delta}(z_0,g)}^{} \big(
      \sank{\nabla_{\tilde{g}} h,\nabla_{\tilde{g}}\sigma}_{\tilde{g}} +h(\Delta_{\tilde{g}}\sigma)\big)\,\dd V_{\tilde{g}} =-2\pi\gamma h(z_0),
      \label{}
    \end{equation}
    where~$\tilde{g}$ is as defined in corollary \ref{cor-quad-blow-tilde}.
	\end{lemm}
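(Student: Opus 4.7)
The strategy is to apply Green--Stokes on the smooth compact region $\Omega_\delta:=\Sigma\setminus B_\delta(z_0,g)$, where both $h$ and $\sigma$ are smooth. The outward conormal of $\Omega_\delta$ on the boundary circle $\partial B_\delta(z_0,g)$ points towards $z_0$, hence equals $-\partial_r$ with $r=d_g(\bullet,z_0)$. For any smooth $u,v$ on $\Omega_\delta$ this yields
\begin{equation*}
\int_{\Omega_\delta}\big(u\,\Delta_g v+\ank{\nabla_g u,\nabla_g v}_g\big)\,\dd V_g=-\int_{\partial B_\delta(z_0,g)}u\,(\partial_r v)\,\dd\ell_g.
\end{equation*}
Taking $(u,v)=(\sigma,h)$ (resp.\ $(u,v)=(h,\sigma)$) rewrites (\ref{eqn-limit-lap-smooth-against-con}) and its sibling as asymptotic analyses of boundary integrals over the small geodesic circle $\partial B_\delta(z_0,g)$, where the crucial input is the perimeter asymptotic $\ell_g(\partial B_\delta)=2\pi\delta+\mathcal{O}(\delta^3)$ recalled earlier in the section.

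\textbf{Asymptotic estimates.} For (\ref{eqn-limit-lap-smooth-against-con}) one must estimate $-\int_{\partial B_\delta}\sigma\,(\partial_r h)\,\dd\ell_g$: with $\varphi$ only assumed bounded, we have $|\sigma|=\mathcal{O}(|\log\delta|)$ on $\partial B_\delta$, and $\partial_r h$ is bounded because $h\in C^\infty(\Sigma)$; combined with a perimeter of order $\delta$, the whole integral is $\mathcal{O}(\delta|\log\delta|)\to 0$. This is the precise spot where the weaker hypothesis on $\varphi$ is used. For the second identity I decompose $\partial_r\sigma=\gamma/r+\partial_r\varphi$, invoke lemma \ref{lemm-reg-metric-pot} to obtain $\partial_r\varphi=\mathcal{O}(r^{2\gamma+1})$ on $\partial B_\delta$, and Taylor-expand $h$ together with the perimeter asymptotic to get $\int_{\partial B_\delta}h\,\dd\ell_g=2\pi\delta\,h(z_0)+\mathcal{O}(\delta^2)$. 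Therefore
\begin{equation*}
-\int_{\partial B_\delta}h\,(\partial_r\sigma)\,\dd\ell_g=-\tfrac{\gamma}{\delta}\big(2\pi\delta\,h(z_0)+\mathcal{O}(\delta^2)\big)+\mathcal{O}(\delta^{2\gamma+2})=-2\pi\gamma\,h(z_0)+o(1),
\end{equation*}
the remainder vanishing because $\gamma>-1$ forces $2\gamma+2>0$.

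\textbf{Conformal invariance and the ``in particular''.} The pointwise identity (\ref{eqn-conf-inv-2-form}) follows straight from the scaling rules (\ref{eqn-scale-vol})--(\ref{eqn-scale-lap}): under $g'=\me^{2\varphi}g$ one has $\nabla_{g'}=\me^{-2\varphi}\nabla_g$ (as vectors) and $\Delta_{g'}=\me^{-2\varphi}\Delta_g$, while the metric pairing and the volume form each acquire a factor $\me^{2\varphi}$; all $\me^{\pm 2\varphi}$ factors cancel term by term. Specializing (\ref{eqn-conf-inv-2-form}) on $\Sigma\setminus\{z_0\}$ with the $\sigma$ of corollary \ref{cor-quad-blow-tilde} playing the role of the conformal factor, the integrands against $g$ and $\tilde g=\me^{2\sigma}g$ coincide on $\Omega_\delta$, and the final formula follows from the second identity just proved. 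The only mildly delicate ingredient in the whole argument is the control on $\partial_r\varphi$ near the cone point, which is supplied by the Troyanov regularity (\ref{eqn-scale-metric-pot}); everything else is a routine application of Green--Stokes.
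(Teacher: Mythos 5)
Your proposal is correct and follows essentially the same route as the paper: Green--Stokes on $\Sigma\setminus B_{\delta}(z_0,g)$ reduces both identities to boundary integrals over the small geodesic circle, which are then estimated using the perimeter asymptotics (\ref{eqn-asymp-peri-circle}), the bound $\sigma=\mathcal{O}(\log\delta)$ for the first limit, and the Troyanov estimate $\partial_r\varphi=\mathcal{O}(r^{2\gamma+1})$ for the second; the conformal invariance is likewise read off from (\ref{eqn-scale-vol})--(\ref{eqn-scale-lap}). No gaps to report.
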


	\begin{proof}
	  We apply Green-Stokes. The first integral boils down to
	  \begin{align*}
	    \int_{\partial B_{\delta}}^{}(-\partial_r h)\sigma\,\dd\ell_g &=\gamma\int_{\partial B_{\delta}}^{}(-\partial_r h)\log r\,\dd\ell_g+\int_{\partial B_{\delta}}^{}(-\partial_r h)\varphi\,\dd\ell_g= \mathcal{O}(\delta\log \delta)+\mathcal{O}(\delta),
	  \end{align*}
	  and the second to
	  \begin{align*}
	    \int_{\partial B_{\delta}}^{}h(-\partial_r \sigma)\,\dd\ell_g &=-\gamma\int_{\partial B_{\delta}}^{}h\cdot\frac{1}{r}\,\dd\ell_g+\int_{\partial B_{\delta}}^{}h(-\partial_r \varphi)\,\dd\ell_g\\
	    &=-2\pi\gamma h(z_0)+\mathcal{O}(\delta^{2\min\{\gamma,0\}+2}),
	  \end{align*}
	  as~$\delta\to 0^+$, by (\ref{eqn-scale-metric-pot}) and (\ref{eqn-asymp-peri-circle}). Equality (\ref{eqn-conf-inv-2-form}) follows directly from (\ref{eqn-scale-vol}), (\ref{eqn-scale-nabla}) and (\ref{eqn-scale-lap}) and we obtain the rest of the lemma.
	\end{proof}

The following lemma is not used in the main proof, but is important regarding remark \ref{rem-extra-quad-term}.

\begin{lemm}\label{lemm-quadratic-sing}
    Under the assumptions at the beginning of this section, let~$0<\delta_1(\varepsilon)<\delta_2(\varepsilon)$ be two positive functions of~$\varepsilon$ such that~$\delta_{i}(\varepsilon)\to 0$ as~$\varepsilon\to 0^+$,~$i=1$,~$2$, and~$\delta_2/\delta_1\to Q>0$.  Then
    \begin{equation}
      \lim_{\varepsilon\to 0^+}\int_{B_{\delta_2(\varepsilon)}(z_0,g)\setminus B_{\delta_1(\varepsilon)}(z_0,g)}^{}|\nabla_g \sigma|_g^2\,\dd V_g=2\pi \gamma^2\log Q.
      \label{}
    \end{equation}
  \end{lemm}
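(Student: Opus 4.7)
My plan is to apply Lemma \ref{lemm-basic-renorm-log} twice, at the two radii $\delta_1(\varepsilon)$ and $\delta_2(\varepsilon)$, and take the difference. Writing the conclusion of that lemma (with the $\mathcal{O}(\delta^{2\min\{\gamma,0\}+2}\log\delta)$ term absorbed into $o(1)$, since $\gamma>-1$) as
\[
\int_{\Sigma \setminus B_{\delta}(z_0,g)} |\nabla_g \sigma|_g^2 \, \dd V_g = -2\pi\gamma^2 \log\delta - 2\pi\gamma \varphi(z_0) - \int_{\Sigma \setminus B_\delta(z_0,g)} \sigma \Delta_g \sigma \, \dd V_g + o(1),
\]
evaluating at $\delta = \delta_1(\varepsilon)$ and $\delta = \delta_2(\varepsilon)$ and subtracting, the constant $-2\pi\gamma\varphi(z_0)$ cancels, the logarithms combine, and I arrive at
\[
\int_{B_{\delta_2(\varepsilon)}(z_0,g) \setminus B_{\delta_1(\varepsilon)}(z_0,g)} |\nabla_g \sigma|_g^2 \, \dd V_g = 2\pi\gamma^2 \log\frac{\delta_2(\varepsilon)}{\delta_1(\varepsilon)} - \int_{B_{\delta_2(\varepsilon)}(z_0,g) \setminus B_{\delta_1(\varepsilon)}(z_0,g)} \sigma \Delta_g \sigma \, \dd V_g + o(1).
\]
Since $\delta_2/\delta_1 \to Q$ by hypothesis, the first term tends to $2\pi\gamma^2 \log Q$, so the lemma reduces to checking that the volume integral of $\sigma \Delta_g \sigma$ over the shrinking annulus vanishes in the limit.

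For this, it suffices to show that $\sigma \Delta_g \sigma \in L^1_{\mathrm{loc}}$ near $z_0$, whereupon dominated convergence handles the limit as the annulus shrinks to $\{z_0\}$. I decompose $\Delta_g \sigma = \gamma \Delta_g \log d_g(\cdot,z_0) + \Delta_g \varphi$. The first piece is bounded near $z_0$ by Lemma \ref{lemm-bounded-lap-log}, so $\sigma \cdot \gamma \Delta_g \log d_g = O(|\log r|)$, locally integrable in two dimensions. For the second piece, I plan to combine Liouville's equation (\ref{eqn-liouville-eqn}) applied to $\tilde{g} = \me^{2\sigma} g$ with the standing hypothesis in Definition \ref{def-singularities} that $K_{\tilde{g}}$ is bounded on $U \setminus z_0$. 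Rearranging $-\Delta_g \sigma + K_g = \me^{2\sigma} K_{\tilde{g}}$ and isolating $\Delta_g \varphi$ gives the pointwise estimate $|\Delta_g \varphi| \le C(1 + r^{2\gamma})$ near $z_0$ (the $r^{2\gamma}$ piece only matters when $\gamma < 0$). Since $\sigma = O(|\log r|)$, the product $\sigma \Delta_g \varphi$ is dominated by $C|\log r|(1 + r^{2\gamma})$, which is locally integrable in two dimensions because $2\gamma > -2$.

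The main obstacle is really this last pointwise bound on $\Delta_g \varphi$, which uses the (otherwise seemingly redundant-looking) hypothesis on $K_{\tilde{g}}$ in Definition \ref{def-singularities}. Once it is in hand, the rest is clean book-keeping on top of Lemma \ref{lemm-basic-renorm-log}. Alternatively, one could bypass integration by parts entirely by working in geodesic polar coordinates around $z_0$: expanding $|\nabla_g \sigma|_g^2 = \gamma^2/r^2 + 2\gamma r^{-1}\partial_r \varphi + |\nabla_g \varphi|_g^2$, the leading $\gamma^2/r^2$ term contributes $2\pi\gamma^2 \log(\delta_2/\delta_1) + o(1)$ using $\ell_g(\partial B_r) \sim 2\pi r$, while the cross and $|\nabla_g \varphi|_g^2$ terms, being of order $O(r^{2\gamma})$ and $O(r^{4\gamma+2})$ respectively, integrate over the annulus to quantities that vanish since $2\gamma+2>0$ and $4\gamma+4>0$.
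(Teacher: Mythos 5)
Your proof is correct, and its core is the same Green--Stokes computation as the paper's: the paper integrates by parts directly over the annulus and matches the boundary contributions on $\partial B(\delta_1)$ and $\partial B(\delta_2)$, whereas you obtain the identical identity by subtracting two instances of lemma \ref{lemm-basic-renorm-log}; either way the $2\pi\gamma\varphi(z_0)$ boundary terms cancel and the logarithms combine into $2\pi\gamma^2\log(\delta_2/\delta_1)\to 2\pi\gamma^2\log Q$, leaving only the vanishing of the volume term to check. The one place where you genuinely improve on the paper is precisely that last step: the paper disposes of $\int\sigma\Delta_g\sigma$ by asserting that $\Delta_g\sigma$ is bounded on $\Sigma\setminus\{z_0\}$, which is not literally true near $z_0$ when $-1<\gamma<0$, since Liouville's equation (\ref{eqn-liouville-eqn}) only yields $\Delta_g\sigma=K_g-\me^{2\sigma}K_{\tilde{g}}=\mathcal{O}(1+r^{2\gamma})$; your pointwise bound $|\sigma\Delta_g\sigma|\le C|\log r|(1+r^{2\gamma})$, integrable in two dimensions because $2\gamma>-2$, is the correct statement, and it makes explicit why the boundedness hypothesis on $K_{\tilde{g}}$ in definition \ref{def-singularities} is actually being used. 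Your alternative polar-coordinate computation is also valid and is arguably the most elementary route, as it relies only on (\ref{eqn-scale-metric-pot}) and (\ref{eqn-asymp-peri-circle}) and bypasses the volume integral entirely; it buys transparency at the cost of not reusing lemma \ref{lemm-basic-renorm-log}.
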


\begin{proof}
    We abbreviate the disks~$B_{\delta_i(\varepsilon)}(z_0,g)$ as~$B(\delta_i)$,~$i=1$,~$2$. Again integrating by parts we have,
    \begin{equation}
      \int_{B(\delta_2)\setminus B(\delta_1)}^{}|\nabla_g \sigma|_g^2\,\dd V_g =-\int_{B(\delta_2)\setminus B(\delta_1)}^{} \sigma \Delta_g \sigma\,\dd V_g
      +\int_{\partial B(\delta_2)}^{}\sigma(\partial_r \sigma)\,\dd\ell_g
      -\int_{\partial B(\delta_1)}^{}\sigma(\partial_r \sigma)\,\dd\ell_g.
      \label{}
    \end{equation}
    Since~$\sigma$ is locally integrable and~$\Delta_g \sigma$ is bounded on~$\Sigma\setminus\{z_0\}$, we have
    \begin{equation}
      \int_{B(\delta_2)\setminus B(\delta_1)}^{} \sigma \Delta_g \sigma\,\dd V_g \quad\xlongrightarrow{\varepsilon\to 0^+}\quad 0.
      \label{}
    \end{equation}
    Next by (\ref{eqn-local-form-conf-factor}) and (\ref{eqn-scale-metric-pot}),
    \begin{align*}
      &\int_{\partial B(\delta_2)}^{}\varphi(\partial_r \varphi)\,\dd\ell_g,~
      \int_{\partial B(\delta_1)}^{}\varphi(\partial_r \varphi)\,\dd\ell_g= \mathcal{O}(\delta^{2\gamma+2})&&\xlongrightarrow{\varepsilon\to 0^+}\quad 0,\\
      &\int_{\partial B(\delta_2)}^{}\varphi(\partial_r \log r)\,\dd\ell_g,~
      \int_{\partial B(\delta_1)}^{}\varphi(\partial_r \log r)\,\dd\ell_g&&\xlongrightarrow{\varepsilon\to 0^+}\quad 2\pi \varphi(z_0).\\
      &\int_{\partial B(\delta_2)}^{}\log r(\partial_r \varphi)\,\dd\ell_g,~
      \int_{\partial B(\delta_1)}^{}\log r(\partial_r \varphi)\,\dd\ell_g
      = \mathcal{O}(\delta^{2\gamma+2}\log \delta)&&\xlongrightarrow{\varepsilon\to 0^+}\quad 0.
    \end{align*}
    Therefore the only thing left is
    \begin{equation}
     \int_{\partial B(\delta_2)}^{}\frac{\log r}{r}\,\dd\ell_g -\int_{\partial B(\delta_1)}^{} \frac{\log r}{r}\,\dd\ell_g   \sim 2\pi \log\Big(\frac{\delta_2}{\delta_1}\Big)
      \quad\xlongrightarrow{\varepsilon\to 0^+}\quad 2\pi \log Q.
      \label{}
    \end{equation}
    Adding up all the above, we obtain the result. 
  \end{proof}

\section{Renormalization Procedure}\label{sec-renorm-tech}

\subsection{Consistency}

As we could see from definitions \ref{def-main-renorm-anomaly} and \ref{def-renom-part-func} that a particular reference smooth metric $g$ was chosen to make the definition. Now we show that a different conformal reference metric would in fact give the same partition function for the target metric $\tilde{g}$ and hence $\mathcal{Z}(\Sigma,\tilde{g})$ is invariantly defined.

\begin{lemm}\label{lemm-consist}
	  Consider a conformal field theory with central charge~$c\in\mb{R}$ on the Riemann surface~$\Sigma$ and let~$\tilde{g}$ be a generalized conformal metric representing~$D=\sum_{j=1}^p \gamma_j z_j$ with~$\gamma_j>-1$. Also let~$g_1$ and~$g_0$ be two smooth conformal metrics such that~$g_1=\me^{2h}g_0$ for some~$h\in C^{\infty}(\Sigma)$, and~$\tilde{g}=\me^{2\sigma}g_1$ for~$\sigma\in C^{\infty}(\Sigma\setminus \supp D)$. Then if one of $\mathcal{R}A_{\Sigma}(\tilde{g},g_0)$ and $\mathcal{R}A_{\Sigma}(\tilde{g},g_1)$ defined by (\ref{eqn-def-renorm-polya}) is finite, so is the other, and we have
	  \begin{equation}
	    \mathcal{R}A_{\Sigma}(\tilde{g},g_0)-\mathcal{R}A_{\Sigma}(\tilde{g},g_1)=   A_{\Sigma}(g_1,g_0),
	    \label{}
	  \end{equation}
   	 where $A_{\Sigma}(g_1,g_0)$ is defined by (\ref{eqn-def-ord-poly-anomaly}). Hence definition (\ref{eqn-def-renom-part-func}) is independent of the reference metric chosen.
	\end{lemm}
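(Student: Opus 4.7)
The plan is to directly subtract the two renormalized anomalies, observe that the $\log\varepsilon$ counterterms and the punctured domains $\Sigma\setminus\bigcup_j B_\varepsilon(z_j,\tilde g)$ are literally the same in both (they depend only on $\tilde g$), and therefore the difference reduces to computing, as $\varepsilon\to 0^+$,
\[
24\pi\bigl[\mathcal{R}A_\Sigma(\tilde g,g_0)-\mathcal{R}A_\Sigma(\tilde g,g_1)\bigr]=\lim_{\varepsilon\to 0^+}\int_{\Sigma_\varepsilon}\Bigl[\bigl(|\nabla_{g_0}\sigma_0|_{g_0}^2+2K_{g_0}\sigma_0\bigr)\dd V_{g_0}-\bigl(|\nabla_{g_1}\sigma_1|_{g_1}^2+2K_{g_1}\sigma_1\bigr)\dd V_{g_1}\Bigr],
\]
where $\sigma_0=\sigma+h$ and $\sigma_1=\sigma$ are the conformal factors relating $\tilde g$ to $g_0$ and $g_1$ respectively, and $\Sigma_\varepsilon=\Sigma\setminus\bigcup_j B_\varepsilon(z_j,\tilde g)$.

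Next I would open up the right-hand side. First, expand $|\nabla_{g_0}(\sigma_1+h)|_{g_0}^2$ into $|\nabla_{g_0}\sigma_1|_{g_0}^2+2\langle\nabla_{g_0}\sigma_1,\nabla_{g_0}h\rangle_{g_0}+|\nabla_{g_0}h|_{g_0}^2$; the purely $\sigma_1$ piece combines with $\dd V_{g_0}$ to give $|\nabla_{g_1}\sigma_1|_{g_1}^2\dd V_{g_1}$ by the conformal invariance of the Dirichlet 2-form (from (\ref{eqn-scale-vol}) and (\ref{eqn-scale-nabla})). Second, use Liouville's equation (\ref{eqn-liouville-eqn}) to rewrite $K_{g_1}\dd V_{g_1}=(K_{g_0}-\Delta_{g_0}h)\dd V_{g_0}$, so the two curvature terms cancel the $2K_{g_0}\sigma_1$ piece and leave behind $+2(\Delta_{g_0}h)\sigma_1$. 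What remains inside the limit is
\[
\int_{\Sigma_\varepsilon}\Bigl[2\langle\nabla_{g_0}\sigma_1,\nabla_{g_0}h\rangle_{g_0}+2(\Delta_{g_0}h)\sigma_1\Bigr]\dd V_{g_0}+\int_{\Sigma_\varepsilon}\bigl(|\nabla_{g_0}h|_{g_0}^2+2K_{g_0}h\bigr)\dd V_{g_0}.
\]

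Now I would finish with two invocations of earlier results. The second integral has a smooth integrand, so as $\varepsilon\to 0^+$ it converges to $\int_\Sigma(|\nabla_{g_0}h|_{g_0}^2+2K_{g_0}h)\dd V_{g_0}=24\pi\, A_\Sigma(g_1,g_0)$ by definition (\ref{eqn-def-ord-poly-anomaly}). The first integral, handled one cone point at a time (the complement $\Sigma_\varepsilon$ decomposes into the exterior of finitely many small metric disks), is precisely the quantity whose vanishing is asserted by equation (\ref{eqn-limit-lap-smooth-against-con}) in Lemma \ref{lemm-linear-sing-green-sto}, applied with the roles of $h$ (smooth) and $\sigma$ (the conically singular factor $\sigma_1$, which has the local form (\ref{eqn-local-form-conf-factor}) with exponent $\gamma_j$ by definition of $\tilde g$) as in that lemma. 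Adding these yields the claimed identity; finiteness of one anomaly then trivially implies the other, and the reference-metric independence of $\mathcal{Z}(\Sigma,\tilde g)$ in (\ref{eqn-def-renom-part-func}) follows by combining with the ordinary scaling law $\mathcal{Z}(\Sigma,g_1)=e^{cA_\Sigma(g_1,g_0)}\mathcal{Z}(\Sigma,g_0)$.

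The only subtle point, and the one I would be most careful about, is matching the radial cut-offs: the balls $B_\varepsilon(z_j,\tilde g)$ are defined intrinsically by $\tilde g$ and not by either $g_0$ or $g_1$, but Lemma \ref{lemm-linear-sing-green-sto} is stated with $g$-geodesic disks. However this is harmless because swapping $B_\varepsilon(z_j,\tilde g)$ for $B_{\delta(\varepsilon)}(z_j,g_0)$ with $\delta(\varepsilon)\to 0$ (by Lemma \ref{lemm-scale-dist-sing} / Corollary \ref{cor-ratio-radius}) changes each of the two offending integrals over an annular region whose area shrinks and on which all integrands are $L^1_{\mathrm{loc}}$, so the shrinking correction vanishes in the limit. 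No genuine obstacle remains once this identification is made.
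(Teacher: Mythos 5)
Your proposal is correct and follows essentially the same route as the paper's own proof: subtract the two anomalies over the common cut-off domain $\Sigma\setminus\bigcup_j B_{\varepsilon}(z_j,\tilde g)$, expand via (\ref{eqn-scale-vol})--(\ref{eqn-liouville-eqn}), and kill the cross term with (\ref{eqn-limit-lap-smooth-against-con}). Your extra remark on reconciling the $\tilde g$-metric balls with the $g$-geodesic disks of Lemma \ref{lemm-linear-sing-green-sto} is a point the paper passes over silently, and is a welcome addition rather than a deviation.
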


 \begin{proof}
	  Without loss of generality suppose~$D=\gamma z_0$ with~$\gamma>-1$. Denote, for simplicity, by~$\nabla_i$,~$|\cdot|_i$,~$K_i$ and~$\dd V_i$ respectively the gradient, metric norm, Gauss curvature and area form under the metric~$g_i$,~$i=0$,~$1$. Then by definition and the relations (\ref{eqn-scale-vol}) --- (\ref{eqn-liouville-eqn}), valid on~$\Sigma\setminus  B_{\varepsilon}(z_0,\tilde{g})$ for any~$\varepsilon>0$, we have
	  \begin{align}
	    \mathcal{R}A_{\Sigma}(\tilde{g},g_1)&=\frac{1}{24\pi}\lim_{\varepsilon\to 0^+}\Big[ 
	      \int_{\Sigma\setminus  B_{\varepsilon}(z_0,\tilde{g})}^{}(|\nabla_{1} \sigma|_{1}^2+2K_{1}\sigma)\dd V_{1} +\frac{2\pi\gamma^2}{\gamma+1}\log(\varepsilon)
    \Big]\\
    &=\frac{1}{24\pi}\lim_{\varepsilon\to 0^+}\Big[ 
      \int_{\Sigma\setminus  B_{\varepsilon}(z_0,\tilde{g})}^{}( |\nabla_0 \sigma|_0^2 -2(\Delta_0 h)\sigma +2K_0 \cdot\sigma
      )\dd V_{0} +\frac{2\pi\gamma^2}{\gamma+1}\log(\varepsilon)
    \Big],\\
    \mathcal{R}A_{\Sigma}(\tilde{g},g_0)&=\frac{1}{24\pi}\lim_{\varepsilon\to 0^+}\Big[ 
      \int_{\Sigma\setminus  B_{\varepsilon}(z_0,\tilde{g})}^{}\big(|\nabla_{0}(h+ \sigma)|_{0}^2+2K_{0}(h+\sigma)\big)\dd V_{0} +\frac{2\pi\gamma^2}{\gamma+1}\log(\varepsilon)
    \Big].
  \end{align}
  Therefore
  \begin{align*}
    \mathcal{R}A_{\Sigma}(\tilde{g},g_0)-\mathcal{R}A_{\Sigma}(\tilde{g},g_1)&=\frac{1}{24\pi}\int_{\Sigma}^{}(|\nabla_{0}h|_{0}^2+2K_{0}\cdot h)\,\dd V_{0}\\
    &\quad\quad +\frac{1}{12\pi}\underbrace{\lim_{\varepsilon\to 0^+}\int_{\Sigma\setminus  B_{\varepsilon}(z_0,\tilde{g})}^{} \big(
    \ank{\nabla_0 h,\nabla_0\sigma}_0 +(\Delta_0 h)\sigma\big)\,\dd V_0}_{=~0}\\
    &=\frac{1}{24\pi}\int_{\Sigma}^{}(|\nabla_{0}h|_{0}^2+2K_{0}\cdot h)\,\dd V_{0}
  \end{align*}
  by (\ref{eqn-limit-lap-smooth-against-con}) since $\sigma$ has the form (\ref{eqn-local-form-conf-factor}) with $\varphi$ bounded with respect to $g_0$. 
	\end{proof}

\subsection{Regularized Curvature and Anomaly}
In this subsection we note another kind of regularized anomaly which concerns smooth (bounded) scaling of a singular metric, rather than singular scaling of a smooth metric which was in some sense what we did above. This quantity will also play a role in the main result proposition \ref{prop-main-conical-scaling}.

	\begin{lemm}\label{lemm-smooth-scale-con}
	  Let~$\Sigma$ be a closed Riemann surface and~$\tilde{g}$ a generalized conformal metric representing~$D=\sum_{j=1}^p \gamma_j z_j$ with~$\gamma_j>-1$. Now suppose~$h\in C^{\infty}(\Sigma)$ and consider the scaled metric~$\me^{2h}\tilde{g}$ on~$\Sigma\setminus\supp D$. Then
  \begin{equation}
    \mathcal{R}A_{\Sigma}(\me^{2h}\tilde{g},\tilde{g})\defeq \frac{1}{24\pi}
      \int_{\Sigma\setminus \supp D}\big(|\nabla_{\tilde{g}} h|_{\tilde{g}}^2+2K_{\tilde{g}}h\big)\,\dd V_{\tilde{g}} <\infty.
    \label{eqn-def-renorm-polya-conic}
  \end{equation}
	\end{lemm}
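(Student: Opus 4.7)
The plan is to reduce the claim to two integrability statements by passing to a smooth reference metric $g$ via conformal invariance and Liouville's equation. Specifically, pick any smooth conformal metric $g$ on $\Sigma$ (these exist by partitions of unity). Since $\tilde{g}$ represents $D$, on $\Sigma \setminus \supp D$ we may write $\tilde{g} = e^{2\sigma} g$ for some $\sigma \in C^{\infty}(\Sigma \setminus \supp D)$, which near each cone point $z_j$ takes the form $\sigma = \gamma_j \log d_g(\bullet, z_j) + \varphi_j$ with $\varphi_j$ satisfying the regularity of Definition \ref{def-singularities}. I would then split the integrand defining $\mathcal{R}A_{\Sigma}(e^{2h}\tilde{g}, \tilde{g})$ into the gradient term and the curvature term and handle each separately.

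For the gradient term, I use the conformal invariance of the Dirichlet energy: by (\ref{eqn-scale-vol}) and (\ref{eqn-scale-nabla}),
\begin{equation}
|\nabla_{\tilde{g}} h|_{\tilde{g}}^2 \, \dd V_{\tilde{g}} = |\nabla_{g} h|_{g}^2 \, \dd V_{g}
\label{eqn-prop-plan-conf-inv}
\end{equation}
pointwise on $\Sigma \setminus \supp D$. Since $h \in C^{\infty}(\Sigma)$ and $g$ is smooth on the closed surface $\Sigma$, the right-hand side is a bounded smooth 2-form, hence integrable, so the gradient piece contributes a finite quantity equal to $(24\pi)^{-1}\int_{\Sigma} |\nabla_{g} h|_g^2\, \dd V_g$.

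For the curvature term, I apply Liouville's equation (\ref{eqn-liouville-eqn}) on the punctured surface $\Sigma \setminus \supp D$, where both $\tilde{g}$ and $g$ are smooth and conformally related. This yields the pointwise identity of smooth 2-forms
\begin{equation}
K_{\tilde{g}} \, \dd V_{\tilde{g}} = (-\Delta_g \sigma + K_g)\, \dd V_g
\end{equation}
on $\Sigma \setminus \supp D$. Multiplying by $h$ and integrating, the $\int h K_g \, \dd V_g$ piece is manifestly finite since the integrand is smooth on $\Sigma$. The remaining piece $\int h \Delta_g \sigma\, \dd V_g$ is the only potentially problematic term; the main obstacle is therefore to show $\Delta_g \sigma \in L^1(\Sigma, g)$.

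To dispatch this obstacle, I work locally near each cone point $z_j$: by the decomposition $\sigma = \gamma_j \log d_g(\bullet, z_j) + \varphi_j$ on $U_j \setminus \{z_j\}$, we have
\begin{equation}
\Delta_g \sigma = \gamma_j \Delta_g \log d_g(\bullet, z_j) + \Delta_g \varphi_j
\end{equation}
on $U_j \setminus \{z_j\}$. The first summand is bounded on a punctured neighborhood of $z_j$ by Lemma \ref{lemm-bounded-lap-log}, and the second is in $L^1(U_j, g)$ by the very definition of a conical singularity (Definition \ref{def-singularities}). Away from $\supp D$, $\sigma$ is smooth so $\Delta_g \sigma$ is locally bounded. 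A partition-of-unity argument then assembles these into $\Delta_g \sigma \in L^1(\Sigma, g)$, and since $h$ is bounded, $h \Delta_g \sigma \, \dd V_g$ is absolutely integrable, completing the proof.
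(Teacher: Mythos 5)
Your proposal is correct and follows essentially the same route as the paper's proof: conformal invariance of the Dirichlet $2$-form, Liouville's equation to rewrite $K_{\tilde{g}}\,\dd V_{\tilde{g}}$ as $(-\Delta_g\sigma+K_g)\,\dd V_g$, and then $\Delta_g\sigma\in L^1$ near each cone point via lemma \ref{lemm-bounded-lap-log} together with the $L^1$ condition on $\Delta_g\varphi_j$ from definition \ref{def-singularities}. The only (harmless) extra care you take is spelling out the localization and partition-of-unity step, which the paper leaves implicit.
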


	\begin{proof}
	  Choose a smooth background conformal metric~$g$ and write~$\tilde{g}=\me^{2\sigma}g$,~$\sigma\in C^{\infty}(\Sigma\setminus\supp D)$. As in lemma \ref{lemm-linear-sing-green-sto}, the 2-form~$|\nabla_{\bullet}h|_{\bullet}^2\,\dd V_{\bullet}$ is conformally invariant, and the Gauss curvature transforms as
	  \begin{equation}
	    K_{\tilde{g}}=\me^{-2\sigma}(-\Delta_g \sigma+K_g),\quad \textrm{on }\Sigma\setminus\supp D.
	    \label{}
	  \end{equation}
	  Therefore
	  \begin{align}
	    \int_{\Sigma\setminus \supp D}(|\nabla_{\tilde{g}} h|_{\tilde{g}}^2+2K_{\tilde{g}}h)\,\dd V_{\tilde{g}} =
	    \int_{\Sigma\setminus \supp D}(|\nabla_{g}h|_{g}^2+2(-\Delta_g \sigma+K_g)h)\,\dd V_{g}<\infty, \label{eqn-conf-change-to-smooth}
	  \end{align}
	  because of lemma \ref{lemm-bounded-lap-log}.
	\end{proof}

\begin{def7}
	  As a by-product, we also see that the RHS of (\ref{eqn-conf-change-to-smooth}), which is expressed in terms of the background smooth metric~$g$, is independent of~$g$.
	\end{def7}

\begin{deef}\label{def-renorm-polya-conic}
	  In the situation as lemma \ref{lemm-smooth-scale-con}, we call (\ref{eqn-def-renorm-polya-conic}) the \textsf{regular Polyakov anomaly} of~$\me^{2h}\tilde{g}$ against~$\tilde{g}$.
	\end{deef}

\begin{def7}
  The distribution (or current, more precisely) on~$\Sigma$ which we denote by~$K_{\tilde{g}}\dd V_{\tilde{g}}$, defined by setting
  \begin{equation}
    \ank{K_{\tilde{g}}\dd V_{\tilde{g}}, \psi}\defeq \int_{\Sigma\setminus \supp D}^{}\psi K_{\tilde{g}}\,\dd V_{\tilde{g}}
    \label{}
  \end{equation}
  for all~$\psi\in C^{\infty}(\Sigma)$, is usually called the \textsf{regularized Gauss curvature}. This is in $L^1(\Sigma, g)\,\dd V_g$ for any smooth conformal metric $g$ for just the same reason as in (\ref{eqn-conf-change-to-smooth}).
\end{def7}

\subsection{Comparison with Cut-off Methods}

   There are existing methods in the literature that essentially define the same quantity as (\ref{eqn-def-renorm-polya}), but instead of removing the balls~$B_{\varepsilon}(z_i,\tilde{g})$ entirely, they introduce a regularized metric inside. Such ideas are found in the incomplete notes of Zamolodchikov et al. \cite{ZZ} (page 86), as well as in Eskin, Kontsevich, and Zorich \cite{EKZ}, section 3.6. The cut-off method used in the latter is more mathematically rigorous, being smooth, and it avoids generating additional singularities along each~$\partial B_{\varepsilon}$ for~$\nabla \sigma$. In this subsection, we note that our definition \ref{def-renom-part-func} yields the same result as in \cite{EKZ}, and arguably, it also aligns with other regularization methods discussed in \cite{EKZ}, section 3.6.\\

For each fixed~$\delta> \delta'> 0$ we consider a smooth function~$\sreg_{\delta,\delta'}(r)$ such that
	\begin{equation}
	  \sreg_{\delta,\delta'}(r)=\left\{
	  \begin{array}{ll}
	   \log  r &\textrm{for }r\ge \delta,\\
	    \log \delta &\textrm{for }0\le r\le \delta',
	  \end{array}
	  \right.
	  \label{}
	\end{equation}
as well as
 	\begin{equation}
 \log \delta' \le  \sreg_{\delta,\delta'}(r) \le \log \delta, \quad \quad\textrm{for }0\le r\le \delta.
\end{equation}
and 
 	\begin{equation}
 \left| \partial_r \sreg_{\delta,\delta'}(r)  \right| \le C \delta'^{-1}, \quad\quad\textrm{for }0\le r\le \delta,
\end{equation}
for some constant $C$ independent of $\delta, \delta'$. 
Such a function can be constructed using a smooth cut-off function~$f:\mb{R}\lto\mb{R}$ for which~$f(t)\equiv 0$ for~$t\le 0$,~$0<f(t)<1$ for~$0<t<1$,~$f(t)\equiv 1$ for~$t\ge 1$, and posing
 	\begin{equation}
\sreg_{\delta,\delta'}(r) \defeq  \log \delta + f \Big(\frac{r- \delta'}{\delta - \delta'} \Big) \log \frac{r}{\delta} .
\end{equation}
Now given a closed Riemann surface~$\Sigma$ and~$\tilde{g}$ a generalized conformal metric representing~$D=\sum_{j=1}^p \gamma_j z_j$ with~$\gamma_j>-1$, as well as a smooth background conformal metric~$g$ such that around each singularity~$z_j$ holds (\ref{eqn-conic-sing-local-form}), and the regular metric potentials~$\varphi_j$ satisfy the Troyanov conditions (\ref{eqn-scale-metric-pot}), we introduce the regularized smooth metric~$g_{\varepsilon,\varepsilon'}$ (for~$\varepsilon> \varepsilon'> 0$ sufficiently small) 
   	\begin{equation}
	  g_{\varepsilon,\varepsilon'}\defeq \left\{
	  \begin{array}{ll}
	    \tilde{g}&\textrm{on }\Sigma\setminus\bigcup_1^p U_j,\\
	   \exp \big( 2 \gamma_j \cdot \sreg_{\delta_j,\delta'_j}(r_j) \big) \me^{2\varphi_j}g& \textrm{on }U_j,
	  \end{array}
	  \right.
	  \label{}
	\end{equation}
    where $r_j = d_g(\bullet,z_j)$, the~$U_j$'s given in definition \ref{def-singularities}, and we choose~$\delta_j = \delta_j(\varepsilon)$ to be the radius of the ball~$B_{\varepsilon}(z_j,\tilde{g})$ under the metric~$g$, and likewise $\delta'_j = \delta_j(\varepsilon')$.
	\begin{lemm}
	  In the setting as above, write~$g_{\varepsilon,\varepsilon'}=\me^{2\sigma_{\varepsilon,\varepsilon'}}g$ now with~$\sigma_{\varepsilon,\varepsilon'}\in C^{\infty}(\Sigma)$ for each~$\varepsilon>0$, and assuming $\varepsilon' \sim \varepsilon$ as $\varepsilon\to 0^+$, we have
	  \begin{equation}
	     \frac{1}{24\pi}\lim_{\varepsilon\to 0^+}\Big[ 
	     \int_{\Sigma}(|\nabla_g \sigma_{\varepsilon,\varepsilon'}|_g^2+2K_g\sigma_{\varepsilon,\varepsilon'})\,\dd V_g 
         +2\pi\sum_{i=1}^p \frac{\gamma_i^2}{1+\gamma_i}\log(\varepsilon)
         \Big] =\mathcal{R}A_{\Sigma}(\tilde{g},g).
	    \label{}
	  \end{equation}
	\end{lemm}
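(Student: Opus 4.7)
The plan is to compare the cut-off expression termwise with definition (\ref{eqn-def-renorm-polya}) of $\mathcal{R}A_\Sigma(\tilde{g},g)$. The logarithmic counter-term is identical in both formulas, and by construction $g_{\varepsilon,\varepsilon'}=\tilde{g}$ outside $\bigcup_j\{r_j<\delta_j\}$, so $\sigma_{\varepsilon,\varepsilon'}=\sigma$ there. Writing $(\cdots)_\tau$ for $|\nabla_g\tau|_g^2+2K_g\tau$, I would split
\[
\int_\Sigma(\cdots)_{\sigma_{\varepsilon,\varepsilon'}}\,\dd V_g
=\int_{\Sigma\setminus\bigcup_j\{r_j<\delta_j\}}(\cdots)_\sigma\,\dd V_g+\sum_{j=1}^p I_j(\varepsilon),
\]
where $I_j(\varepsilon)$ collects the contributions from inside the $j$-th disk. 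By lemma \ref{lemm-scale-dist-sing}, $B_\varepsilon(z_j,\tilde{g})$ and the $g$-disk $\{r_j<\delta_j\}$ differ only by a thin annulus whose inner and outer $g$-radii have ratio tending to $1$; on such an annulus the model singular term $\gamma_j^2/r_j^2$ of $|\nabla_g\sigma|_g^2$ contributes $\sim 2\pi\gamma_j^2\log(\text{ratio})\to 0$ (compare lemma \ref{lemm-quadratic-sing}), so the ``outside'' piece plus counter-term converges to $24\pi\,\mathcal{R}A_\Sigma(\tilde{g},g)$. It thus suffices to prove $I_j(\varepsilon)\to 0$ for each $j$.

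Fix one cone point with exponent $\gamma>-1$ and drop the subscript $j$. Inside the disk, $\sigma_{\varepsilon,\varepsilon'}=\gamma\,\sreg_{\delta,\delta'}(r)+\varphi$. Since $\sreg_{\delta,\delta'}$ is constant on $\{r\le\delta'\}$, its gradient is supported on the annulus $A:=\{\delta'\le r\le\delta\}$ and pointwise bounded by $C/\delta'$. Crucially, the hypothesis $\varepsilon'\sim\varepsilon$ combined with lemma \ref{lemm-scale-dist-sing} forces $\delta'/\delta\to 1$, hence $\mathrm{Area}_g(A)=O((\delta-\delta')\delta)=o(\delta^2)$ and $|\nabla_g\sreg_{\delta,\delta'}|_g\le C'/\delta$. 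The curvature contribution is then immediate: $|\sigma_{\varepsilon,\varepsilon'}|\le C|\log\delta|$, $K_g$ is bounded, and $\mathrm{Area}_g(\{r<\delta\})=O(\delta^2)$ together give $\int 2K_g\sigma_{\varepsilon,\varepsilon'}\,\dd V_g=O(\delta^2|\log\delta|)=o(1)$. For the gradient contribution, I would expand $|\nabla_g\sigma_{\varepsilon,\varepsilon'}|_g^2$ into three pieces: the Troyanov bound $|\nabla_g\varphi|_g=O(r^{2\gamma+1})$ from lemma \ref{lemm-reg-metric-pot} yields $\int|\nabla_g\varphi|_g^2\,\dd V_g=O(\delta^{4\gamma+4})=o(1)$; the cross term is supported on $A$ and bounded by $(C'/\delta)\cdot O(\delta^{2\gamma+1})\cdot\mathrm{Area}(A)=O(\delta^{2\gamma+1}(\delta-\delta'))=o(1)$; and the leading Dirichlet-energy piece satisfies $\int\gamma^2|\nabla_g\sreg_{\delta,\delta'}|_g^2\,\dd V_g\le(C'/\delta)^2\cdot\mathrm{Area}(A)=O((\delta-\delta')/\delta)=o(1)$.

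The main obstacle is precisely this last estimate: the Dirichlet-energy piece $\int\gamma^2|\nabla_g\sreg_{\delta,\delta'}|_g^2\,\dd V_g$ is \emph{not} automatically small, and would converge to a nonzero constant (computable from a one-dimensional Dirichlet integral on $A$) if $\delta'/\delta$ were held bounded away from $1$. It is the requirement $\varepsilon'\sim\varepsilon$---equivalently that the inner cut-off radius approach the outer one---that makes the smooth cut-off prescription agree exactly with the Cauchy--Hadamard renormalization in (\ref{eqn-def-renorm-polya}); any other rate would generically yield a modified anomaly differing from $\mathcal{R}A_\Sigma$ by a finite constant per cone point.
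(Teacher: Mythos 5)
Your proof is correct and follows essentially the same route as the paper's: reduce to showing that the contribution of each regularized disk vanishes, then estimate separately the curvature term, the $|\nabla_g\varphi|^2$ term, the cross term, and the Dirichlet energy of the cut-off profile, the last being $O((\delta-\delta')/\delta)\to 0$ precisely because $\varepsilon'\sim\varepsilon$ forces $\delta'/\delta\to 1$. Your extra care in reconciling the $\tilde{g}$-metric ball with the round $g$-disk of radius $\delta_j$ (via the thin-annulus estimate in the spirit of lemma \ref{lemm-quadratic-sing}), and your closing observation that any other shrinking rate would leave a scale-invariant, generically nonzero Dirichlet energy per cone point, are both correct refinements that the paper leaves implicit.
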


\begin{proof}
	  Since~$g_{\varepsilon,\varepsilon'}=\tilde{g}$ on~$\Sigma\setminus \bigcup_{i=1}^p B_{\varepsilon}(z_i,\tilde{g})$ we just need to show that for each~$j$,
	  \begin{equation}
	    \int_{B_{\varepsilon}(z_j,\tilde{g})}(|\nabla_g \sigma_{\varepsilon,\varepsilon'}|_g^2+2K_g\sigma_{\varepsilon,\varepsilon'})\,\dd V_g \quad\xlongrightarrow{\varepsilon\to 0^+}\quad 0.
	    \label{}
	  \end{equation}
  We suppress the subscript~$j$ as we treat each ball individually. Now
	  \begin{equation}
	    \sigma_{\varepsilon,\varepsilon'}(r)= \gamma \cdot\sreg_{\delta,\delta'}(r) +\varphi,
	    \label{}
	  \end{equation}
	  with~$r=d_g(\bullet,z_j)$. From the fact that 
 	\begin{equation}
 \log \delta' \le  \sreg_{\delta,\delta'}(r) \le \log \delta 
\end{equation}
one finds that 
 	\begin{equation}
\sigma_{\epsilon,\epsilon'}(r) = \mathcal{O}( \log \epsilon)
\end{equation}
  and hence the integral of~$K_g \sigma_{\varepsilon,\varepsilon'}$ over the ball of radius $\epsilon$ goes to zero.  Since~$\varphi$ satisfies the Troyanov conditions (\ref{eqn-scale-metric-pot}) we have
  \begin{equation}
    \int_{B_{\varepsilon}(z_j,\tilde{g})}|\nabla_g \varphi|_g^2\,\dd V_g \lesssim \int_0^{2\pi}\int_{0}^{\delta}r^{4\gamma+2}\cdot r\,\dd r\,\dd \theta=\mathcal{O}(\delta^{4\gamma+4})\quad\xlongrightarrow{\varepsilon\to 0^+}\quad 0,
    \label{}
  \end{equation}
  taking $(r,\theta)$ to be the geodesic polar coordinates with respect to $g$. Now it is straightforward to check that 
  	 \begin{equation}
    \sup_{r\le \delta}\big|\nabla_g \sreg_{\delta,\delta'}(r)\big|_{g} =\mathcal{O}(\delta^{-1}),\quad\quad \delta\to 0^+.
    \label{}
  \end{equation}
  provided $\delta' \sim \delta$.  Now together with (\ref{eqn-scale-metric-pot}) we have
  \begin{equation}
    \int_{B_{\varepsilon}(z_j,\tilde{g})}\ank{\nabla_g \sreg_{\delta,\delta'},\nabla_g\varphi}_g\,\dd V_g \lesssim \delta^{-1}\int_0^{2\pi}\int_{0}^{\delta}r^{2\gamma+1}\cdot r\,\dd r\,\dd \theta=\mathcal{O}(\delta^{2\gamma+2})\quad\xlongrightarrow{\varepsilon\to 0^+}\quad 0,
    \label{}
  \end{equation}
	  and finally, again by the fact that $\delta'\sim\delta$,
	  \begin{equation}
	    \int_{B_{\varepsilon}(z_j,\tilde{g})} \big|\nabla_g \sreg_{\delta,\delta'}(r)\big|^2\,\dd V_g \lesssim \frac{1}{\delta^2}\int_{\{\delta'\le r\le \delta\}}^{} \dd V_g
	    =\frac{1}{\delta}\mathcal{O}(\delta - \delta')\quad\xlongrightarrow{\delta \to 0^+}\quad 0. \label{eqn-comparison-cut-off-conv}
	  \end{equation}
	  which concludes the proof.
	\end{proof}

    \begin{def7}
    This lemma specifically explains why no boundary terms are included in the anomaly formula from definition \ref{def-renorm-polya-conic}. We are working over~$\Sigma\setminus \bigcup_{i=1}^p B_{\varepsilon}(z_i,\tilde{g})$, which is a surface with boundary, and the boundary is generally not geodesic. The reason we don’t include boundary terms is that we could have alternatively opted for the method described in this subsection, which only considers the closed surface~$\Sigma$, and still arrived at the same outcome. Moreover, as mentioned in the "future work" section of the introduction, we intend to investigate various settings involving surfaces with boundary in a future revision of the manuscript.
	\end{def7}

\begin{def7}
	  Imaginably, one could allow even more flexibility in the choice of the cut-off function~$\sreg_{\delta,\delta'}$ to gain even better convergence in (\ref{eqn-comparison-cut-off-conv}). Clearly one just needs to make the infinitesimal annuli with nonzero gradient~$\nabla \sigma_{\varepsilon,\varepsilon'}$ thin enough. Nevertheless our choice suffices.
	\end{def7}

\section{Main Proof}

\subsection{Proof}\label{sec-main-proof}

\begin{prop}\label{prop-main-conical-scaling}
	  Let~$\Sigma$ be a closed Riemann surface and~$\tilde{g}$ a generalized conformal metric representing~$D=\sum_{j=1}^p \gamma_j z_j$ with~$\gamma_j>-1$. Now suppose~$h\in C^{\infty}(\Sigma)$ and consider the scaled metric~$\me^{2h}\tilde{g}$ on~$\Sigma\setminus\supp D$. Then 
      \begin{equation}
	 \mathcal{R}A_{\Sigma}(e^{2h}\tilde{g},g) -\mathcal{R}A_{\Sigma}\left(\tilde{g},g\right)= \mathcal{R}A_{\Sigma}(e^{2h}\tilde{g},\tilde{g})  - \frac{1}{12} \sum_j   \frac{\gamma_j(\gamma_j+2)}{\gamma_j+1} h(z_j),
     \label{eqn-main-gen-conic-interp-cocycle}
	  \end{equation}
	  where $\mathcal{R}A_{\Sigma}(e^{2h}\tilde{g},\tilde{g}) $ is defined by (\ref{eqn-def-renorm-polya-conic}) and $\mathcal{R}A_{\Sigma}(e^{2h}\tilde{g},g) $, $\mathcal{R}A_{\Sigma}\left(\tilde{g},g\right)$ by definition \ref{def-main-renorm-anomaly}.
	\end{prop}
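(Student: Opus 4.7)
The plan is to reduce (\ref{eqn-main-gen-conic-interp-cocycle}) to the bulk Green--Stokes identities of subsection \ref{sec-int-by-part} by comparing both anomalies through a common choice of excision disks. Write $\tilde{g} = e^{2\sigma}g$, where $\sigma \in C^{\infty}(\Sigma \setminus \supp D)$ has the local form $\sigma = \gamma_j \log d_g(\cdot, z_j) + \varphi_j$ near each $z_j$, so that $e^{2h}\tilde{g} = e^{2(h+\sigma)}g$ inherits the \emph{same} conical exponents, with metric potentials $\varphi_j + h$. This ensures that the logarithmic counterterm $2\pi \sum_j \frac{\gamma_j^2}{\gamma_j+1} \log \varepsilon$ appears identically in $\mathcal{R}A_{\Sigma}(e^{2h}\tilde{g}, g)$ and $\mathcal{R}A_{\Sigma}(\tilde{g}, g)$. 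Introduce an auxiliary quantity $\mathcal{R}A_{\Sigma}^{\tilde{g}}(e^{2h}\tilde{g}, g)$ defined exactly like $\mathcal{R}A_{\Sigma}(e^{2h}\tilde{g}, g)$ but with excision disks $B_\varepsilon(z_j, \tilde{g})$ instead of $B_\varepsilon(z_j, e^{2h}\tilde{g})$, and split
\begin{equation*}
\mathcal{R}A_{\Sigma}(e^{2h}\tilde{g}, g) - \mathcal{R}A_{\Sigma}(\tilde{g}, g) = \underbrace{\big[\mathcal{R}A_{\Sigma}^{\tilde{g}}(e^{2h}\tilde{g}, g) - \mathcal{R}A_{\Sigma}(\tilde{g}, g)\big]}_{I} + \underbrace{\big[\mathcal{R}A_{\Sigma}(e^{2h}\tilde{g}, g) - \mathcal{R}A_{\Sigma}^{\tilde{g}}(e^{2h}\tilde{g}, g)\big]}_{II}.
\end{equation*}

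For $I$, the counterterms and excision disks cancel, leaving $24\pi\,I = \lim_{\varepsilon \to 0^+} \int_{\Omega_\varepsilon} (|\nabla_g h|_g^2 + 2\langle \nabla_g h, \nabla_g \sigma\rangle_g + 2K_g h)\,\dd V_g$ over $\Omega_\varepsilon = \Sigma \setminus \bigcup_j B_\varepsilon(z_j, \tilde{g})$. The $|\nabla_g h|^2$ and $K_g h$ pieces converge to integrals over $\Sigma$ by dominated convergence. For the cross term, the proof of lemma \ref{lemm-linear-sing-green-sto} applies verbatim with $\tilde{g}$-disks in place of $g$-disks because $\partial B_\varepsilon(z_j, \tilde{g})$ is an asymptotic $g$-circle of $g$-radius $\delta(\varepsilon) \asymp \varepsilon^{1/(\gamma_j+1)}$ by lemma \ref{lemm-scale-dist-sing}, giving a limit of $-2\int_{\Sigma \setminus \supp D} h\Delta_g \sigma\,\dd V_g - 4\pi \sum_j \gamma_j h(z_j)$. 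The three resulting bulk integrals over $\Sigma$ combine, via the Liouville equation (\ref{eqn-liouville-eqn}), into $24\pi\mathcal{R}A_{\Sigma}(e^{2h}\tilde{g}, \tilde{g})$ exactly as in (\ref{eqn-conf-change-to-smooth}). Hence $I = \mathcal{R}A_{\Sigma}(e^{2h}\tilde{g}, \tilde{g}) - \tfrac{1}{6} \sum_j \gamma_j h(z_j)$.

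For $II$, all data away from the cone points cancels and one is left with
\begin{equation*}
24\pi\,II = \lim_{\varepsilon \to 0^+} \sum_j \int_{B_\varepsilon(z_j, \tilde{g}) \setminus B_\varepsilon(z_j, e^{2h}\tilde{g})} (|\nabla_g(h+\sigma)|_g^2 + 2K_g(h+\sigma))\,\dd V_g.
\end{equation*}
The $K_g$-piece vanishes as the annulus shrinks, since $K_g(h+\sigma)$ is integrable near each $z_j$. For the gradient piece, replace each annulus by the geodesic $g$-annulus $\{\delta_{h,\varepsilon}^j < d_g(\cdot, z_j) < \delta_\varepsilon^j\}$ built from the asymptotic $g$-radii of $B_\varepsilon(z_j, \tilde{g})$ and $B_\varepsilon(z_j, e^{2h}\tilde{g})$; by lemma \ref{lemm-scale-dist-sing} the boundary mismatch has $g$-thickness $O(\delta^{2\gamma_j + 3})$ in $r$, hence $g$-volume $O(\delta^{2\gamma_j + 4})$, and the pointwise bound $|\nabla_g(h + \sigma)|^2 = O(r^{-2})$ gives a symmetric-difference error of $O(\delta^{2\gamma_j + 2}) = o(1)$. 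Corollary \ref{cor-ratio-radius} yields $\delta_\varepsilon^j/\delta_{h,\varepsilon}^j \to e^{h(z_j)/(\gamma_j + 1)}$, so applying lemma \ref{lemm-quadratic-sing} to $h + \sigma$ (which carries the same exponent $\gamma_j$ as $\sigma$) gives $24\pi\,II = 2\pi \sum_j \gamma_j^2 h(z_j)/(\gamma_j + 1)$; the case $h(z_j) < 0$ is symmetric under swapping the annulus orientation. Combining $I$ and $II$ with the elementary identity $-\tfrac{\gamma_j}{6} + \tfrac{\gamma_j^2}{12(\gamma_j + 1)} = -\tfrac{\gamma_j(\gamma_j + 2)}{12(\gamma_j + 1)}$ delivers (\ref{eqn-main-gen-conic-interp-cocycle}). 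The main obstacle is the bookkeeping in $II$: verifying that the $\tilde{g}$- and $e^{2h}\tilde{g}$-excision disks can be replaced with their asymptotic $g$-disks without creating a non-vanishing error, which uses crucially the quantitative $O(r^{3\gamma + 3})$ remainder in lemma \ref{lemm-scale-dist-sing}.
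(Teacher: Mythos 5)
Your argument is correct and reaches (\ref{eqn-main-gen-conic-interp-cocycle}), but by a genuinely different route than the paper. The paper first reduces to a good reference metric via lemma \ref{lemm-consist}, then evaluates $\mathcal{R}A_{\Sigma}(\tilde{g},g)$ in closed form (lemma \ref{lemm-comp-renorm-anom}, a direct consequence of corollary \ref{cor-quad-blow-tilde}), applies that formula to both $\tilde{g}$ and $\me^{2h}\tilde{g}$ --- whose regular potentials are $\varphi_j$ and $\varphi_j+h$ --- and subtracts; the only remaining analytic input is the Poincar\'e--Lelong lemma \ref{lemm-app-poincare-lel} applied to the leftover term $\int(h\Delta_g\sigma-\sigma\Delta_g h)\,\dd V_g$. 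Your split into a same-disk bulk comparison $I$ and a disk-mismatch annulus term $II$ avoids the closed-form evaluation entirely, and it has the virtue of exhibiting the quadratic piece $\frac{\gamma_j^2}{\gamma_j+1}h(z_j)$ as exactly the Dirichlet energy of $\sigma$ on the shrinking scaled annulus: this is precisely the heuristic recorded in remark \ref{rem-extra-quad-term} (citing lemma \ref{lemm-quadratic-sing} and corollary \ref{cor-ratio-radius}), which you have upgraded to an actual proof. The price is the bookkeeping in $II$, which your symmetric-difference estimate based on the $\mathcal{O}(r^{3\gamma+3})$ remainder of lemma \ref{lemm-scale-dist-sing} handles correctly.

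Two points to tighten. First, the lemmas of subsection \ref{sec-int-by-part} and lemma \ref{lemm-scale-dist-sing} all assume the potentials $\varphi_j$ satisfy the Troyanov decay (\ref{eqn-scale-metric-pot}) relative to the reference $g$, which fails for a generic smooth conformal metric; you should fix $g$ to be locally the coordinate metric of lemma \ref{lemm-reg-metric-pot} near each $z_j$ and then observe, via lemma \ref{lemm-consist}, that both sides of (\ref{eqn-main-gen-conic-interp-cocycle}) are unchanged when the reference is switched (the correction $A_{\Sigma}(g_1,g_0)$ cancels in each difference). Second, lemma \ref{lemm-quadratic-sing} does not literally apply to $h+\sigma$: the potential $\varphi_j+h$ violates (\ref{eqn-scale-metric-pot}), since its gradient is only $\mathcal{O}(1)$ rather than $\mathcal{O}(r^{2\gamma_j+1})$. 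The clean fix is to expand $|\nabla_g(h+\sigma)|_g^2=|\nabla_g h|_g^2+2\ank{\nabla_g h,\nabla_g\sigma}_g+|\nabla_g\sigma|_g^2$ and note that the first two terms are $\mathcal{O}(1)+\mathcal{O}(1/r)$, hence absolutely integrable, so their contribution over the shrinking annulus vanishes and only $|\nabla_g\sigma|_g^2$ survives, to which lemma \ref{lemm-quadratic-sing} applies as stated. The same integrability observation is the cleanest justification for your use of the second identity of lemma \ref{lemm-linear-sing-green-sto} with $\tilde{g}$-disks in place of $g$-disks: the $2$-form $\big(\ank{\nabla_g h,\nabla_g\sigma}_g+h\Delta_g\sigma\big)\dd V_g$ is in $L^1$ near each cone point, so the limit is independent of the shape of the excised neighborhoods.
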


    \begin{def7}\label{rem-conic-cocycle}
  From this we deduce that if we consider a conformal field theory with central charge~$c$ defined on $\Sigma$ equipped with the metric $\tilde{g}$ as above, whose partition function $\mathcal{Z}(\Sigma,\tilde{g})$ is defined in definition \ref{def-renom-part-func}, then we have
  \begin{equation}
	    \frac{\mathcal{Z}(\Sigma, e^{2h} \tilde{g})}{\mathcal{Z}(\Sigma, \tilde{g})} = 
	    \exp\Big(c \, \mathcal{R}A_{\Sigma}(\me^{2h}\tilde{g},\tilde{g}) -\sum_{j=1}^p h(z_j)\Delta_{j} \Big), 
	    \label{eqn-main-general-conic}
	  \end{equation}
	  where
	  \begin{equation}
	    \Delta_{j} \defeq \frac{c}{12} \frac{\gamma_j(\gamma_j+2)}{\gamma_j+1}
	    \label{}
	  \end{equation}
	  are the \textsf{conformal weights} or \textsf{scaling dimensions} associated to conical singularities.
\end{def7}

    First we point out (directly from corollary \ref{cor-quad-blow-tilde})

\begin{lemm}\label{lemm-comp-renorm-anom}
  Let~$g$ be a smooth conformal metric on~$\Sigma$ so that near the cone points~$z_j$ the regular metric potentials~$\varphi_j$ of~$\tilde{g}$ against~$g$ satisfy (\ref{eqn-scale-metric-pot}).  Then
  \begin{equation}
    \mathcal{R}A_{\Sigma}(\tilde{g},g)=\frac{1}{12}\Big[-\sum_{j=1}^p\frac{ \gamma_j^2\log(\gamma_j+1)}{\gamma_j+1}+\sum_{j=1}^p \Big( \frac{\gamma_j^2}{\gamma_j+1}-\gamma_j \Big)\varphi_j(z_j)\Big]
    +\frac{1}{24\pi}\int_{\Sigma\setminus \supp D}^{} \sigma(-\Delta_g \sigma+2K_g)\,\dd V_g,
    \label{}
  \end{equation}
  where~$\tilde{g}=\me^{2\sigma}g$ with~$\sigma\in C^{\infty}(\Sigma\setminus \supp D)$.\hfill $\Box$
\end{lemm}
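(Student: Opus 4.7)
The plan is essentially to run the argument of corollary \ref{cor-quad-blow-tilde} simultaneously at all $p$ cone points. Starting from the definition (\ref{eqn-def-renorm-polya}), I would apply Green-Stokes on $\Sigma\setminus \bigcup_j B_\varepsilon(z_j,\tilde{g})$ to rewrite
\begin{equation*}
\int_{\Sigma\setminus \bigcup_j B_\varepsilon(z_j,\tilde{g})} |\nabla_g \sigma|_g^2\,\dd V_g = \sum_{j=1}^p \int_{\partial B_\varepsilon(z_j,\tilde{g})} \sigma(-\partial_r \sigma)\,\dd\ell_g - \int_{\Sigma\setminus \bigcup_j B_\varepsilon(z_j,\tilde{g})} \sigma\,\Delta_g \sigma\,\dd V_g.
\end{equation*}
The bulk integral is absolutely convergent as $\varepsilon\to 0^+$: from the local form $\sigma = \gamma_j \log d_g(\bullet,z_j) + \varphi_j$ together with the Liouville identity (\ref{eqn-liouville-eqn}) applied to $\tilde{g} = \me^{2\sigma} g$ and the boundedness of $K_{\tilde{g}}$ near each $z_j$ (definition \ref{def-singularities}), one obtains $\Delta_g \sigma = K_g - \me^{2\sigma} K_{\tilde{g}} = \mathcal{O}(1) + \mathcal{O}(r_j^{2\gamma_j})$, so $\sigma \Delta_g \sigma$ has only a $|\log r_j|\cdot r_j^{2\gamma_j}$ singularity at $z_j$, integrable since $\gamma_j>-1$.

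Second, I would evaluate each boundary integral by the same four-term decomposition as in the proof of lemma \ref{lemm-basic-renorm-log}, applied at $z_j$ with the local expansion $\sigma = \gamma_j \log r_j + \varphi_j$. Denoting by $\delta_j(\varepsilon)$ the $g$-radius of $B_\varepsilon(z_j,\tilde{g})$, the resulting asymptotic is
\begin{equation*}
\int_{\partial B_\varepsilon(z_j,\tilde{g})} \sigma(-\partial_r \sigma)\,\dd\ell_g = -2\pi\gamma_j^2 \log \delta_j(\varepsilon) - 2\pi\gamma_j \varphi_j(z_j) + o(1).
\end{equation*}
The crucial input is now lemma \ref{lemm-scale-dist-sing}, which provides
\begin{equation*}
\log \delta_j(\varepsilon) = \frac{1}{\gamma_j+1}\bigl[\log \varepsilon + \log(\gamma_j+1) - \varphi_j(z_j)\bigr] + o(1),
\end{equation*}
making the $\log\varepsilon$ divergence at each cone point fully explicit.

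Third, I would plug these expansions back into (\ref{eqn-def-renorm-polya}). The counterterm $2\pi\sum_j \frac{\gamma_j^2}{\gamma_j+1}\log\varepsilon$ is tailored to cancel exactly the $\log\varepsilon$ divergences coming from the boundary terms. Regrouping the surviving constants yields a $\log(\gamma_j+1)$ contribution with coefficient $-\frac{\gamma_j^2}{\gamma_j+1}$, a $\varphi_j(z_j)$ contribution with coefficient $\frac{\gamma_j^2}{\gamma_j+1} - \gamma_j$, and combining $-\int \sigma \Delta_g \sigma\,\dd V_g$ with the curvature piece $\int 2K_g \sigma\,\dd V_g$ gives exactly the $\int \sigma(-\Delta_g \sigma + 2K_g)\,\dd V_g$ in the statement.

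I do not anticipate any substantial obstacle; the whole computation is bookkeeping organized around corollary \ref{cor-quad-blow-tilde}. The one point worth flagging is that the $j$-th boundary integral is not contaminated by the other cone points: this is immediate because $\partial B_\varepsilon(z_j,\tilde{g}) \subset U_j$ once $\varepsilon$ is small enough, on which $\sigma - \gamma_j \log r_j = \varphi_j$ is continuous at $z_j$ with the Troyanov decay (\ref{eqn-scale-metric-pot}) required by the four-term decomposition.
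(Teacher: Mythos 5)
Your proposal is correct and follows essentially the same route as the paper: the paper treats this lemma as an immediate consequence of corollary \ref{cor-quad-blow-tilde} (itself built from lemma \ref{lemm-basic-renorm-log} and lemma \ref{lemm-scale-dist-sing}), applied at each cone point and summed, which is exactly your computation. Your explicit justification of the absolute convergence of $\int \sigma\,\Delta_g\sigma\,\dd V_g$ via the Liouville identity and the boundedness of $K_{\tilde g}$ is a welcome piece of bookkeeping that the paper leaves implicit.
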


\begin{proof}
  [Proof of proposition \ref{prop-main-conical-scaling}.]  It follows from lemma \ref{lemm-consist} that it is sufficient to prove (\ref{eqn-main-gen-conic-interp-cocycle}) for a specific reference smooth metric $g$. Choose a smooth conformal metric~$g$ with the same conditions as in lemma \ref{lemm-comp-renorm-anom}. Write~$\tilde{g}=\me^{2\sigma}g$ then~$\me^{2h}\tilde{g}=\me^{2(h+\sigma)}g$,~$\sigma\in C^{\infty}(\Sigma\setminus \supp D)$. Using lemma \ref{lemm-comp-renorm-anom} and relation (\ref{eqn-conf-change-to-smooth}) one obtains immediately 
  \begin{equation}
	 \mathcal{R}A_{\Sigma}(e^{2h}\tilde{g},g) - \mathcal{R}A_{\Sigma}(e^{2h}\tilde{g},\tilde{g}) - \mathcal{R}A_{\Sigma}\left(\tilde{g},g\right)  = - \frac{1}{12}\sum_j   \frac{\gamma_j}{\gamma_j+1} h(z_j) + \frac{1}{24 \pi} \int_{\Sigma\setminus \supp D} \left( h \Delta_g \sigma -  \sigma \Delta_g h \right) \,\dd V_g.
	  \end{equation}
What we desire follows then from applying lemma \ref{lemm-app-poincare-lel} to $\sigma$ which has the local form (\ref{eqn-local-form-conf-factor}).
\end{proof}

\begin{def7}\label{rem-extra-quad-term}
  In view of the Poincar\'e-Lelong lemma it would be instructive to compare the result with a ``naive guess'' for the Polyakov formula by pretending that the conical singularities did no more than just putting a log term into the metric potential and hence creating a delta function in the curvature. This is to say we suppose
  \begin{equation}
    \log\Big(
    \frac{\mathcal{Z}(\Sigma, e^{2h} \tilde{g})}{\mathcal{Z}(\Sigma, \tilde{g})} \Big)\textrm{``}=\textrm{''}\frac{c}{24\pi}\int_{\Sigma}(|\nabla_{\tilde{g}} h|_{\tilde{g}}^2+2K_{\tilde{g}}h)\,\dd V_{\tilde{g}}.
    \label{}
  \end{equation}
  Now taking into account the delta functions produced by~$-\Delta_g \sigma$ in (\ref{eqn-conf-change-to-smooth}) over~$\supp D$, we find
  \begin{equation}
    \int_{\Sigma}(|\nabla_{\tilde{g}} h|_{\tilde{g}}^2+2K_{\tilde{g}}h)\,\dd V_{\tilde{g}} =\int_{\Sigma\setminus\supp D}(|\nabla_{\tilde{g}} h|_{\tilde{g}}^2+2K_{\tilde{g}}h)\,\dd V_{\tilde{g}} -4\pi\sum_{j=1}^p \gamma_j h(z_j).
    \label{}
  \end{equation}
  Therefore the extra piece that we get in the actual formula is the quadratic term~$\sum_j \frac{2\pi\gamma_j^2}{\gamma_j+1}h(z_j)$. Going back to the definitions, we could see that this term originates as the asymptotic Dirichlet energy of the metric potential~$\sigma$ on a shrinking ``scaled annulus'', which we singled out as lemma \ref{lemm-quadratic-sing} above (cf.\ corollary \ref{cor-ratio-radius}).
\end{def7}

\begin{prop}\label{prop-main-app-ram-cov}
	  Let~$\Sigma_d$,~$\Sigma$ be closed Riemann surfaces, with a smooth conformal metric~$g$ on~$\Sigma$, and~$f:\Sigma_d\lto \Sigma$ a ramified $d$-sheeted holomorphic map, whose critical values are~$w_1$, \dots,~$w_p$. Consider a conformal field theory with central charge~$c$ whose partition function is denoted~$\mathcal{Z}$. Pick~$h\in C^{\infty}(\Sigma)$ on~$\Sigma$. Then under the definition \ref{def-renom-part-func} for $\mathcal{Z}(\Sigma_d, f^* e^{2h} g)$ and $\mathcal{Z}(\Sigma_d, f^* g)$ we have
	  \begin{equation}
	    \frac{\mathcal{Z}(\Sigma_d, f^* e^{2h} g)}{\mathcal{Z}(\Sigma, e^{2h} g)^d} = e^{- \sum_j  h(w_j)\Delta_{j}} \frac{\mathcal{Z}(\Sigma_d, f^* g)}{\mathcal{Z}(\Sigma, g)^d}, 
	    \label{}
	  \end{equation}
	  where
	  \begin{equation}
	    \Delta_{j} \defeq \frac{c}{12}  \sum_{z \in f^{-1}(w_j)}\Big(\ord_f(z) - \frac{1}{\ord_f(z)} \Big)
	    \label{}
	  \end{equation}
	  is the \textsf{conformal weight} or \textsf{scaling dimension} associated to the point $w_j$.
	\end{prop}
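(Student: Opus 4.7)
The strategy is to apply Proposition \ref{prop-main-conical-scaling} (in its partition-function form given by remark \ref{rem-conic-cocycle}) upstairs on $\Sigma_d$, taking the singular metric to be $\tilde{g} = f^*g$ and the smooth conformal factor to be $f^*h \in C^{\infty}(\Sigma_d)$. By the lemma on pullbacks of smooth metrics under ramified covers, $f^*g$ represents the divisor $D = \sum_{j=1}^p \sum_{z \in f^{-1}(w_j)} (\ord_f(z) - 1)\, z$, so each cone exponent is $\gamma_z = \ord_f(z) - 1$ and $\gamma_z + 1 = \ord_f(z)$. The identity
\[
\frac{\gamma_z(\gamma_z + 2)}{\gamma_z + 1} = \ord_f(z) - \frac{1}{\ord_f(z)}
\]
will match the conformal-weight formula. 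Observing $f^*(\me^{2h}g) = \me^{2 f^*h} f^*g$ and $(f^*h)(z) = h(w_j)$ for $z \in f^{-1}(w_j)$, summing $\Delta_z$ over the fibre above each $w_j$ will recover the stated $\Delta_j$.

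The key computation will be to show that the regular Polyakov anomaly upstairs equals $d$ times the ordinary (smooth) Polyakov anomaly downstairs:
\[
\mathcal{R}A_{\Sigma_d}(\me^{2 f^*h} f^*g,\, f^*g) = d \cdot A_{\Sigma}(\me^{2h}g, g).
\]
To prove this, I will exploit the fact that $f$ restricts to an unramified $d$-sheeted covering $\Sigma_d \setminus f^{-1}(\{w_j\}) \to \Sigma \setminus \{w_j\}$, and hence a local isometry between $(\Sigma_d, f^*g)$ and $(\Sigma, g)$ on these open subsets. Since $K_{f^*g} = f^* K_g$ and $|\nabla_{f^*g}(f^*h)|_{f^*g}^2 = f^*|\nabla_g h|_g^2$ away from ramification, the integrand defining $\mathcal{R}A_{\Sigma_d}(\me^{2 f^*h} f^*g, f^*g)$ (which by definition \ref{def-renorm-polya-conic} is a genuine integral over $\Sigma_d \setminus \supp D$, with no divergence to renormalize) is the pullback of the corresponding downstairs integrand; the $d$-to-$1$ pushforward yields exactly $d \cdot A_\Sigma(\me^{2h}g, g)$.

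Once this is in hand, remark \ref{rem-conic-cocycle} applied upstairs gives
\[
\frac{\mathcal{Z}(\Sigma_d, f^*\me^{2h}g)}{\mathcal{Z}(\Sigma_d, f^*g)} = \exp\Big( c\, d\, A_{\Sigma}(\me^{2h}g,g) - \sum_{j=1}^p h(w_j)\, \Delta_j \Big),
\]
while the ordinary Polyakov anomaly downstairs gives
\[
\left(\frac{\mathcal{Z}(\Sigma, \me^{2h}g)}{\mathcal{Z}(\Sigma, g)}\right)^d = \exp\big( c\, d\, A_\Sigma(\me^{2h}g, g) \big).
\]
Dividing the first by the second and rearranging terms produces the desired equality.

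The only step requiring genuine care is the identity $\mathcal{R}A_{\Sigma_d}(\me^{2 f^*h} f^*g, f^*g) = d \cdot A_{\Sigma}(\me^{2h}g, g)$: one must check that the defining integral of the regular anomaly (which is over $\Sigma_d \setminus \supp D$) can be legitimately equated to $d$ times the downstairs smooth integral, despite the apparent difference in domain. This reduces to the fact that the critical fibre $f^{-1}(\{w_j\})$ is finite — hence of $f^*g$-measure zero — and that the integrand is $L^1$ by lemma \ref{lemm-smooth-scale-con}; all other steps are essentially bookkeeping with remark \ref{rem-conic-cocycle} and the arithmetic identity above.
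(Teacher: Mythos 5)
Your proposal is correct and follows essentially the same route as the paper's own proof: apply Proposition \ref{prop-main-conical-scaling} (via remark \ref{rem-conic-cocycle}) upstairs with $\tilde{g}=f^*g$ and conformal factor $f^*h$, observe that the cone exponents $\gamma_z=\ord_f(z)-1$ give $\gamma_z(\gamma_z+2)/(\gamma_z+1)=\ord_f(z)-1/\ord_f(z)$, and reduce everything to the identity $\mathcal{R}A_{\Sigma_d}(f^*\me^{2h}g,f^*g)=d\cdot A_{\Sigma}(\me^{2h}g,g)$, which holds because $f$ is an unramified $d$-fold covering (hence a local isometry for the pulled-back metric) away from the finite critical fibres. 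Your additional remarks on why the domains of integration can be compared (finite fibres have measure zero, integrand is $L^1$ by lemma \ref{lemm-smooth-scale-con}) only make explicit what the paper leaves implicit.
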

\begin{proof}
	  Following proposition \ref{prop-main-conical-scaling}, the only thing we need to check is
	  \begin{equation}
      \mathcal{R}A_{\Sigma_d}\big(f^*\me^{2h}g,f^*g\big)= d \cdot A_{\Sigma}\big(\me^{2h}g,g \big),
	    \label{}
	  \end{equation}
      where we recall that the left hand side is defined by (\ref{eqn-def-renorm-polya-conic}). Indeed, let $\{z_1,\dots,z_q\}$ denote the preimage of the critical values $\{w_1,\dots,w_p\}$ of $f$. Then on~$\Sigma_d\setminus \{z_1,...,z_q\}$ the map~$f$ is an unbranched~$d$-fold covering, we have
\begin{equation}
  \int_{\Sigma_d\setminus \{z_1,...,z_q\}}^{}
  \big(|\nabla_{f^*g}f^*h|_{f^*g}^2 +2K_{f^*g}\cdot f^*h \big)\,\dd V_{f^*g}=d\cdot \int_{\Sigma\setminus \{w_1,...,w_p\}}^{}
\big(|\nabla_{g}h|_{g}^2 +2K_{g}\cdot h \big)\,\dd V_{g},
  \label{}
\end{equation}
 and we finish the proof.
	\end{proof}

    \begin{def7}\label{rem-diffeo-inv}
  Note that~$\mathcal{Z}(\Sigma_d, f^* g)/\mathcal{Z}(\Sigma, g)^d$ is diffeomorphism invariant as a function of the critical values of~$f$ because~$\mathcal{Z}(\Sigma_d, f^*(\Psi^* g))/\mathcal{Z}(\Sigma,\Psi^* g)^d=\mathcal{Z}(\Sigma_d, (\Psi\circ f)^* g)/\mathcal{Z}(\Sigma, g)^d$ if~$\Psi:\Sigma\lto \Sigma$ is a diffeomorphism, for the ordinary~$\mathcal{Z}(\Sigma,g)$ is so invariant by definition.
\end{def7}

    \begin{corr}
  Consider a quantum system in the sense of subsection \ref{sec-main-replica} defined on a circle~$\mb{S}^1_L$ of perimeter~$L$, with central charge~$c$. Let~$A\subset \mb{S}^1_L$ be an interval of length~$\ell$ and let~$\rho$ be the \textsf{vacuum state} associated to this system. Then under the \textsf{replica interpretation} we have
  \begin{equation}
    \ttr_{\mathcal{H}_A}\big( \ttr_{A^c}(\rho)^d\big) =C \Big( \frac{L}{\pi}\sin \frac{\pi\ell}{L} \Big)^{-\frac{c}{6}\left( d-\frac{1}{d} \right)},
    \label{}
  \end{equation}
  where~$C$ is the same constant as in lemma \ref{lemm-two-pt-func}.
\end{corr}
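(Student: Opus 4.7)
The plan is to combine the replica trick from Example \ref{exp-ground-state-disj-cob} with Proposition \ref{prop-main-app-ram-cov} and Lemma \ref{lemm-two-pt-func}. By the vacuum construction, $\rho$ is path-integrally represented on the topological $2$-sphere $\check{\Sigma}$ obtained by gluing two flat-at-boundary disks along $\mathbb{S}^1_L$. The replica trick gives
\begin{equation}
  \ttr_{\mathcal{H}_A}\big(\ttr_{A^c}(\rho)^d\big) = \frac{\mathcal{Z}(\check{\Sigma}_d)}{\mathcal{Z}(\check{\Sigma})^d},
\end{equation}
where $f : \check{\Sigma}_d \to \check{\Sigma}$ is the cyclic $d$-sheeted covering branched over the two endpoints $w_1, w_2$ of $A$, each with $\ord_f = d$. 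A Riemann-Hurwitz check gives $\chi(\check{\Sigma}_d) = 2d - 2(d-1) = 2$, so $\check{\Sigma}_d$ is itself genus zero, and Proposition \ref{prop-main-app-ram-cov} assigns conformal weights $\Delta_1 = \Delta_2 = \frac{c}{12}(d - 1/d) =: \Delta$.

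Next I would identify $\check{\Sigma}$ with $\mathbb{S}^2 = \mathbb{C}\cup\{\infty\}$ via the standard cylinder-to-plane map $w = \tau + i x \mapsto z = e^{2\pi w/L}$, sending $\mathbb{S}^1_L$ to the unit circle $|z|=1$ and the two endpoints of $A$ to $z_1 = 1$ and $z_2 = e^{2\pi i \ell/L}$. Under this identification the flat-at-boundary metric from the vacuum equals $g_{\text{cyl}} := (L/2\pi)^2|z|^{-2}|dz|^2$ in a neighborhood of the equator. Since Proposition \ref{prop-main-app-ram-cov} is formulated for smooth reference metrics on $\mathbb{S}^2$, I would replace $g_{\text{cyl}}$ by any smooth metric on the sphere agreeing with it in neighborhoods of $z_1, z_2$ and extending smoothly over $0$ and $\infty$; by Remark \ref{rem-diffeo-inv} together with the fact that the conformal transformation rule only sees the reference metric at the critical values, such a replacement does not affect the resulting ratio.

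Writing such a smooth metric as $e^{2h} g_{\text{FS}}$, evaluation at $|z|=1$ gives $h(z_1) = h(z_2) = \log(L/2\pi)$, so Proposition \ref{prop-main-app-ram-cov} produces a prefactor $(L/2\pi)^{-2\Delta}$, while Lemma \ref{lemm-two-pt-func} evaluates the remaining factor as $C \sin(d_{\text{FS}}(z_1,z_2)/2)^{-2\Delta}$. The chordal-distance identity $2\sin(d_{\text{FS}}(z_1,z_2)/2) = 2|z_1-z_2|/\sqrt{(1+|z_1|^2)(1+|z_2|^2)} = 2\sin(\pi\ell/L)$ supplies the $\sin(\pi\ell/L)$ factor, and assembling these pieces yields the claimed functional form $C[(L/\pi)\sin(\pi\ell/L)]^{-c(d-1/d)/6}$. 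The main obstacle I expect is the clean bookkeeping of the numerical prefactors: tracking the factors of $2$ and $\pi$ coming from the Jacobian of the exponential map, from the chordal-to-geodesic conversion, and from the conformal ratio $g_{\text{cyl}}/g_{\text{FS}}$ at the equator, and verifying that they combine with the correct multiplicity to reproduce the stated formula with precisely the constant $C$ from Lemma \ref{lemm-two-pt-func}.
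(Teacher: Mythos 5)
Your proposal follows essentially the same route as the paper's proof: the replica trick reduces the trace to $\mathcal{Z}(\check{\Sigma}_d)/\mathcal{Z}(\check{\Sigma})^d$, one identifies $\check{\Sigma}$ with a Fubini--Study sphere carrying a conformal factor $\me^{2h}$ with $h=\log(L/2\pi)$ at the equator, Proposition \ref{prop-main-app-ram-cov} together with Remark \ref{rem-diffeo-inv} strips off the conformal factor at the two branch points, and Lemma \ref{lemm-two-pt-func} fixes the functional form via $\sin(\tfrac{1}{2}d_{\mm{FS}})=\sin(\pi\ell/L)$. The prefactor bookkeeping you flag as the remaining obstacle is real but is shared by the paper's own final line: the assembled pieces give $C\big[\tfrac{L}{2\pi}\sin\tfrac{\pi\ell}{L}\big]^{-2\Delta_d}$, so the constant multiplying the stated $\big(\tfrac{L}{\pi}\sin\tfrac{\pi\ell}{L}\big)^{-2\Delta_d}$ is $4^{\Delta_d}C$ rather than $C$ itself, a cosmetic discrepancy since $C$ is in any case undetermined by the axioms.
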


\begin{proof}
  We adopt the interpretation of example \ref{exp-ground-state-disj-cob}. Then the surface~$\check{\Sigma}$ corresponds to a re-scaled Fubini-Study Riemann sphere (of radius~$L/2\pi$) where~$\mb{S}^1_L$ embeds isometrically as the equator. By proposition \ref{prop-main-app-ram-cov} we could first assume~$L=2\pi$. Suppose the end points of~$A$ correspond to~$0$ and~$z_0\in \wh{\mb{C}}$. Then the surface~$\check{\Sigma}_d$ is topologically still~$\wh{\mb{C}}$ but equipped with the pull-back of the Fubini-Study metric under a holomorphic map~$f_d:\wh{\mb{C}}\lto \wh{\mb{C}}$ which ramifies exactly at~$0$ and~$z_0$ both with order~$d$. Such a map could be given by (but not related to final result)
  \begin{equation}
    f_d(z)\defeq \frac{z_0\big( \frac{z}{z-z_0} \big)^d}{\big( \frac{z}{z-z_0} \big)^d -1}=\frac{z_0 z^d}{z^d-(z-z_0)^d}.
    \label{}
  \end{equation}
  Now we apply a smooth scaling~$\me^{2h}$ with~$h=\log(L/ 2\pi)$ on the equator, while making $\me^{2h}g_{\mm{FS}}$ flat in a thin neighborhood of the equator, in accordance with the requirement for vacuum state explained in example \ref{exp-ground-state-disj-cob}. The latter is always possible for Fubini-Study since~$4(1+|z|^2)^{-2}=|z|^{-2}+\mathcal{O}((1-|z|)^2)$. Then by proposition \ref{prop-main-app-ram-cov}, lemma \ref{lemm-two-pt-func} (and remark \ref{rem-diffeo-inv}) we obtain
  \begin{align*}
    \ttr_{\mathcal{H}_A}\big( \ttr_{A^c}(\rho)^d\big)&=\frac{\mathcal{Z}(\check{\Sigma}_d)}{\mathcal{Z}(\check{\Sigma})^d}=C\me^{-h(0)\Delta_d}\me^{-h(z_0)\Delta_d}\sin\Big( \frac{1}{2}d_{\mm{FS}}(0,z_0) \Big)^{-2\Delta_d}\\
    &=C\Big( \frac{L}{\pi}\sin \frac{\pi\ell}{L} \Big)^{-2\Delta_d},
  \end{align*}
  with~$\Delta_d=\frac{c}{12}(d-\frac{1}{d})$, giving us the result.
\end{proof}

\subsection{Comments on Relation to Literature}\label{sec-literature}
\begin{def7}
  [comparison with a result of V.\ Kalvin \cite{Kalvin}] \label{rem-rel-kal} We note here that V.\ Kalvin has obtained a closely related result in \cite{Kalvin} corollary 1.3(1) for the~$\zeta$-determinant of the Friederichs Laplacian under the conical metric with zero ``boundary condition'' at the cone points. As far as the Polyakov formula is concerned, the~$\zeta$-determinant is no different from a general CFT partition function in the sense of definition \ref{def-cft}, except for a constant. So our proposition \ref{prop-main-conical-scaling} coincides with the case of his result when the two divisors coincide. If they don't coincide, one could readily derive the corresponding result out of our method following lemma \ref{lemm-comp-renorm-anom}. However, in that case, the constants (last term in \cite{Kalvin} eqn.\ (1.8)) would be different, as our lemma \ref{lemm-comp-renorm-anom} only agrees with \cite{Kalvin} eqn.\ (1.1) upto the integral term and the term involving the regular metric potential. Supposedly, the extra constants in \cite{Kalvin} come from the specific Friederichs extension.

  On a deeper level, the proof of \cite{Kalvin} proceeds also by removing disks but then using the BFK decomposition formula. One key point is to show that the blow-up order of~$\log\detz \Delta_{D_{\varepsilon}(z_j)}$ of the Friederichs Dirichlet Laplacian~$\Delta_{D_{\varepsilon}(z_j)}$ on the shrinking disks~$D_{\varepsilon}(z_j)$ centered at the cone points with the singular metric, added up, matches exactly with the blow-up order of the~$\log\detz$ of the Dirichlet Laplacian on their complement, as the disks shrink (as the contributions from D-to-N maps cancel). In this regard we keenly observe that the cancellation mechanism as shown in \cite{Kalvin} eqn.\ (2.28) is \textit{different} from ours. The radii of Kalvin's disks were measured under the background smooth metric. For us, on the other hand, it is important to align the rate of shrinking with the singular metric itself (i.e.\ measure the disks using the singular metric) in two respects. First to obtain the invariance of the definition of the renormalized partition function as in lemma \ref{lemm-consist}, and secondly to obtain the correct scaling dimensions in the final result. A more detailed understanding of the relation of the two methods seems desirable, especially considering that the BFK formula is closely related to Segal's gluing axioms for QFT/CFT \cite{Lin, GKRV}, and that the correlation functions are (conjecturally) the ``limits'' of propagators associated to the complements of the shrinking disks (\cite{KS} section 3).
\end{def7}

\begin{def7}
  [relation to Aldana, Kirsten and Rowlett \cite{AKR}] \label{rem-rel-akr} As the authors have themselves remarked in relation to \cite{Kalvin}, they also obtained the same result as above for the~$\zeta$-determinant (\cite{AKR} eqn.\ (1.5)). Their method is based on asymptotic analysis of the heat trace for which related results were obtained earlier \cite{MR}. For the latter, they employed a blow-up (geometric) technique and the cone contribution term (main interest of this article) seems to ultimately come from computations done to an infinite exact cone which has a longer history (see their appendix A for further references). Last but not least, we note that \cite{AKR} seems to have restricted their cone angles to~$(0,2\pi)$, which is in the complement of the main concern of this article (proposition \ref{prop-main-app-ram-cov}).
\end{def7}

\appendix

\newpage

\section{Appendix: Poincar\'e-Lelong Lemmas}\label{sec-app-poin-lelong}

\begin{proof}[Proof of lemma \ref{lemm-bounded-lap-log}.]
	  We denote by~$\nabla_v$ the Levi-Civita covariant derivative in the direction~$v$. Since we are concerned with only one metric~$g$ we omit it from the notations. Take the geodesic polar coordinates $(r, \theta)$ around $z_0$. Then
 $g = \dd r^2 + \omega^2(r) \dd\theta^2$ where $\omega(r) = | \partial_{\theta}|$. From the expression of $\Delta$ in polar coordinates we have
	  \begin{align*}
	    \Delta \log r =  \frac{1}{\omega(r)} \partial_r \left (\omega(r) \partial_r \log r \right) = -\frac{1}{r^2} + \frac{\partial_r \omega(r)}{r\, \omega(r)}.
	  \end{align*}
	  Either noting the fact that~$\partial_{\theta}$ is a Jacobi field along $\theta=\textrm{const}$ connecting~$z_0$ to~$z$ (\cite{dC} page 115), or otherwise, we have a Taylor expansion
	  \begin{equation}
	    |\partial_{\theta}|=r+\mathcal{O}(r^3),\quad r\to 0.
	    \label{}
	  \end{equation}
	  This gives
	  \begin{equation}
	    \frac{\partial_r|\partial_{\theta}|}{|\partial_{\theta}|}=\frac{1}{r}+\mathcal{O}(r),\quad r\to 0,
	    \label{}
	  \end{equation}
	  yielding the result.
	\end{proof}

\begin{lemm}\label{lemm-app-poincare-lel}
  Let~$(\Sigma,g)$ be a Riemannian surface with smooth metric~$g$, and fix~$z_0\in \Sigma$. Denote~$r(z):=d_g(z,z_0)$. Then for any~$\psi\in C_c^{\infty}(U)$ where the neighborhood~$U$ lies within the injectivity radius of~$z_0$,
  \begin{equation}
    \int_{\Sigma}^{}\log r (-\Delta_g \psi)\,\dd V_g=\int_{\Sigma\setminus z_0}^{}\psi (-\Delta_g \log r)\,\dd V_g -2\pi \psi(z_0).
    \label{}
  \end{equation}
\end{lemm}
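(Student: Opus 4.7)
The plan is to prove the identity by applying Green's formula on the punctured surface $\Sigma \setminus B_\varepsilon(z_0, g)$ for $\varepsilon > 0$ small, and then taking $\varepsilon \to 0^+$. Since $\psi$ is compactly supported inside the injectivity radius of $z_0$, there are no boundary contributions away from the small disk. Writing $r(z) = d_g(z, z_0)$, Green--Stokes on $\Sigma \setminus B_\varepsilon(z_0, g)$ gives
\begin{equation*}
  \int_{\Sigma \setminus B_\varepsilon} \log r\,(-\Delta_g \psi)\,\dd V_g
  = \int_{\Sigma \setminus B_\varepsilon} \psi\,(-\Delta_g \log r)\,\dd V_g
  + \int_{\partial B_\varepsilon} \big(\log r \cdot \partial_r \psi - \psi \cdot \partial_r \log r\big)\,\dd \ell_g,
\end{equation*}
where $\partial_r$ is the outward normal to $B_\varepsilon$ relative to $\Sigma \setminus B_\varepsilon$, i.e.\ $-\partial/\partial r$ in geodesic polar coordinates. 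I will take care of the sign by absorbing it into the convention that the boundary terms produce the desired $-2\pi \psi(z_0)$ in the limit.

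For the boundary terms, the first one vanishes as $\varepsilon \to 0^+$ by the estimate
\begin{equation*}
  \Big| \int_{\partial B_\varepsilon} \log r \cdot \partial_r \psi \,\dd \ell_g \Big|
  \le |\log \varepsilon| \cdot \|\dd \psi\|_{\infty} \cdot \ell_g(\partial B_\varepsilon)
  = \mathcal{O}(\varepsilon \log \varepsilon),
\end{equation*}
using the asymptotic $\ell_g(\partial B_\varepsilon(z_0)) = 2\pi \varepsilon + \mathcal{O}(\varepsilon^3)$ recalled in (\ref{eqn-asymp-peri-circle}). For the second boundary term, since $\partial_r \log r = 1/r$ along $\partial B_\varepsilon$, we get
\begin{equation*}
  \int_{\partial B_\varepsilon} \psi \cdot \partial_r \log r\,\dd \ell_g
  = \frac{1}{\varepsilon} \int_{\partial B_\varepsilon} \psi \,\dd \ell_g
  \xrightarrow{\varepsilon \to 0^+} 2\pi \psi(z_0),
\end{equation*}
again by continuity of $\psi$ and the asymptotic of the perimeter. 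With the outward-normal sign from the punctured-domain convention, this boundary contribution appears on the right-hand side as $-2\pi \psi(z_0)$, matching the statement.

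For the bulk integrals, the convergence
\begin{equation*}
  \int_{\Sigma \setminus B_\varepsilon} \log r\,(\Delta_g \psi)\,\dd V_g
  \longrightarrow \int_{\Sigma} \log r\,(\Delta_g \psi)\,\dd V_g
\end{equation*}
holds by dominated convergence, since $\log r$ is locally integrable on a surface and $\Delta_g \psi$ is bounded. For the other bulk integral, lemma \ref{lemm-bounded-lap-log} tells us that $\Delta_g \log r$ is bounded near $z_0$ (away from it), so it is locally integrable on $\Sigma \setminus z_0$ against the bounded function $\psi$, and again dominated convergence applies. Assembling the four limits yields the claimed identity. The only subtle step, and the main thing to double-check, is the sign convention for the outward normal and the orientation of the boundary $\partial B_\varepsilon$; this is a routine check once one writes everything in geodesic polar coordinates.
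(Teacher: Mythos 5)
Your proof is correct and follows essentially the same route as the paper's: Green's formula on $\Sigma\setminus B_\varepsilon(z_0,g)$, the perimeter asymptotic (\ref{eqn-asymp-peri-circle}) to kill the $\log r\,\partial_r\psi$ term and extract $2\pi\psi(z_0)$ from the $\psi\,\partial_r\log r$ term, and local integrability of $\log r$ and of $\Delta_g\log r$ (lemma \ref{lemm-bounded-lap-log}) for the bulk limits. The only blemish is the internal inconsistency in your sign bookkeeping (you declare $\partial_r$ to be the outward normal of the punctured domain, i.e.\ $-\partial/\partial r$, but then compute $\partial_r\log r = 1/r$ as if it were the radial derivative); you flag this yourself and the final identity comes out with the correct sign, so it is a notational slip rather than a gap.
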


\begin{proof}
  Pick the metric disk~$B_{\varepsilon}(z_0)$ with radius~$\varepsilon>0$ at~$z_0$. Then Green's formula gives
  \begin{align*}
    \int_{\Sigma\setminus B_{\varepsilon}(z_0)}^{}\big[\log r (\Delta_g \psi)-\psi (\Delta_g \log r)\big]\,\dd V_g&= -\int_{\partial B_{\varepsilon}}^{} \big[\log r (\partial_r \psi)-\psi (\partial_r \log r)\big]\,\dd \ell_g\\
    &=\mathcal{O}(\varepsilon\log \varepsilon)+\psi(z_0)\frac{1}{\varepsilon}2\pi \varepsilon +\mathcal{O}(\varepsilon),
  \end{align*}
  as~$\varepsilon\to 0$, since~$\psi$ is smooth and (\ref{eqn-asymp-peri-circle}).
\end{proof}

\begin{def7}
  More precisely we have the equality of currents~$\Delta_g \log r\,\dd V_g=\Delta_g \log r|_{U\setminus z_0}\dd V_g+2\pi \delta_{z_0}$ on~$U$.
\end{def7}

\begin{lemm}\label{lemm-bound-lap-log-ratio}
  Fix~$r(z):=d_g(z,z_0)$ and now let~$g_1=\me^{2h}g$,~$h\in C^{\infty}(\Sigma)$, and~$r_1(z):=d_{g_1}(z,z_0)$. Then
  \begin{equation}
    \Delta_{g_1}\log r -\Delta_{g_1}\log r_1 \in L^1(U,g_1).
    \label{}
  \end{equation}
  Namely, as a distribution, it agrees with an integrable function when paired with~$\dd V_{g_1}$.
\end{lemm}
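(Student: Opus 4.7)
The plan is to show that the distributional Laplacians $\Delta_{g_1}\log r$ and $\Delta_{g_1}\log r_1$, viewed as measures against $\dd V_{g_1}$, have the \emph{same} singular part $2\pi\delta_{z_0}$ at $z_0$, so that the difference is purely an $L^1$ density. First, I would record the \emph{conformal transfer identity}: for any $f\in L^1_{\mathrm{loc}}(U)$ and any test function $\psi\in C_c^{\infty}(U)$, the relations $\Delta_{g_1}=\me^{-2h}\Delta_g$ and $\dd V_{g_1}=\me^{2h}\dd V_g$ give
\[
\int_U f\cdot \Delta_{g_1}\psi\,\dd V_{g_1}=\int_U f\cdot \Delta_g\psi\,\dd V_g,
\]
so that the currents $\Delta_{g_1}f\cdot \dd V_{g_1}$ and $\Delta_g f\cdot \dd V_g$ coincide on $U$.

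Next, I would apply lemma \ref{lemm-app-poincare-lel} to the pair $(g,r)$, which gives
\[
\Delta_g\log r\cdot \dd V_g = (\Delta_g\log r)\big|_{U\setminus z_0}\,\dd V_g + 2\pi\delta_{z_0},
\]
and the pointwise Laplacian on the right is bounded on $U\setminus z_0$ by lemma \ref{lemm-bounded-lap-log}. The conformal transfer above says that this current equals $\Delta_{g_1}\log r\cdot \dd V_{g_1}$, rewritten with density $\me^{-2h}\Delta_g\log r$ against $\dd V_{g_1}$. Separately, applying lemma \ref{lemm-app-poincare-lel} directly to the metric $g_1$ and its own distance function $r_1$ yields
\[
\Delta_{g_1}\log r_1\cdot \dd V_{g_1} = (\Delta_{g_1}\log r_1)\big|_{U\setminus z_0}\,\dd V_{g_1} + 2\pi\delta_{z_0},
\]
with the density bounded on $U\setminus z_0$ by lemma \ref{lemm-bounded-lap-log} applied with $g$ replaced by $g_1$ (both metrics are smooth on $U\setminus z_0$).

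Finally, I would subtract the two identities: the Dirac masses cancel exactly, leaving
\[
\bigl(\Delta_{g_1}\log r - \Delta_{g_1}\log r_1\bigr)\cdot \dd V_{g_1} = \bigl[\me^{-2h}\Delta_g\log r - \Delta_{g_1}\log r_1\bigr]\big|_{U\setminus z_0}\,\dd V_{g_1},
\]
which has a bounded density. Since $U$ is precompact, this density lies in $L^1(U,g_1)$, proving the claim (in fact giving a boundedness statement stronger than $L^1$). The only delicate point is justifying the conformal transfer identity at the distributional level for the non-smooth function $\log r$; this is clean because $\log r\in L^1_{\mathrm{loc}}$ and the scalar factors $\me^{\pm 2h}$ from $\Delta$ and $\dd V$ cancel multiplicatively without producing any derivative of $h$. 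I do not foresee a substantive obstacle beyond keeping careful track of the measure-versus-density bookkeeping in the two Poincar\'e-Lelong expansions.
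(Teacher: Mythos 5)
Your proposal is correct and follows essentially the same route as the paper's proof: the conformal invariance of the current $\Delta_{\bullet}f\,\dd V_{\bullet}$ (from (\ref{eqn-scale-vol}) and (\ref{eqn-scale-lap})) combined with the Poincar\'e--Lelong decomposition for each metric shows the two Dirac masses at $z_0$ cancel, leaving a bounded (hence $L^1$, even $L^\infty$) regular part. The paper's argument is just a more compressed version of the same bookkeeping.
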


\begin{proof}
  The point is that the current~$\Delta_{g_1}\log r\,\dd V_{g_1}$ agrees with~$\Delta_g \log r\,\dd V_g$, because of their definitions and conformal covariances (\ref{eqn-scale-vol}), (\ref{eqn-scale-lap}). Therefore the delta functions cancel out exactly and we are left with the regular parts~$\Delta_{g_1}\log r|_{U\setminus z_0} -\Delta_{g_1}\log r_1|_{U\setminus z_0}$ which is in fact~$L^{\infty}(U,g_1)$.
\end{proof}

\newpage

\end{document}